\documentclass[11pt, a4paper]{article}
\usepackage[T1]{fontenc}
\usepackage[utf8]{inputenc}
\usepackage{newpxtext}
\usepackage[sc,osf]{mathpazo}
\usepackage{a4wide}
\usepackage{caption}
\usepackage{amsmath}
\allowdisplaybreaks
\usepackage{amsmath, amssymb, bbm, bm, amsfonts, amsthm, mathrsfs, mathtools}
\usepackage[dvipsnames]{xcolor}
\definecolor{oxf}{RGB}{0, 33, 71}
\definecolor{newblue}{HTML}{3A59D1}
\definecolor{newred}{HTML}{FF3737}
\usepackage{tikz-cd}
\usepackage{tikz}
 \usetikzlibrary{arrows.meta}
\usepackage{array}
\usepackage[colorlinks=true, 
            linkcolor=oxf, 
            citecolor=oxf, 
            urlcolor=oxf]{hyperref}
\usepackage{cleveref}

\usepackage[numbers,sort&compress]{natbib}

\newtheoremstyle{mystyle}
  {8pt}{8pt}{}{}{\bfseries}{.}{.5em}{}
\theoremstyle{mystyle}
\newtheorem{theorem}{Theorem}[section]
\newtheorem{result}{Result}
\newtheorem{question}{Question}
\newtheorem{assumption}{Assumption}
\newtheorem*{theorem*}{Theorem}

\newtheorem{definition}{Definition}[section]
\newtheorem*{definition*}{Definition}

\newtheorem{proposition}{Proposition}[section]
\newtheorem*{proposition*}{Proposition}
\newtheorem{corollary}{Corollary}[section]
\newtheorem{construction}{Construction}

\newtheorem{example}{Example}[section]
\newtheorem{remark}{Remark}[section]

\crefname{assumption}{assumption}{assumptions}
\Crefname{assumption}{Assumption}{Assumptions}
\usepackage{enumitem}
\setlist[itemize]{itemsep=0.4em, topsep=0.4em, parsep=2pt}
\setlist[enumerate]{itemsep=0.4em, topsep=0.4em, parsep=2pt}
\usetikzlibrary{decorations.pathreplacing,calligraphy}
\usetikzlibrary{arrows.meta,decorations.markings}
\usepackage{float}
\tikzset{->-/.style={decoration={
  markings,
  mark=at position .5 with {\arrow[xshift=0.5ex]{>}}},postaction={decorate}}}
    \definecolor{myblue}{RGB}{29, 66, 166}
    \definecolor{myorange}{RGB}{230,120,20}
\newcommand{\arrow}[3]{
  \begin{scope}[shift={#1}, rotate=#3]
    \fill
      (0,0)
      -- ({-0.5*#2},{-#2})
      -- ({0.5*#2},{-#2})
      -- cycle;
  \end{scope}
}
\definecolor{zima}{RGB}{91, 194, 231}

\renewcommand{\>}{\rangle}

\newcommand{\Rep}{\mathbf{Rep}}
\newcommand{\To}{\Rightarrow}

\newcommand{\Hom}{\text{Hom}}
\newcommand{\vc}{\mathbf{Vec}}

\newcommand{\id}{\text{id}}

\usepackage{euscript}

\newcommand{\bb}{\mathbb}
\newcommand{\bbm}{\mathbbm}
\renewcommand{\sf}{\mathsf}
\newcommand{\cc}[1]{\mathcal{#1}}

\newcommand{\cZ}{\cc{Z}}
\newcommand{\bC}{\mathbb{C}}

\newcommand{\Z}{\mathbb{Z}}

\newcommand{\C}{\cc{C}}
\newcommand{\D}{\cc{D}}
\newcommand{\M}{\cc{M}}

\newcommand{\End}{\mathrm{End}}

\newcommand{\ut}{\bbm{1}}

\newcommand{\TY}{\mathrm{TY}}
\newcommand{\hide}[1]{}
\renewcommand{\H}{\cc{H}}
\newcommand{\ldim}{\mathrm{ldim}}
\newcommand{\rdim}{\mathrm{rdim}}
\newcommand{\qdim}{\mathrm{qdim_{lat}}}
\newcommand{\ind}{\mathrm{ind}}
\newcommand{\irr}{\mathrm{Irr}}

\usepackage{amssymb}
\usepackage{amsfonts}
\usepackage{epstopdf}
\usepackage{dcolumn}
\usepackage{amsmath}
\usepackage{latexsym,bm}
\usepackage{amsthm}
\usepackage{slashed}
\usepackage{float}
\usepackage{url}
\usepackage{rotating}
\usepackage[normalem]{ulem}
\usepackage{faktor}
\usepackage{tikz-cd}
\usepackage{tensor}
\usepackage{array}
\usepackage{cleveref}

\usepackage{tabularx}
\usepackage{makecell}

\numberwithin{equation}{section}

\newcommand{\be}{\begin{equation}}
\newcommand{\ee}{\end{equation}}
\newcommand{\ba}{\begin{aligned}}
\newcommand{\ea}{\end{aligned}}

\newcommand{\Dlat}{\mathsf{D}}
 \definecolor{navy}{RGB}{29, 66, 166}

\newcommand{\cC}{\mathcal{C}}

\newcommand{\cH}{\mathcal{H}}

\renewcommand{\Vec}{\mathsf{Vec}}

\usepackage[status=draft,inline,nomargin]{fixme}
\fxusetheme{color}
\FXRegisterAuthor{ki}{aki}{KI}
\begin{document}
\title{
Non-Invertible Symmetries on Tensor-Product Hilbert Spaces  and  Quantum Cellular Automata
 }
\author{%
Rui Wen$^{1}$,
Kansei Inamura$^{1,2}$,
Sakura Sch\"afer-Nameki$^{1}$\\[0.4em]
{\small $^{1}$Mathematical Institute, University of Oxford}\\
{\small Andrew Wiles Building, Woodstock Road, Oxford, OX2 6GG, UK}\\
{\small $^{2}$Rudolf Peierls Centre for Theoretical Physics, University of Oxford}\\
{\small Parks Road, Oxford, OX1 3PU, UK}
}
\maketitle
\begin{abstract}
\noindent
We investigate realizations of $(1+1)$-dimensional fusion category symmetries on tensor-product Hilbert spaces, allowing for mixing with quantum cellular automata (QCAs). It was argued recently that any such realizable symmetry must be weakly integral. We develop a systematic analysis of QCA-refined realizations of fusion categories and prove two statements. First, we show that, under certain physical assumptions on defects, any QCA-refined realization has QCA and symmetry-operator indices determined by the categorical data, 
up to the freedom of redefining the symmetry operators. Second, we construct  a lattice model that provides a QCA-refined  realization for any weakly integral fusion category symmetry on a tensor product Hilbert space. We also compute indices of the QCAs in our lattice model and show agreement with the first result. As an application of the general construction, we give an explicit QCA-refined realization of general Tambara-Yamagami categorical symmetries.
\end{abstract}

\tableofcontents

\section{Introduction and Summary}

The modern understanding of symmetries in quantum
many-body systems and quantum field theory has expanded well beyond the traditional setting of groups. In the continuum, it is expected that a finite generalized internal symmetry in spacetime dimension $(1+1)$D is described by a unitary fusion category \cite{Frohlich:2006ch, Bhardwaj:2017xup}. The extension to higher-dimensions, to so-called non-invertible or higher-fusion categories (see \cite{Schafer-Nameki:2023jdn, Shao:2023gho, Bhardwaj:2023kri} for reviews), triggered a renewed interest in such symmetries in 1+1D. 
This has led to the construction of new gapped phases and phase-transitions with fusion categorical symmetries using various continuum field theory and categorical methods
\cite{Thorngren:2019iar, Chang:2018iay, Huang2021TFTNonInvertible,Inamura2022LatticeFusionCategory, 
KLWZZ20a,KLWZZ20b,JiWen21,ChatWen23,ChatWen22,CJW25,BBPS24Landau,BBPS25Gapped, Bhardwaj:2024wlr,Warman:2024lir, WYP24, HC25,WP25,BBPS25Club,BPSW25,Bottini:2025hri,PR26,Huang25QC,BIT25}. 

On the other hand lattice model realizations of categorical symmetries are more subtle. Although applicable to any fusion category, the anyon chain in $(1+1)$D realizes these symmetries at the cost of not necessarily having a tensor-product Hilbert space structure 
\cite{Feiguin:2006ydp,
Trebst:2008fibonacci,
Gils:2008collective,
Gils:2009topology,
Pfeifer:2010translation,
Ardonne:2010yanglee,
Gils:2013spin1, 
Aasen:2016dop,
Buican:2017anyonic, 
Sinha:2023hum, 
Bhardwaj:2024kvy}. A related approach using MPOs/tensor networks also requires relinquishing the tensor product structure on the Hilbert space \cite{Lootens:2021tet, Molnar:2022nmh}.  
A natural question is then which categorical symmetries can be realized on a tensor-product Hilbert space\footnote{We always assume the on-site Hilbert space is finite dimensional.}. There are two separate cases to consider. First is when the symmetry is realized strictly on the Hilbert space. This means the symmetry operators form a representation of the fusion rules. It has been proven recently in \cite{Evans:2025msy} that a fusion categorical symmetry can be realized strictly on a tensor-product Hilbert space if and only if it is \textbf{integral}. A fusion category is called integral if all objects have integral quantum dimension. 
The second case, which is the focus of this paper, is when the symmetry is realized by mixing with quantum cellular automata (QCAs). In this paper, QCAs mean unitary operators that map local operators to nearby local operators. Prototypical examples of QCAs include lattice translations and finite-depth quantum circuits. See \cite{Farrelly:2019zds, Arrighi:2019uor} for reviews on this subject.

It has been observed in many models that it is possible to realize non-integral fusion categorical symmetries on a tensor-product Hilbert space if we allow mixing with QCAs \cite{Seiberg:2024gek, Seiberg:2023cdc, Lu:2026rhb,Cao:2024qjj,Seifnashri:2024dsd,Zhang:2020pco}. A basic example is the Ising fusion category. Specifically, on a spin chain with on-site Hilbert space dimension 2, it is possible to construct symmetry operators $\mathsf{D}_f, \mathsf{D}_\sigma$ that satisfy the following product rule \cite{Seiberg:2023cdc, Seiberg:2024gek}
\begin{equation}
    \mathsf{D}_f\mathsf{D}_f=1,~ \mathsf{D}_f\mathsf{D}_\sigma=\mathsf{D}_\sigma \mathsf{D}_f=\mathsf{D}_\sigma,~ \mathsf{D}_\sigma \mathsf{D}_\sigma=T(1+\mathsf{D}_f)~\label{eq_Ising_lattice},
\end{equation}
where $T$ is lattice translation by a site. We recover the Ising category fusion rules by setting $T$ to identity. In the IR of any system with the above symmetry, it is expected that lattice translation will flow to the trivial operator and the symmetry reduces to a $\Z_2$-Tambara Yamagami category (i.e.  $\TY(\Z_2, +)$ or $\TY(\Z_2, -)$ \cite{Seiberg:2024gek}). 
The appearance of lattice translation in the product rule~\Cref{eq_Ising_lattice} is what we mean by mixing with QCAs. The lattice translation operator $T$ is a non-trivial QCA.  Other examples of this type were studied in~\cite{Lu:2026rhb}. In general, a fusion category $\C$ is realized on a tensor-product Hilbert space up to QCAs if the symmetry operators $\mathsf{D}_x, x\in \C$ satisfy a modified product rule
\begin{equation}\label{eq_fusion_lattice}
    \mathsf{D}_x\mathsf{D}_y=\sum_{z\in \C} U^{z}_{xy} \mathsf{D}_z \,,
\end{equation}
where $U^z_{xy}$ are QCAs, such that the fusion rules of $\C$ is recovered by setting all QCAs to identity. We will call a realization of a fusion categorical symmetry in the form of \Cref{eq_fusion_lattice} a \textbf{QCA-refined realization}, and \Cref{eq_fusion_lattice}  the \textbf{QCA-refined fusion rules}.

The weak integrality conjecture \cite{Lu:2026rhb, Inamura:2026hif,tantivasadakarn2025noninvertible} states that a fusion categorical symmetry admitting a QCA-refined realization on a tensor-product Hilbert space must be \textbf{weakly integral}. A fusion category is called weakly integral if its total quantum dimension is an integer, or equivalently, if the square of the quantum dimension of any simple object is an integer. Evidence for this conjecture is emerging, including a recent conditional proof in~\cite{Inamura:2026hif} and a proof for the  case of duality operators in~\cite{Jones:2026dcb}. See also~\cite{seifnashri2026noninvertibleqca_talk} for a related discussion.

In this paper we are concerned with the following two questions.
\begin{question}
     Assuming a fusion categorical symmetry $\C$ is realized on a tensor-product Hilbert space as in~\Cref{eq_fusion_lattice}, what are the constraints on the QCAs $U^z_{xy}$ imposed by the categorical data of $\C$?
\end{question}
\begin{question}
    The converse to the weak integrality conjecture: Is it true that every weakly integral fusion category admits a QCA-refined realization on a tensor-product Hilbert space?
\end{question}
Our main results are complete answers to these two questions.

\begin{result}\label{result_1}
    Under certain physical assumptions about topological defects of symmetries, we prove the following: If a fusion categorical symmetry $\C$ admits a QCA-refined realization as in~\Cref{eq_fusion_lattice}, then the Gross-Nesme-Vogts-Werner (GNVW) indices \cite{Gross:2011yvb} of the QCAs $U^z_{xy}$ are uniquely determined by $\C$ up to redefinition of symmetry operators.
\end{result} 
The assumptions are detailed in \Cref{sec_asummptions}.
Here, by redefinition of symmetry operators, we allow for adding ancilla and redefining $\sf{D}_x\mapsto \alpha_x\sf{D}_x $ where $\alpha_x$ is a QCA. Two QCA-refined realizations of the same fusion category related by such a redefinition should be considered as being equivalent, similar to equivalence between projective representations of groups. 
\begin{result}\label{result_2}
We construct a lattice model that realizes any weakly integral fusion categorical symmetry on a tensor-product Hilbert space, up to QCAs. Therefore we have the following. 
\begin{theorem}
    Any weakly integral fusion categorical symmetry admits a QCA-refined realization on a tensor-product Hilbert space.
\end{theorem}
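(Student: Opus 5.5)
The plan is a concrete lattice construction built from the standard structure theory of weakly integral fusion categories. We use a theorem of Etingof--Nikshych--Ostrik: if $\C$ is weakly integral, its universal grading by $U(\C)$ has trivial component $\C_{\mathrm{ad}}$, which is integral. More precisely, there is a faithful grading $\C=\bigoplus_{g\in G}\C_g$ by an elementary abelian $2$-group $G$ (or a subgroup of $\mathbb{Z}_2^k$-type data) with integral trivial component $\C_e$. Moreover, every $x\in\C_g$ has $\dim(x)^2\in\mathbb{Z}$ and $\dim(x)\in\sqrt{n_g}\,\mathbb{Z}$ for a squarefree integer $n_g$ depending only on $g$, with $g\mapsto n_g$ multiplicative up to squares.

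The strategy is to realize the integral part strictly, using the result of \cite{Evans:2025msy} (or its construction), and then to absorb the factors $\sqrt{n_g}$ into GNVW indices of translation-like QCAs. The concrete model I would build is the anyon-chain/MPO model associated with $\C$, acting on a fusion-tree Hilbert space. I would then embed this space into a tensor-product Hilbert space using the regular module of $\C$, or the $\C$-module category $\Vec$ realized via a fiber functor of $\C_e$.

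When $\C$ is integral, it is categorically Morita equivalent to a category with a fiber functor after passing to suitable algebras. In that case the lattice model of \cite{Evans:2025msy} gives strict symmetry operators. For the general weakly integral case, the natural route is this. Consider the Drinfeld center $\cZ(\C)$, or equivalently choose a Morita-equivalent category. The key observation I would exploit is that the quantum dimensions in $\C_g$ are $\sqrt{n_g}$ times integers. Symmetry operators $\sf{D}_x$ on a chain with on-site spaces of dimensions built from $n_g$ can then act as a strict integral-type operator composed with a "fractional shift" QCA of GNVW index $\sqrt{n_g}$. Such a shift exists, for example by translating an $n_g$-dimensional tensor factor by half a unit cell, realized as a shift by one site on a doubled chain. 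The product $\sf{D}_x\sf{D}_y$ then produces $\sum_z N^z_{xy}U^z_{xy}\sf{D}_z$, where $U^z_{xy}$ are products of these shifts. The index bookkeeping works because $\sqrt{n_g}\sqrt{n_h}=\sqrt{n_{gh}}\cdot m$ with $m$ rational, so a QCA of index $m$ compensates.

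The concrete steps are as follows:
\begin{enumerate}
\item Fix the grading and the dimensions $n_g$.
\item Construct the tensor-product Hilbert space as a chain whose sites carry $\bigotimes_p \bC^{p}$ factors, one for each prime $p$ dividing some $n_g$.
\item Define $\sf{D}_x$ as an MPO built from the $F$-symbols of $\C$, weighted suitably, composed with the partial translation $\tau_g$ that shifts the $p$-factors for $p\mid n_g$.
\item Verify the QCA-refined fusion rule \Cref{eq_fusion_lattice} by a zipper/pulling-through computation using the pentagon equation.
\item Check locality of each $U^z_{xy}$ and compute its GNVW index, expecting $\sqrt{n_g n_h}/\sqrt{n_{gh}}$ up to rational factors. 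This gives consistency with \Cref{result_1}.
\end{enumerate}

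The main obstacle is the third step. We must make the MPO symmetry operator act genuinely on a tensor-product space while its bond dimension matches the non-integral $\dim(x)$. In other words, we need a module category over $\C$ whose simple objects, after tensoring with the partial-shift factors, give integer-dimensional local spaces.

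My first attempt would be to pass to the $G$-equivariantization or de-equivariantization. Since $\C$ weakly integral implies $\C\boxtimes\overline{\C}$-type constructions, or a $G$-graded extension of the integral $\C_e$, the standard $G$-extension data $(c,M,\alpha)$ lets us reduce the problem to the strict realization for $\C_e$, which exists by integrality. Each invertible $\C_e$-bimodule $\C_g$ is then implemented by an MPO intertwiner composed with a shift of index $\sqrt{n_g}$. As a fallback, I would check the Tambara--Yamagami case explicitly to calibrate the construction.
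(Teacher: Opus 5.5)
There is a genuine gap, and it sits where you yourself put the ``main obstacle''. You never construct a tensor-product Hilbert space on which $\mathsf{D}_x$, for $x\in\mathcal{C}_g$ with $g\neq 0$, actually acts. The device you propose for getting around this cannot exist. A QCA on a tensor-product chain always has rational GNVW index, a product of integer powers of primes. So there is no ``fractional shift'' QCA of index $\sqrt{n_g}$ for non-square $n_g$. Shifting an $n_g$-dimensional factor by one site of any lattice, doubled or not, has index $n_g^{\pm1}$, and ``half a unit cell'' is only a relabeling. The same objection applies to your fallback, where each $\mathcal{C}_0$-bimodule $\mathcal{C}_g$ is implemented by ``an MPO intertwiner composed with a shift of index $\sqrt{n_g}$''.

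Some auxiliary claims are also wrong. $\mathsf{Vec}$ is not a $\mathcal{C}$-module category when $\mathcal{C}$ is non-integral: a rank-one module category is a fiber functor, which forces $d_x\in\mathbb{Z}$. Integral fusion categories need not be Morita equivalent to one with a fiber functor. A counterexample is $\mathsf{Vec}_{\mathbb{Z}_2}^{\omega}$ with $\omega$ nontrivial, whose only indecomposable module category is the regular one. Finally, step 2 never says how the $\bigotimes_p\mathbb{C}^p$ factors interface with the fusion-tree degrees of freedom your $F$-symbol MPO acts on.

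The missing idea is that the irrational part of $d_x$ should be carried by a locality-preserving isomorphism between two \emph{different} tensor-product spaces, not by a QCA on a single one. The paper takes the anyon chain with input $(\mathcal{C},\mathcal{C},R_0)$, where $R_0=\bigoplus_{x\in\mathrm{Irr}(\mathcal{C}_0)}d_x x$. For $x\in\mathcal{C}_g$ and $y\in\mathcal{C}_h$ one has $\dim\mathrm{Hom}(x\otimes R_0,y)=\delta_{g,h}a_xa_yn_g$. Hence the Hilbert space splits as $\bigoplus_g\mathcal{H}_g$, and every sector is a tensor product with the same on-site dimension $\dim(\mathcal{C}_0)$. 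The ordinary fusion action $\mathcal{D}_x$ then maps $\mathcal{H}_h\to\mathcal{H}_{gh}$.

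To return to $\mathcal{H}_0$, the paper uses a sequential circuit $U_g:\mathcal{H}_g\to\mathcal{H}_0$ acting from \emph{below} the chain. This is an ``anyon translation'' built from unitary isomorphisms: $R_0\cong R_g\otimes x_g\cong x_g\otimes R_g$ when some $x_g\in\mathcal{C}_g$ has $a_{x_g}=1$, or $\dim(\mathcal{C}_0)R_0\cong n_gR_g\otimes R_g\cong R_g\otimes n_gR_g$ in general. One then sets $\mathsf{D}_x=U_g\circ\mathcal{D}_x|_{\mathcal{H}_0}$.

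Because actions from above and below the chain commute, $\mathsf{D}_x\mathsf{D}_y=U_gU_hU_{gh}^{-1}\sum_zN^z_{xy}\mathsf{D}_z$ follows directly from the anyon-chain fusion rules, with no zipper or pentagon computation. The refinement $U_gU_hU_{gh}^{-1}$ is an honest QCA on $\mathcal{H}_0$ with rational index $\dim(R_{gh})/(\dim(R_g)\dim(R_h))$. The factor $\sqrt{n_g}$ sits only in the non-invertible $\mathsf{D}_x$, whose index in the general construction is $\dim(R_g)^{-1}$.
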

     
\end{result}

We will compute the GNVW indices of the QCAs $U^z_{xy}$ in our lattice model and confirm that they agree with the prediction of our \Cref{result_1}. We will also compute the generalized GNVW indices for the symmetry operators $\sf{D}_x$ following the definition of the generalized GNVW index in \cite{Inamura:2026hif}. These calculations provide a consistency check on the physical assumptions needed for \Cref{result_1}. Applying the general construction, we obtain an explicit tensor-product realization of general Tambara-Yamagami fusion categories up to translation, which has not been achieved in existing literature.

We stress that the weak integrality conjecture, in its full general form, remains open. The proof given in \cite{Inamura:2026hif}, which we will review in \Cref{sec_Necessity}, relies on certain physically well-motivated assumptions that should ultimately be derived from first principles. One possible route towards a proof is to justify the physical assumptions, starting with a proper definition of the symmetry operator. It may be that the upcoming work \cite{seifnashri2026noninvertibleqca} will elucidate this point. This combined with the results of the present paper could then provide us an "if and only if" statement about realizability of non-invertible symmetries on tensor-product Hilbert spaces up to QCAs. 




\paragraph*{Plan of the Paper.}
 In \Cref{sec_result_1} we establish our \Cref{result_1}. We start in \Cref{sec_asummptions} by reviewing the physical assumptions about defect Hilbert spaces and indices of non-invertible symmetries, following \cite{Inamura:2026hif}. In \Cref{sec_Necessity} we review the conditional proof of the weak integrality conjecture given in \cite{Inamura:2026hif}. We will then discuss the structure of weakly integral fusion categories in \Cref{sec_weakly_integral} and prove our \Cref{result_1} in \Cref{sec_result_1_proof}. In \Cref{sec_result_2} we establish our \Cref{result_2}: We give a lattice model that realizes any weakly integral fusion category up to QCAs. In \Cref{sec_index_computation} we calculate the indices of symmetry operators and QCAs in our lattice model and show that they agree with the prediction of~\Cref{result_1}. In \Cref{sec_relation_to_gen_translation} we discuss relation to the QCA index defined in \cite{Jones:2023imy} via operator algebras. In \Cref{sec_eg_TY} we apply our general construction to yield an explicit QCA-refined tensor-product realization of general Tambara-Yamagami categories. An appendix contains some fusion diagram conventions Appendix \ref{app:Conventions}.

\section{Constraints on Indices in QCA-Refined Realization}\label{sec_result_1} 
In this section we establish our~\Cref{result_1}. We begin with a review of physical assumptions about defect Hilbert spaces in~\Cref{sec_asummptions}. In~\Cref{sec_Necessity}, we review the physical proof of the  weak integrality conjecture given in~\cite{Inamura:2026hif}. In~\Cref{sec_weakly_integral}, we will discuss the mathematical structure of weakly integral fusion categories and give some examples. We complete the proof of our~\Cref{result_1} in~\Cref{sec_result_1_proof}. We will give some examples of QCA-refined fusion rules in \Cref{sec_eg_refined} following our \Cref{result_1}.

\subsection{Physical Assumptions and Consequences}\label{sec_asummptions}
We now review the physical assumptions about topological defects of symmetry operators that are needed for our \Cref{result_1}. Although we do not prove these, the
\Crefrange{ass_1}{ass_5} are satisfied by common realization of well-studied symmetries, including group-theoretical symmetries, non-anomalous symmetries, and Kramers-Wannier duality symmetry \cite{Inamura:2026hif}.
We consider a spin chain with tensor-product Hilbert space
\begin{equation}
    \cc{H}=\bigotimes_{i=1}^L h^{(i)}\,,
\end{equation}
where $L$ is the length of the chain and $h^{(i)}$ is the local Hilbert space on site $i$. We assume that $h^{(i)}$ is finite dimensional and independent of $i$ up to isomorphism, that is, $h^{(i)} \simeq h$ for some finite dimensional Hilbert space $h\simeq \bC^d$. Then $d$ is called the on-site dimension. Let $\{\mathsf{D}_x\}$ be a collection of possibly non-invertible symmetry operators acting on $\cc{H}$.

\begin{assumption}\label{ass_1}
    For every symmetry operator $\mathsf{D}_x$, we assume that there are corresponding left and right defect Hilbert spaces $\cc{H}_x^l$ and $\cc{H}_x^r$, which are obtained by modifying the original Hilbert space $\cc{H}$ in a finite region: 
\begin{equation}
    \cc{H}_{\mathsf{D}_x}^l= \left(\bigotimes_{i\in \{1,\cdots ,L\}-I_x}h^{(i)} \right)\otimes L_x,\qquad
    \cc{H}_{\mathsf{D}_x}^r=\left(\bigotimes_{i\in \{1,\cdots ,L\}-J_x}h^{(i)} \right) \otimes R_x.
\end{equation}
Here, $I_x$ and $J_x$ are finite intervals and $L_x$ and $R_x$ are finite dimensional Hilbert spaces, all of which are independent of the system size $L$. In other words, the defect Hilbert spaces are obtained by replacing the Hilbert space on a finite region by a new Hilbert space. Physically, the left (right) defect Hilbert space is the Hilbert space in the presence of a defect $x$ oriented upwards (downwards). We assume that the defect $x$ can be moved by local unitary
operators.\footnote{A defect being oriented upwards (downwards) means that moving the defect to the right (left) around the chain implements the symmetry action of the corresponding operator $\mathsf{D}_x$. More precisely, to obtain the full symmetry action, we need to compose the unitary movement operators with the defect creation and annihilation operators, which make the symmetry operator non-invertible in general \cite{Tantivasadakarn:2025txn}. \label{fn: defect orientation}} In particular, the defect Hilbert spaces do not depend on the position of a defect up to
isomorphism.
\end{assumption}

\begin{assumption}\label{ass_2}
    If $\mathsf{D}_x$ and $\mathsf{D}_y$ are symmetry operators, then so is $\mathsf{D}_x\mathsf{D}_y$. The defect Hilbert spaces for $\mathsf{D}_x\mathsf{D}_y$ are isomorphic to the Hilbert spaces in the presence of two defects $x$ and $y$ apart from each other. More concretely, we assume that there are isomorphisms
    \begin{equation}
    \cc{H}_{\mathsf{D}_x \mathsf{D}_y}^l \simeq \left(\bigotimes_{i\in \{1,\cdots ,L\}-I_x-I_y}h^{(i)}\right)\otimes L_x \otimes L_y, \quad
    \cc{H}_{\mathsf{D}_x \mathsf{D}_y}^r \simeq \left(\bigotimes_{i\in \{1,\cdots ,L\}-J_x-J_y}h^{(i)}\right)\otimes R_x \otimes R_y,
    \end{equation}
    where the overlaps $I_x \cap I_y$ and $J_x \cap J_y$ are empty.

\end{assumption} 
\begin{assumption}\label{ass_3}
    If $\mathsf{D}_x$ and $\mathsf{D}_y$ are symmetry operators, then so is $\mathsf{D}_x+\mathsf{D}_y$. We assume that there are isomorphisms 
\begin{equation}
    \cc{H}^{l}_{\mathsf{D}_x+\mathsf{D}_y}\simeq \cc{H}^{l}_{\mathsf{D}_x}\oplus \cc{H}^{l}_{\mathsf{D}_y}, \qquad
    \cc{H}^{r}_{\mathsf{D}_x+\mathsf{D}_y}\simeq \cc{H}^{r}_{\mathsf{D}_x}\oplus \cc{H}^{r}_{\mathsf{D}_y}.
\end{equation}
\end{assumption} 

\begin{assumption}\label{ass_4}
    If $\mathsf{D}_x$ is a symmetry operator, then so is its Hermitian conjugate $\mathsf{D}_x^\dagger$. We assume that there are isomorphisms 
\begin{equation}
    \cc{H}^l_{\mathsf{D}_x^\dagger}\simeq \cc{H}^r_{\mathsf{D}_x},\qquad
    \cc{H}^r_{\mathsf{D}_x^\dagger}\simeq \cc{H}_{\mathsf{D}_x}^l.
\end{equation}
\end{assumption} 

For each symmetry operator $\mathsf{D}_x$, we define its left and right dimensions by
\begin{equation}
    \ldim(\mathsf{D}_x) \coloneq \frac{\dim(\H_{\mathsf{D}_x}^l)}{\dim(\H)}, \qquad
    \rdim(\mathsf{D}_x) \coloneq \frac{\dim(\H_{\mathsf{D}_x}^r)}{\dim(\H)}.
\end{equation}
Our~\Cref{ass_2} and~\Cref{ass_3} lead to the relations 
\begin{equation}
\ldim(\mathsf{D}_x\mathsf{D}_y)=\ldim(\mathsf{D}_x)\ldim(\mathsf{D}_y),\qquad
\ldim(\mathsf{D}_x+\mathsf{D}_y)=\ldim(\mathsf{D}_x)+\ldim(\mathsf{D}_y),
\end{equation}
and similarly for $\rdim$. This implies that $\ldim$ and $\rdim$ define $\bb{Q}^{>0}$-representations of the algebra of symmetry operators. If $U$ is an invertible symmetry operator, then we have $UU^\dagger=1$, which combined with~\Cref{ass_4} gives  
\begin{equation}
    \ldim(U)\rdim(U)=1.
\end{equation}
For example, if $U$ is a lattice translation by one site to the right, then $\ldim(U)=\dim(h)$ and $\rdim(U)=\dim(h)^{-1}$ \cite{Seifnashri:2023dpa, Seiberg:2024gek}.

    Next we define the lattice quantum dimension $\qdim$ and the index $\ind$ of symmetry operators by 
\begin{equation}
    \qdim(\mathsf{D}_x) \coloneq \sqrt{\ldim(\mathsf{D}_x)\rdim(\mathsf{D}_x)},\qquad
    \ind(\mathsf{D}_x) \coloneq \sqrt{\frac{\ldim(\mathsf{D}_x)}{\rdim(\mathsf{D}_x)}}.
\end{equation}
Due to the multiplicativity  of the left and right dimensions, the lattice quantum dimension and index are also multiplicative: 
\begin{align}
    \qdim(\mathsf{D}_x\mathsf{D}_y)&=\qdim(\mathsf{D}_x)\qdim(\mathsf{D}_y), \\
    \ind(\mathsf{D}_x\mathsf{D}_y)&=\ind(\mathsf{D}_x)\ind(\mathsf{D}_y).
\end{align}
For an invertible symmetry operator $U$, we have $\qdim(U)=1$ and $\ind(U)=\ldim(U)$. For example, for a lattice translation by one site $T$, we have $\ind(T)=\dim(h)$.  For a general QCA, the index defined above agrees with the Gross-Nesme-Vogts-Werner (GNVW) index defined in \cite{Gross:2011yvb}, so our index can be thought of as a generalization of the GNVW index to non-invertible symmetries \cite{Inamura:2026hif}.

Finally, we have the following assumption about the index. 

\begin{assumption}\label{ass_5}
    [Homogeneity of the index] Let $\mathsf{D}_x$ and $\mathsf{D}_y$ be indecomposable symmetry operators that obey the fusion rules $\mathsf{D}_x\mathsf{D}_y=\sum_z \mathsf{D}_z$, where $\mathsf{D}_z$s are also indecomposable. Then, $\ind(\mathsf{D}_z)=\ind(\mathsf{D}_w)$ for all $\mathsf{D}_z$ and $\mathsf{D}_w$ that appear in the sum. 
\end{assumption}
We note here that the above condition is equivalent to that $\ind(\mathsf{D}_z)=\ind(\mathsf{D}_x\mathsf{D}_y)$ for all $\mathsf{D}_z$ that appear in the sum. The proof can be found in~\cite{Inamura:2026hif}.

\subsection{Physical Proof of the Weak Integrality Conjecture}
\label{sec_Necessity}
The above physical assumptions imply the weak integrality conjecture:  any fusion category $\C$ that is realized on a tensor-product Hilbert space $\cc{H}$ up to QCAs must be weakly integral. Here we review the proof given in~\cite{Inamura:2026hif}. We start with the following definition. 
\begin{definition}\label{def_lattice_realization}
    Let $\C$ be a unitary fusion category. A \emph{QCA-refined realization of $\C$} is 
    \begin{itemize}
        \item A tensor-product Hilbert space $\H$;
        \item A collection of symmetry operators\footnote{For our purpose the precise definition of a symmetry operator is irrelevant. Although a proper definition will be important for justifying the physical assumptions made earlier. One possible definition is as a quantum channel acting on local operators on $\H$ with bounded spread, but it is not known if the physical assumptions will follow directly from this definition. We direct the reader to \cite{Okada:2024qmk, Evans:2025msy} for more details of this formulation of symmetry operator.}  $\{\mathsf{D}_x:  x\in \irr(\C)\}$  on $\H$;
        \item Such that 
\begin{equation}
    \mathsf{D}_x\mathsf{D}_y=\sum_{z} U^z_{xy} \mathsf{D}_z\label{eq_lattice_fusion_1} 
\end{equation}
where $U^{z}_{xy}$ is a QCA and the sum  is over simple objects as well as fusion channels;
\item The fusion rules of $\C$ are recovered by setting $U^z_{xy}$ to identity in~\Cref{eq_lattice_fusion_1}.
    \end{itemize}
\end{definition}

We refer to~\Cref{eq_lattice_fusion_1} as  the \textit{QCA-refined fusion rules} of $\C$, to distinguish them from the fusion rules of $\C$. 
In $(1+1)$D. 
Then by applying $\ldim(-)$ to both sides of~\Cref{eq_lattice_fusion_1}, we obtain 
\be
\ba
&\ldim(\mathsf{D}_x)\ldim(\mathsf{D}_y)=\sum_z \ldim(U^z_{xy})\ldim(\mathsf{D}_z)\cr
    \Rightarrow& ~\qdim(\mathsf{D}_x)\qdim(\mathsf{D}_y) \ind(\mathsf{D}_x) \ind(\mathsf{D}_y)=\sum_z \qdim(\mathsf{D}_z) \ind(U^z_{xy}\mathsf{D}_z)\cr
    \Rightarrow& ~\qdim(\mathsf{D}_x) \qdim(\mathsf{D}_y)=\sum_z \qdim(\mathsf{D}_z)
\ea
\ee
In the first step we used that facts that $\ldim=\qdim\cdot \ind$ and  QCAs have trivial $\qdim$. In the second step we used the homogeneity of index. Therefore $\qdim(-)$ defines a one dimensional representation of the fusion rules. It is known that one dimensional representation of the fusion rules is unique and is given by the quantum dimension $\dim_\C(-)$ \cite{egno2015tensor}. We conclude that $\qdim(\mathsf{D}_x)=\dim_\C(x)$. By definition $\qdim(\mathsf{D}_x)$ is a square root of a rational number, which means $\dim_\C(x)^2$ is now a rational number. In any fusion category, the quantum dimension of any object is an algebraic integer \cite{egno2015tensor}. Therefore $\dim_\C(x)^2=\dim_\C(x\otimes x)$ is a rational algebraic integer. Any rational algebraic integer is an integer. We conclude $\dim_\C(x)^2$ is an integer for any simple $x\in \C$. This is exactly the condition that $\C$ is weakly integral. 
To clarify, there are two definitions of weakly integrality. One is that the total quantum dimension $\dim(\C)=\sum_{x\in \C}\dim_\C(x)^2$ is an integer, another is that $\dim_\C(x)^2$ is an integer for every simple object $x\in \C$. It turns out these two definitions are equivalent \cite{egno2015tensor}. 

\subsection{Weakly Integral Fusion Categories}\label{sec_weakly_integral}
Before we move on to prove~\Cref{result_1}, we need to review some basic facts about weakly integral fusion categories. The key structural result we need about weakly integral fusion categories is summarized by the following Proposition, which is adapted from Proposition 3.5.7 of ~\cite{egno2015tensor}. 
\begin{proposition}[\cite{egno2015tensor} Proposition 3.5.7]\label{prop_1}
    Let $\C$ be a weakly integral fusion category. Then there is a group $E\simeq \Z_2^r$ and a set of distinct square-free positive integers $n_g, g\in E$, with $n_0=1$, such that there is faithful $E$-grading $\C=\oplus_g \C_g$ and $d_x=a_x \sqrt{n_g},~\forall x\in \C_g$, with $a_x \in \Z^{>0}$. Furthermore for any $g,h\in E$, $q(g,h):=\sqrt{\frac{n_gn_h}{n_{gh}}}\in \Z^{>0}$.
\end{proposition}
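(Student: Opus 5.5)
The plan is to use the Frobenius–Perron dimension function together with the fact that in a weakly integral fusion category every $d_x^2$ is a positive integer (as recalled at the end of \Cref{sec_Necessity}). For each simple $x$, write $d_x^2=a_x^2 n_x$ with $n_x$ square-free and $a_x\in\Z^{>0}$; this decomposition is unique, so $d_x=a_x\sqrt{n_x}$. Let $\mathbb{Q}^\times_{\mathrm{sf}}\simeq\mathbb{Q}^{\times}_{>0}/(\mathbb{Q}^{\times}_{>0})^2$ denote the group of square-free positive integers, with product $n\cdot m := nm/\gcd(n,m)^2$. It is an elementary abelian $2$-group. The goal is to show that $x\mapsto n_x$ is constant on homogeneous components of a grading, and that it defines a grading valued in this group.

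First I will show compatibility with fusion. Take $z$ with $N_{xy}^z\neq 0$ and consider the equation $d_xd_y=\sum_w N^w_{xy}d_w$. All $d_w$ are positive real numbers lying in $\sqrt{n_w}\,\mathbb{Q}$. The key lemma is that the square roots $\sqrt{n}$ of distinct square-free integers are linearly independent over $\mathbb{Q}$ (a standard Galois/Kummer fact). The left side $a_xa_y\sqrt{n_xn_y}$ lies in $\sqrt{n_x\cdot n_y}\,\mathbb{Q}$. Grouping the right side by square-free class, every class other than $n_x\cdot n_y$ must contribute zero. Because all coefficients $N^w_{xy}a_w$ are positive, no cancellation is possible, so every $w$ appearing in $x\otimes y$ has $n_w=n_x\cdot n_y$. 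The same argument applied to $d_{x^*}=d_x$ gives $n_{x^*}=n_x$.

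Next I build the grading. Let $E$ be the image $\{n_x\}$ inside the square-free group. It is closed under products by the previous step, since every $x\otimes y$ is nonzero, and it contains $1=n_{\ut}$, so it is a finite subgroup $\simeq\Z_2^r$. Setting $\C_g=\oplus_{n_x=g}$ (simple objects) gives an $E$-grading, since $\C_g\otimes\C_h\subset\C_{gh}$. It is faithful by construction, because every $g$ is attained. The values $n_g$ are distinct because distinct group elements are distinct square-free integers, and $n_0=1$.

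Finally I check integrality of $q(g,h)=\sqrt{n_gn_h/n_{gh}}$. Here $n_{gh}=n_gn_h/\gcd(n_g,n_h)^2$, so $q=\gcd(n_g,n_h)$, which is a positive integer. I expect the only real obstacle to be the linear-independence lemma for square roots of square-free integers, together with the positivity argument ruling out cancellation; I would simply cite the standard result (Besicovitch) for it.
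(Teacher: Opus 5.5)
Your proof is correct and complete. Taking the square class of $d_x^2$, you show it is multiplicative on fusion channels using linear independence of square roots of distinct square-free integers and positivity of the coefficients $N^w_{xy}a_w$; you then let $E$ be its (automatically elementary abelian) image and compute $q(g,h)=\gcd(n_g,n_h)$, which establishes every clause of the statement. The paper gives no proof of its own and simply cites EGNO; yours is essentially the standard argument for that cited result.
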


 In other words, in a weakly integral fusion category $\C$,  the quantum dimension of a simple object $d_x$ is the product of an integral part $a_x$ and a square-root part $\sqrt{n_x}$, and the square-root part of quantum dimension defines a grading: if $z$ is contained in the fusion outcome of $x\otimes y$, then $\sqrt{n_x}\sqrt{n_y}=\sqrt{n_z}$ up to rational numbers. The trivial component $\C_0$ is spanned by simple objects with integral quantum dimension and is called the integral subcategory.
Let us look at some examples. 

    \begin{example}
       Let $A$ be an abelian group and $\TY(A)$ be a Tambara-Yamagami category \cite{Tambara:1998vmj}. Then we have $\TY(A)=\vc_A\oplus \vc$. Denote the non-invertible simple as $D$, we have $d_D=\sqrt{|A|}$. If $|A|$ is a perfect square (e.g. $A=\Z_4, \, \Z_2^2$), then $d_D\in \Z$ and $\TY(A)$ is an integral fusion category. If $|A|$ is not a perfect square, then we have two component $\C_0=\vc_A$ and $\C_1=\vc$. E.g., for any prime $p$ and $A=\Z_p$ we have $n_1=p$ and $q(1,1)=p$. We will discuss these fusion categories, QCA-refinement and their lattice realization in \Cref{sec_eg_TY}.
    \end{example}
\begin{example}
   The Metaplectic fusion categories $SO(M)_2$ are  weakly integral for any positive integer $M$.  Such a fusion category is integral precisely when $M$ is an odd perfect square or $M$ is even and $M/2$ is a perfect square. Let us consider a simple case $SO(3)_2$. It has simple objects
    \be 
    \ut,Z,Y, X, X'
    \ee
    with quantum dimensions
    \be 
    d_{\ut}=d_{Z}=1,\qquad d_{Y}=2,\qquad d_{X}=d_{X'}=\sqrt{3}.
    \ee
    Hence $SO(3)_2$ is weakly integral but not integral. The corresponding grading group is $E=\Z_2$, with
    \be 
    \C_0=\langle \ut,Y,Z\rangle,\qquad \C_1=\langle X,X'\rangle.
    \ee
    Thus the integral fusion subcategory is $\C_0$, and for the nontrivial element $1\in \Z_2$ we have
    \be 
    n_1=3,\qquad q(1,1)=\sqrt{\frac{n_1^2}{n_0}}=3.
    \ee
    The fusion rules are
    \begin{align*}
        Z\otimes Z&\cong \ut,\\
        Z\otimes X&\cong X',\qquad Z\otimes Y\cong Y,\qquad Z\otimes X'\cong X,\\
        X\otimes X&\cong \ut\oplus Y,\qquad X\otimes Y\cong X\oplus X',\qquad X\otimes X'\cong Y\oplus Z,\\
        Y\otimes Y&\cong \ut\oplus Y\oplus Z,\qquad Y\otimes X'\cong X\oplus X',\qquad X'\otimes X'\cong \ut\oplus Y.
    \end{align*}
    One checks that these fusion rules agree with the $\Z_2$-grading. Moreover, $\C_0$ is equivalent as a fusion category to $\Rep(S_3)$. Indeed, the simple objects of $\C_0$ satisfy
    \be 
    Z\otimes Z\cong \ut,\qquad Z\otimes Y\cong Y,\qquad Y\otimes Y\cong \ut\oplus Z\oplus Y,
    \ee
    which are exactly the fusion rules of the simple representations of $S_3$. We will discuss general Metaplectic categories and their QCA-refined realizations in more details in  \Cref{sec_eg_refined}.
\end{example}

\subsection{Constraints on Indices}\label{sec_result_1_proof}
We are now ready to prove~\Cref{result_1}. We present it here as the following theorem.
\begin{theorem}\label{thm_main}
    Let $\C=\oplus_g \C_g$ be a weakly integral fusion category as in~\Cref{prop_1}. If it has a QCA-refined realization
    \begin{equation}
        \mathsf{D}_x\mathsf{D}_y=\sum_z U^z_{xy} \mathsf{D}_z
    \end{equation}
    satisfying~\Crefrange{ass_1}{ass_5}, 
    then for $x\in \C_g, y\in \C_h$ we have
    \begin{equation}
        \ind(\mathsf{D}_x)=m_x\sqrt{n_g},\quad \ind(U^z_{xy})=\frac{m_xm_y}{m_z}q(g,h)=\frac{m_xm_y}{m_z}\sqrt{\frac{n_gn_h}{n_{gh}}}, 
    \end{equation} 
for some $\{m_x\in \bb{Q}^{>0}: x\in \irr(\C)\}$ not determined by $\C$. 

Notice that if we redefine symmetry operators by attaching with QCAs: $\widetilde{\sf{D}}_x:= U_{r_x}\sf{D}_x$, for some QCA $U_{r_x}$ that commutes with symmetry operators and has index $r_x$, then the redefined symmetry operators and the associated QCAs $\widetilde{U}^z_{xy}$ satisfy
\begin{equation}
   \ind(\widetilde{\sf{D}}_x)= m_xr_x\sqrt{n_g},\quad\ind(\widetilde{U}^z_{xy})= \frac{m_xr_xm_yr_y}{m_zr_z}\sqrt{\frac{n_gn_h}{n_{gh}}}.
\end{equation}
Thus up to redefinition of symmetry operators in the above sense, the indices $\ind(\sf{D}_x), \ind(U^z_{xy})$ are completely determined by the weakly integral fusion category. For instance, we can stack with another spin chain with on-site dimension $\mathrm{lcm}(\{m_x,x\in \irr(\C)\})$, and take $U_{r_x}=U_{m_x^{-1}}$ to be QCAs that only act on the stacked spin chain, then we can set $\ind(\sf{D}_x)=\sqrt{n_g},~\ind(U^z_{xy})=q(g,h)$, both of which are now determined by $\C$.

\proof From the proof of the  weak integrality conjecture in~\Cref{sec_Necessity} we know that $\qdim(\mathsf{D}_x)=d_x$. Thus $\ind(\mathsf{D}_x)=\frac{\qdim(\mathsf{D}_x)}{\rdim(\mathsf{D}_x)}=\frac{d_x}{\rdim(\mathsf{D}_x)}$. Let $x\in \C_g, y\in \C_h, z\in \C_{gh}$ be simple objects, we can write \begin{equation}
d_x=a_x\sqrt{n_g},~d_y=a_y\sqrt{n_h},~d_z=a_z\sqrt{n_{gh}}.
\end{equation}
By homogeneity of index we have 
\begin{equation}
    \ind(U^z_{xy})=\frac{\ind(\mathsf{D}_x)\ind(\mathsf{D}_y)}{\ind(\mathsf{D}_z)}=\frac{\rdim(\mathsf{D}_z)}{\rdim(\mathsf{D}_x)\rdim(\mathsf{D}_y)}\frac{d_xd_y}{d_z}=\frac{m_xm_y}{m_z}\sqrt{\frac{n_gn_h}{n_{gh}}}
\end{equation}
where we defined $m_x:=\frac{a_x}{\rdim(\mathsf{D}_x)}\in \bb{Q}^{>0}$.\qed 
\end{theorem}

In fact, we can ask for more. The lattice model that we will construct later satisfies the property that the QCAs $U^z_{xy}$ with the same index are in fact the same QCA. This together with the redefinition of symmetry operator trick discussed above implies that there exists a QCA-refined realization of the form 
\begin{equation}
    \sf{D}_x\sf{D}_y=U_{q(g,h)} \sum_z \sf{D}_z,\quad x\in \C_g, y\in \C_h, \label{eq_canonical_form}
\end{equation}
where $U_{q(g,h)}$ is a QCA with index $q(g,h)$. Let us make this into a definition.

\begin{definition}
    A QCA-refined realization of a weakly integral categorical symmetry is called \emph{canonical} if it takes the form \Cref{eq_canonical_form}.
\end{definition}
\begin{corollary}
    For every weakly integral fusion category, a canonical QCA-refined realization exists. \qed
\end{corollary}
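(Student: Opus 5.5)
The corollary should follow by combining the existence result (\Cref{result_2}) with the redefinition trick explained after \Cref{thm_main}. The order of steps is as follows.

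First, I will invoke the lattice model of \Cref{result_2}, which gives a QCA-refined realization $\mathsf{D}_x\mathsf{D}_y=\sum_z U^z_{xy}\mathsf{D}_z$ for any weakly integral $\C$. Two properties of that construction, both announced in the text, will be used. The indices are $\ind(\mathsf{D}_x)=m_x\sqrt{n_g}$ and $\ind(U^z_{xy})=\frac{m_xm_y}{m_z}q(g,h)$, matching \Cref{thm_main} and verified by direct computation rather than through the physical assumptions. Moreover, QCAs $U^z_{xy}$ with equal index are literally the same operator. It is convenient if the $U^z_{xy}$ commute with all $\mathsf{D}_w$, which I expect holds because in the construction they act as translations on an auxiliary sector.

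Second, I will normalize the $m_x$. Let $N=\mathrm{lcm}\{\text{numerators and denominators of } m_x\}$. I will stack an ancilla chain built from a tensor product of factors $\bC^{p}$, one for each prime $p\mid N$, and choose for each $x$ a QCA $\alpha_x$ acting only on the ancilla with $\ind(\alpha_x)=m_x^{-1}$. Such a QCA can be taken as a product of partial translations of the prime factors, since GNVW indices realize all positive rationals whose prime support lies in the on-site factors. Because $\alpha_x$ acts only on the ancilla, it commutes with every $\mathsf{D}_y$ and with every $U^z_{xy}$.

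Third, I will set $\widetilde{\mathsf{D}}_x=\alpha_x\mathsf{D}_x$, so that
\[
\widetilde{\mathsf{D}}_x\widetilde{\mathsf{D}}_y=\sum_z(\alpha_x\alpha_y\alpha_z^{-1}U^z_{xy})\widetilde{\mathsf{D}}_z .
\]
Each new QCA has index $q(g,h)$ for $x\in\C_g$, $y\in\C_h$. To get the single operator $U_{q(g,h)}$ required by \Cref{eq_canonical_form}, the QCAs must be equal and not merely have equal index. This is the main obstacle.

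My first attempt at this obstacle is to choose the $\alpha_x$ canonically as commuting partial translations $\prod_p T_p^{k_{x,p}}$, with the exponents determined by the $p$-adic valuations of $m_x^{-1}$. Then $\alpha_x\alpha_y\alpha_z^{-1}$ is determined solely by its index, since these translations form a free abelian group isomorphic to the index group. If the lattice-model $U^z_{xy}$ are likewise products of translations of sectors determined by their index (the "same index implies same QCA" property), the product $\alpha_x\alpha_y\alpha_z^{-1}U^z_{xy}$ depends only on $q(g,h)$, and I can define it to be $U_{q(g,h)}$.

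If the lattice QCAs are not of this pure-translation form, the fallback is to use that QCAs of equal GNVW index differ by a finite-depth circuit. I would then absorb that circuit into a redefinition of the $\mathsf{D}_z$. However, doing this consistently across all pairs $(x,y)$ is a cocycle-type condition, so I would prefer to extract the exact form of $U^z_{xy}$ from the construction in \Cref{sec_result_2}.
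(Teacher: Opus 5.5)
Your proposal is correct and follows the paper's route: take the lattice model of \Cref{sec_result_2}, where the refining QCA is $U^z_{xy}=U_{(g,h)}=U_gU_hU_{gh}^{-1}$, and then apply the redefinition trick stated after \Cref{thm_main}, stacking an ancilla chain and attaching to each $\mathsf{D}_x$ a QCA supported only on that ancilla. The obstacle you flag is already removed by the construction: $U_{(g,h)}$ is independent of $z$ by definition, and $m_x=\dim(\C_0)^{-1}$ for every $x$ in \Cref{construction_general_case}, so attaching one and the same ancilla QCA of index $\dim(\C_0)$ (e.g.\ an ancilla translation) to every $\mathsf{D}_x$ yields the canonical form with $U_{q(g,h)}$ equal to that ancilla QCA times $U_{(g,h)}$; the prime-by-prime partial translations, the commutation of $U^z_{xy}$ with the $\mathsf{D}_w$ (these are sequential circuits on the physical chain, not translations of an auxiliary sector), and the fallback are all unnecessary.
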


\subsection{Examples of QCA-Refined Fusion Rules}\label{sec_eg_refined}
Here we consider two families of concrete weakly integral fusion categories and their canonical QCA-refined fusion rules \Cref{eq_canonical_form}. 
To date, there is no classification of weakly integral fusion categories. The main infinite series of examples are the Tambara-Yamagami categories and Metaplectic categories.  A complete classification of modular weakly integral categories\footnote{A modular category is a non-degenerate unitary braided fusion category. }  for rank $\le 7$ is obtained in \cite{bruillard2016classification}, which states any  weakly integral modular category with $\text{rank}\le 7$ is a tensor product of pointed categories, Ising categories and Metaplectic
categories. 
We now discuss the canonical QCA-refined realization for the Tambara-Yamagami (TY) categories, which are direct generalization of the Ising category, and the Metaplectic categories $SO(M)_{2}$.

\begin{example}\label{eg_TY_canonical_refined}{\bf Tambara-Yamagami Fusion Categories.}    \\
Let $A$ be an abelian group with $|A|$ not a perfect square, and $\TY(A)$ be a Tambara-Yamagami category. There are additional data such as the bicharacter and Frobenius-Schur indicator, which will not play a role here since they do not affect the fusion rules. 
Denote the non-invertible simple by $m$, which has $\dim (m)=\sqrt{|A|}$ and the invertible simples of dimension 1 by $a\in A$. The standard TY fusion rules are 
\be
m\otimes m = \bigoplus_{a\in A} a \,,\qquad m \otimes a = a\otimes m = m
\ee
and the $a$ fuse according to the group law in $A$. 
Then its canonical QCA-refined fusion rules are
\be 
\ba
\sf{D}_m\sf{D}_m &= U_{|A|} \sum_{a\in A}\sf{D}_a \cr 
\mathsf{D}_a\sf{D}_m &=\sf{D}_m \sf{D}_a=\sf{D}_m \,,\qquad  
\mathsf{D}_a\mathsf{D}_b =\mathsf{D}_{ab} \,.
\ea
\ee
We will construct the tensor-product Hilbert space realization of this in \Cref{sec_eg_TY}.
\end{example}

\begin{example}{\bf Metaplectic Fusion Categories.}
\label{sec:Meta}
Another class of weakly integral fusion categories are the Metaplectic fusion categories $SO(M)_2$, whose complete categorical data have been calculated \cite{ardonne2021classification, gelaki2009centers}. 
They can be viewed as $\Z_2$-graded extensions of the representation category of the dihedral group $D_M$, $\Rep (D_{M})$, or alternatively as obtained from taking a $\Z_2$-equivariantization of the Drinfeld center $\cZ(\Rep(D_M))$. We consider a Metaplectic fusion category $SO(M)_2$ for odd $M=2r+1$. The simple objects are  
\be
\mathbf{1},\ Z,\ X,\ X',\ Y_1,\dots,Y_r,
\qquad M= 2r+1 \,,
\ee
with quantum dimensions
\be\ba 
\dim(\mathbf{1}) &=\dim(Z)=1  \cr 
\dim(X)&=\dim(X')=\sqrt{M} \cr 
\dim(Y_j)&=2  \,.
\ea\ee
Hence the total quantum dimension is
\be\ba 
\dim(\cC)
&=1^2+1^2+(\sqrt{M})^2+(\sqrt{M})^2+r\cdot 2^2\cr 
&=2+2M+4\cdot \frac{M-1}{2}=4M \,.
\ea\ee
So these categories are {\bf weakly integral}. For
$M$ not a perfect square they are not integral. This can also be seen from the fact that  $X$ and $X'$ have non-integral  quantum dimension
$\sqrt{M}$ whenever $M$ is not a perfect square. 
The fusion rules are:
\be \ba
Z\otimes Z &= \mathbf{1} \cr 
Y_i\otimes Y_j &= Y_{\min\{i+j,M-i-j\}}\oplus Y_{|i-j|}
\qquad (i\neq j),\\
Y_i\otimes Y_i &= \mathbf{1}\oplus Z\oplus Y_{\min\{2i,M-2i\}} \cr 
Z\otimes X &= X', \qquad Z\otimes X' = X, \qquad Z\otimes Y_i = Y_i \cr 
X\otimes X &= X'\otimes X'= \mathbf{1}\oplus \bigoplus_{i=1}^{r} Y_i,\qquad
X\otimes X'= Z\oplus \bigoplus_{i=1}^{r} Y_i,\\
X\otimes Y_i &= X'\otimes Y_i= X\oplus X' \,.
\ea
\label{eq_meta_fusion_rule}
\ee
The generators $1, Z, Y_i, i=1, \cdots, r$ form a fusion subcategory  $\Rep (D_M)$ \cite{ardonne2021classification}, as can be seen from the first three lines of the fusion rules above.  The full $SO(M)_2$ is a $\Z_2$-extension of this $\Rep (D_M)$ subcategory with the nontrivial component generated by $X, X'$:
\be
\cC_0 =  \langle  \mathbf{1}, Z, Y_1, \cdots, Y_r \rangle \,,\qquad 
\cC_1= \langle X, X'\rangle \,.
\ee
We now consider the canonical QCA-refined fusion rules, assuming $M$ is square-free. Following the convention in \Cref{prop_1} we have 
\begin{equation}
    n_1=M, \quad  q(1,1)=M.
\end{equation}
Then by \Cref{eq_canonical_form} the canonical QCA-refined fusion rules are
\be \ba
\Dlat_X \Dlat_X &=\Dlat_{X'} \Dlat_{X'} = U_M \left(\Dlat_{\mathbf{1}} +  \sum_{i=1}^{r} \Dlat_{Y_i} \right) \cr 
\Dlat_{X} \Dlat_{X'} &= U_M \left(\Dlat_Z+  \sum_{i=1}^{r} \Dlat_{Y_i} \right)  \,,
\ea
\ee
where $U_M$ is a QCA with index $M$, and the rest of the fusion rules stay the same as \Cref{eq_meta_fusion_rule}. In particular, whenever both factors lie in the nontrivial graded component $\C_1$, the product acquires the QCA-refinement $U_M$, while all other products have trivial QCA-refinement. Finally the indices of symmetry operators in the canonical QCA-refined realization are 
 \be
\ind(\mathsf{D}_{\mathbf{1}})=\ind(\mathsf{D}_{Y_i})=\ind(\mathsf{D}_{Z})=1,
    \qquad
    \ind(\mathsf{D}_{X})=\ind(\mathsf{D}_{X'})=\sqrt{M}.
\ee
\end{example}


\section{Lattice Model for Weakly Integral Fusion Categories}\label{sec_result_2}
We now move on to present our~\Cref{result_2}. We will construct a lattice model that realizes any weakly integral fusion categories on a tensor-product Hilbert space up to QCAs, in the sense of~\Cref{def_lattice_realization}.  Our construction is a generalization of the construction for integral fusion categories given in~\cite{Evans:2025msy}, which we now review in a  physicist-friendly way.
\subsection{Lattice Model for Integral Fusion Categories}\label{sec_integral_model}
It is proven in~\cite{Evans:2025msy} that a fusion categorical symmetry can be realized on a tensor-product Hilbert space strictly (without QCA refinement) if and only if it is integral. The ``if" direction is proven by explicitly constructing a tensor-product realization of integral fusion categories.  The construction is based on the anyon chain model, depicted in~\Cref{eq_anyon_chain}. 

The anyon chain \cite{Feiguin:2006ydp,
Trebst:2008fibonacci,
Gils:2008collective,
Gils:2009topology,
Pfeifer:2010translation,
Ardonne:2010yanglee,
Gils:2013spin1, Aasen:2016dop,
Buican:2017anyonic} is defined by a triple $(\C,\M, \rho)$, where $\C$ is a unitary fusion category, $\M$ is a unitary module category over $\C$, and $\rho$ is a not necessarily simple object in $\C$:
\be 
\begin{split}
\begin{tikzpicture}
\begin{scope}[scale=1.8]
\draw [very thick,->-](-4,0) -- (-3,0); 
\draw [very thick,->-](-3,0) -- (-2,0);
\draw [very thick,->-](-2,0) -- (-1,0);
\draw [very thick,->-](-1,0) -- (0,0);
\draw [very thick,->-](0,0) -- (1,0);
\foreach \x in {-3,-2,-1,0}{
 \draw[fill=black] (\x,0) circle (0.0556);
}
 \draw [very thick,->-](-3,-0.6) -- (-3,0); 
  \draw [very thick,->-](-2,-0.6) -- (-2,0); 
   \draw [very thick,->-](-1,-0.6) -- (-1,0); 
 \draw [very thick,->-](0,-0.6) -- (0,0); 
 \foreach \x in {-3,-2,-1,0} {
  \node[below] at (\x, -0.6) {$\rho$};
 }
\node[above] at (-3.5, 0) {$x_i$} ;
\node[above] at (-2.5, 0) {$x_{i+1}$} ;
\node[above] at (-1.5, 0) {$x_{i+2}$} ;
\node[above] at (-0.5, 0) {$x_{i+3}$} ;
\node[above] at (0.5, 0) {$x_{i+4}$} ;
\node[above] at (-3, 0) {${\nu_i}$} ;
\node[above] at (-2, 0) {$\nu_{i+1}$} ;
\node[above] at (-1, 0) {$\nu_{i+2}$} ;
\node[above] at (0, 0) {$\nu_{i+3}$} ;
\end{scope}
\end{tikzpicture}
\end{split}
\label{eq_anyon_chain}
\ee 
Our conventions for fusion diagrams are summarized in \Cref{app:Conventions}.
The vertical legs are labeled by the object $\rho \in \C$. The horizontal links have degrees of freedom labeled by simple objects $x_i\in \M$. For any fixed  configuration of simple objects $\{x_i\}$, we assign a Hilbert space $V_{i}:=\Hom_\M(x_i\lhd \rho, x_{i+1})$ at each vertex. We choose the projection basis at each vertex: 
\begin{equation}
    |\nu_i\>=p_{ x_{i+1}}^{\nu_i}:\quad  x_i\otimes \rho\to x_{i+1},\quad \nu_i=1,\cdots,\dim(V_{i}).
\end{equation}
The Hilbert space of an anyon chain with the above input is then
\begin{equation}
    \cc{H}=\bigoplus_{\{x_i\in \M\}}\bigotimes_i \Hom_\M(x_i\lhd\rho, x_{i+1}),
\end{equation}
which is generally not a tensor product of local Hilbert spaces. Give an object $x\in \C$, the corresponding symmetry operator $\cc{D}_x$ is defined by fusing with an $x$-line from above the anyon chain, and then deforming the diagram to a linear combination of the basis define before using $F$-moves: 
\be \label{eq_anyon_chain_action}
\begin{split}
\begin{tikzpicture}
\begin{scope}[shift={(0,0)}, scale=1.8]
\draw [very thick,newblue, ->-] (-4,0.6) -- (1,0.6);
 \node[above] at (-1, 0.6) {$x$};
\foreach \x in {-4,-3,-2,-1,0} {
\draw [very thick,->-](\x,0) -- ({\x+1},0);
}
 \foreach \x in {-3,-2,-1,0} {
  \draw[fill=black] (\x,0) circle (0.0556);
\draw [very thick,->-](\x,-0.6) -- (\x,0);
 \node[below] at (\x, -0.6) {$\rho$};
} 
\node[above] at (-3.5, 0) {$x_i$} ;
\node[above] at (-2.5, 0) {$x_{i+1}$} ;
\node[above] at (-1.5, 0) {$x_{i+2}$} ;
\node[above] at (-0.5, 0) {$x_{i+3}$} ;
\node[above] at (0.5, 0) {$x_{i+4}$} ;
\end{scope}
\end{tikzpicture}
\end{split}
\ee
Now let us assume $\C$ is integral and  take $\M=\C$ to be the regular module category. The key observation is that in an integral fusion category the following object is well-defined,
\begin{equation}
    R:=\bigoplus_{x\in \irr(\C)} d_x x.
\end{equation}
Here $d_xx$ means $d_x$-fold direct sum of $x$: $d_xx=\oplus^{d_x}x$. This is well-defined because $d_x$ is an integer. The object $R$ is called the regular object in mathematics literature~\cite{egno2015tensor}. It has the crucial property that for any simple $z\in \C$, $z\otimes R=d_z R$. Indeed,

\be
\ba
    z\otimes R&=\bigoplus_x d_x z\otimes x=\bigoplus_{x,y} d_x N^y_{zx}y\\
    &=\bigoplus_{x,y} d_x N^{x}_{z^*y} y=\bigoplus_y d_zd_yy=d_z R.
\ea\ee
This then implies
\begin{equation}
   \dim \Hom_\C(x\otimes R, y)=d_xd_y, \quad \forall x,y\in \irr(\C).
\end{equation}
Then the Hilbert space $V_i$ associated to a vertex can be decomposed into a tensor product $\bC^{d_{x_i}}\otimes \bC^{d_{x_{i+1}}}$. One pictures that the space $\bC^{d_{x_i}}$ now lives on the link $x_i$, and $\bC^{d_{x_{i+1}}}$ lives on the link $x_{i+1}$. The link $x_i$ then hosts a Hilbert space $\bC^{d_{x_{i}}^2}$.  The total Hilbert space of the anyon chain is then 
\begin{equation}
    \cc{H}=\bigoplus_{\{x_i\in \C\}} \bigotimes_{i}\bC^{d_{x_i}^2}=\bigotimes_i \bC^{\sum_{x\in \C}d_x^2}=\bigotimes_i \bC^{\dim(\C)} \,,
\end{equation}
which takes a tensor product form with on-site Hilbert space $\bC^{\dim(\C)}$. Diagrammatically, we have rearranged the vertex Hilbert spaces as follows:
     \be 
\begin{split}
\begin{tikzpicture}
\begin{scope}[shift={(0,0)}, scale=1.8]
\draw [very thick,->-](-4,0) -- (-3,0); 
\draw [very thick,->-](-3,0) -- (-2,0);
\draw [very thick,->-](-2,0) -- (-1,0);
 \draw[fill=black] (-3,0) circle (0.0556);
 \draw[fill=black] (-2,0) circle (0.0556);;
 \draw [very thick,->-](-3,-0.6) -- (-3,0); 
  \draw [very thick,->-](-2,-0.6) -- (-2,0); 
 \node[below] at (-3, -0.6) {$R$};
 \node[below] at (-2, -0.6) {$R$};
\node[below] at (-3.5, 0) {$x_i$} ;
\node[below] at (-2.5, 0) {$x_{i+1}$} ;
\node[below] at (-1.5, 0) {$x_{i+2}$} ;
\node[above] at (-3, 0.1) {${\bC^{d_{x_i}d_{x_{i+1}}}}$} ;
\node[above] at (-2, 0.1) {$\bC^{d_{x_{i+1}}d_{x_{i+2}}}$} ;
\end{scope}
\end{tikzpicture}
\end{split}
\quad \quad 
\To
\begin{split}
\begin{tikzpicture}
\begin{scope}[shift={(0,0)}, scale=1.8]
\draw [very thick](-4,0) -- (-3,0); 
\draw [very thick](-3,0) -- (-2,0);
\draw [very thick](-2,0) -- (-1,0);
 \draw [very thick,->-](-3,-0.6) -- (-3,0); 
  \draw [very thick, ->-](-2,-0.6) -- (-2,0); 
 \node[below] at (-3, -0.6) {$R$};
 \node[below] at (-2, -0.6) {$R$};
\node[below] at (-3.5, 0) {$x_i$} ;
\node[below] at (-2.5, 0) {$x_{i+1}$} ;
\node[below] at (-1.5, 0) {$x_{i+2}$} ;
\node[above] at (-3.5, 0.1) {$\bC^{d_{x_i}^2}$} ;
\node[above] at (-2.5, 0.1) {$\bC^{d_{x_{i+1}}^2}$} ;
\node[above] at (-1.5, 0.1) {$\bC^{d_{x_{i+2}}^2}$} ;
  \draw[fill=orange,draw=black,very thick] (-2.5,0) circle (0.0556);
  \draw[fill=orange,draw=black,very thick] (-3.5,0) circle (0.0556);
  \draw[fill=orange,draw=black,very thick] (-1.5,0) circle (0.0556);
\end{scope}
\end{tikzpicture}
\end{split}
\ee

This shows the anyon chain for an integral fusion category $\C$ with input $(\C, \C,R)$ has a tensor-product Hilbert space structure. 

\subsection{Lattice Model for Weakly Integral Fusion Categories}\label{sec_lattice_model_weakly}
We now turn to a general weakly integral fusion category. Our construction is a combination of an anyon chain model similar to the previous one for integral fusion categories and an "anyonic" sequential circuit action. After discussing the Hilbert space structure of the modified anyon chain model, we will revisit the Ising category case to motivate our general construction that involves a sequential circuit action.
\subsubsection{Setup}
We now consider a weakly integral fusion category $\C=\oplus_{g\in E} \C_g$ with its standard grading by $E$. Consider the anyon chain model as in \Cref{eq_anyon_chain}, but change $\rho$ to 
\begin{equation}
    \rho=R_0:=\bigoplus_{x\in \irr(\C_0)} d_x x,
\end{equation}
i.e., the regular object in the integral subcategory $\C_0$. Take $x\in \C_g$, we now have
\begin{align}
    x\otimes R_0&=\bigoplus_{z\in \irr(\C_0)} d_zx\otimes z=\bigoplus_{z\in \irr(\C_0), y\in \irr(\C_g)}d_zN^y_{xz} y\nonumber\\
    &=\bigoplus_{z\in\irr(\C_0), y\in \irr(\C_g)}d_zN^{z}_{x^*y}y=\bigoplus_{y\in \irr(\C_g)} d_x d_y y.
\end{align}
Notice $d_x$ and $d_y$ may not be integers but $d_xd_y$ is always an integer: write 
\be 
d_x=a_x\sqrt{n_g}\,,\qquad d_y=a_y\sqrt{n_g}, 
\ee 
then we have 
\begin{equation}
    x\otimes R_0=n_g \bigoplus_{y\in \irr(\C_g)} a_xa_y y \,,
\end{equation}
which implies for $x\in \C_g, y\in \C_h$
\begin{equation}\label{eq_dim_weakly_int}
    \dim \Hom_\C(x\otimes R_0,y)=\delta_{g,h} a_xa_yn_g.
\end{equation}

This has the following consequences. First the total Hilbert space of the anyon chain now decomposes into $|E|$-many sectors: 
\begin{equation}\label{HWI}
    \cc{H}=\bigoplus_{g\in E}\cc{H}_g.
\end{equation}
This is because for $x_i, x_{i+1}$ belonging to different graded components of $\C$, we have $V_i=\Hom_{\C}(x_i\otimes R_0, x_{i+1})=0$. For each sector $g\in E$, the links have anyon labels $x_i\in \C_g$. For fixed anyon configuration $\{x_i\in \C_g\}$ in the $g$-sector, the Hilbert space associated with a vertex has dimension
\begin{equation}
    \dim \Hom_\C(x_i\otimes R_0, x_{i+1})=a_{x_i}a_{x_{i+1}}n_g.
\end{equation}

We may now rearrange the vertex Hilbert spaces as in \Cref{fig_anyon_chain_weakly_integral}.
\begin{figure}[ht]
    \centering
     \be 
\begin{split}
\begin{tikzpicture}
\begin{scope}[shift={(0,0)}, scale=1.8]
\draw [very thick,->-](-4,0) -- (-3,0); 
\draw [very thick,->-](-3,0) -- (-2,0);
\draw [very thick,->-](-2,0) -- (-1,0);
 \draw[fill=black] (-3,0) circle (0.0556);
 \draw[fill=black] (-2,0) circle (0.0556);
 \draw [very thick,->-](-3,-0.6) -- (-3,0); 
  \draw [very thick,->-](-2,-0.6) -- (-2,0); 
 \node[below] at (-3, -0.6) {$R_0$};
 \node[below] at (-2, -0.6) {$R_0$};
\node[below] at (-3.5, 0) {$x_i$} ;
\node[below] at (-2.5, 0) {$x_{i+1}$} ;
\node[below] at (-1.5, 0) {$x_{i+2}$} ;
\node[above] at (-2.8, 0.1) {${\bC^{n_ga_{x_i}a_{x_{i+1}}}}$} ;
\node[above] at (-1.8, 0.1) {$\bC^{n_ga_{x_{i+1}}a_{x_{i+2}}}$} ;
\end{scope}
\end{tikzpicture}
\end{split}
\quad \quad 
\To
\begin{split}
\begin{tikzpicture}
\begin{scope}[shift={(0,0)}, scale=1.8]
\draw [very thick](-4,0) -- (-3,0); 
\draw [very thick](-3,0) -- (-2,0);
\draw [very thick](-2,0) -- (-1,0);
 \draw [very thick,->-](-3,-0.6) -- (-3,0); 
  \draw [very thick,->-](-2,-0.6) -- (-2,0); 
 \draw[fill=blue,draw=black, very thick] (-3,0) circle (0.0556);
 \draw[fill=blue,draw=black, very thick] (-2,0) circle (0.0556);
 \node[below] at (-3, -0.6) {$R_0$};
 \node[below] at (-2, -0.6) {$R_0$};
\node[below] at (-3.5, 0) {$x_i$} ;
\node[below] at (-2.5, 0) {$x_{i+1}$} ;
\node[below] at (-1.5, 0) {$x_{i+2}$} ;
\node[above] at (-3.5, 0.1) {$\bC^{a_{x_i}^2}$} ;
\node[above] at (-2.5, 0.1) {$\bC^{a_{x_{i+1}}^2}$} ;
\node[above] at (-1.5, 0.1) {$\bC^{a_{x_{i+2}}^2}$} ;
\node[above] at (-2, 0.1) {$\bC^{n_g}$} ;
\node[above] at (-3, 0.1) {$\bC^{n_g}$} ;
  \draw[fill=orange,draw=black, very thick] (-2.5,0) circle (0.0556);
  \draw[fill=orange,draw=black, very thick] (-3.5,0) circle (0.0556);
  \draw[fill=orange,draw=black, very thick] (-1.5,0) circle (0.0556);
\draw[newred, dashed, very thick] (-3.7,0.1) -- (-2.8,0.1) -- (-2.8, -0.1) -- (-3.7,-0.1) -- (-3.7, 0.1);
\end{scope}
\end{tikzpicture}
\end{split}
\ee
\vspace{-0.2cm}
    \caption{Rearranging the local Hilbert spaces for the anyon chain with a weakly-integral fusion category input. The red rectangle is a single site.}
    \label{fig_anyon_chain_weakly_integral}
\end{figure}
After this rearranging, each link $i$ hosts a  Hilbert space of dimension $a_{x_i}^2$. Let us define a site to be a vertex together with the adjacent link to its left. Then each site has dimension $a_{x_i}^2n_g$. Summing over the anyon configurations $\{x_i\in \C_g\}$, the $g$-sector Hilbert space is 
\begin{equation}
    \cc{H}_g=\bigoplus_{\{x_i \in \C_g\}}\bigotimes_i \bC^{a_{x_i}^2n_g}=\bigotimes_{i}\bC^{\sum_{x\in \C_g} a_x^2n_g}=\bigotimes_i \bC^{\dim(\C_g)}=\bigotimes_i \bC^{\dim(\C_0)}.
\end{equation}

We see that each sector $\cc{H}_g$ is has a tensor product structure with on-site dimension $\dim(\C_g)=\dim(\C_0)$. This means we have site-to-site isomorphism $\cc{H}_g\simeq \cc{H}_0$, which roughly speaking shifts a $\bC^{n_g}$ by half a site.  Given a simple object $x\in \C_g$, we define the operator $\D_x$ to be the usual action on $\cc{H}$ by fusing with a line of $x$ from above the anyon chain (\Cref{eq_anyon_chain_action}). It acts on sectors as 
\begin{equation}
    \D_x: \cc{H}_h\to \cc{H}_{gh}.
\end{equation}

Since $\H=\oplus_{g\in E} \H_g$ does not have a tensor product structure, the operators $\D_x$ do not yet form a representation of $\C$ on a single tensor-product Hilbert space. 
Our strategy is to introduce locality-preserving isomorphisms $U_g: \cc{H}_g\simeq \H_0$ in a coherent way so that the modified symmetry operators $\sf{D}_x:=U_g\circ \D_x|_{\H_0}: \H_0\to \H_0$ form a representation of $\C$ on a single tensor product space $\cc{H}_0$. It turns out this can be achieved at the cost of introducing  QCA-refinement in the fusion rules. 

Next we consider a simple example that illustrates the main idea, namely the Ising fusion category.

\subsubsection{Ising Category Revisited}\label{sec_ising_example}
 It is known that the Ising fusion category symmetry can be realized on a tensor-product Hilbert space, up to lattice translation;  explicit lattice models can be found in~\cite{Seiberg:2023cdc, Seiberg:2024gek}. Our goal here is to revisit this example in a way that makes the correct generalization to an arbitrary weakly integral fusion category transparent.

The Ising fusion category is weakly integral with grading $\mathrm{Ising}=\vc_{\Z_2}\oplus \vc$, and is a special case of the TY categories in \Cref{eg_TY_canonical_refined} with $A=\Z_2$.
Denote the invertible simples as $1, f$ and the non-invertible simple as $\sigma$ with $\sigma^2 = 1\oplus f$. We have $a_1=a_f=a_\sigma=1$ and $n_1=n_f=1, n_\sigma=2$. The regular object of the integral subcategory is $R_0=1 \oplus f$.  Then the anyon chain with input $(\mathrm{Ising},\mathrm{Ising}, 1 \oplus f)$ has the Hilbert space structure
    \begin{equation}
        \cc{H}=\cc{H}_{1,f}\oplus \cc{H}_\sigma.
    \end{equation}  
    For the trivial sector $\cc{H}_{1,f}$, the links are labeled by $x_i\in\{1,f\}$. The vertex Hilbert space $V_i=\bC$ is one dimensional. Therefore the local Hilbert spaces can be associated with links and are $\bC^2=\mathrm{Span}(|1\>, |f\>)$. See \Cref{fig_anyon_chain_Ising_triv}.
    \begin{figure}[ht]
    \centering
     \be 
\begin{split}
\begin{tikzpicture}
\begin{scope}[shift={(0,0)}, scale=1.8]
\draw [very thick,->-](-4,0) -- (-3,0); 
\draw [very thick,->-](-3,0) -- (-2,0);
\draw [very thick,->-](-2,0) -- (-1,0);
\draw [very thick,->-](-1,0) -- (0,0);
\draw [very thick,->-](0,0) -- (1,0);
 \foreach \x in {-3,-2,-1,0} {
  \draw [very thick,->-](\x,-0.6) -- (\x,0); 
   \draw[fill=black] (\x,0) circle (0.0556);
 }
 \node[below] at (-3, -0.6) {$R_0=1 \oplus f$};
 \node[below] at (-2, -0.6) {$R_0$};
 \node[below] at (-1, -0.6) {$R_0$};
 \node[below] at (-0, -0.6) {$R_0$};
\node[above] at (-3.5, 0) {$x_i$} ;
\node[above] at (-2.5, 0) {$x_{i+1}$} ;
\node[above] at (-1.5, 0) {$x_{i+2}$} ;
\node[above] at (-0.5, 0) {$x_{i+3}$} ;
\node[above] at (0.5, 0) {$x_{i+4}$} ;
\end{scope}
\end{tikzpicture}
\end{split}
\ee
\vspace{-0.2cm}
    \caption{The trivial sector $\H_{1,f}$. At each link $x_i\in \{1,f\}$. }
\label{fig_anyon_chain_Ising_triv}
\end{figure}
For the nontrivial sector $\cc{H}_\sigma$, the links have fixed anyon label $\sigma$. The vertex Hilbert space is $V_i=\bC^2$, with a basis given by the two projections: 
    \begin{equation}
      \id_\sigma\otimes p_1,~\id_\sigma\otimes p_f: \quad \sigma\otimes (1 \oplus f) \to \sigma.
    \end{equation}
    We can place the projections on the vertical legs as shown in~\Cref{fig_anyon_chain_Ising_nontriv}. Then the on-site Hilbert space basis may be labeled as $|t_i\>\in \{1,f\}$:
\begin{figure}[H]
    \centering
     \be 
\begin{split}
\begin{tikzpicture}
\begin{scope}[shift={(0,0)}, scale=1.8]
\foreach \x in {-4,-3,-2,-1,0}{
\draw [very thick,newblue,->-](\x,0) -- ({\x+1},0); 
}
\foreach \x in {-3,-2,-1,0}{
 \draw[fill=black] (\x,-0.5) circle (0.0556);
  \draw [very thick,->-](\x,-1) -- (\x,-0.5);
 \draw [very thick,->-](\x,-0.5) -- (\x,0);
}
 \node[below] at (-3, -1) {$R_0=1 \oplus f$};
 \node[below] at (-2, -1) {$R_0$};
 \node[below] at (-1, -1) {$R_0$};
 \node[below] at (-0, -1) {$R_0$};
\node[above] at (-3.5, 0) {$\sigma$} ;
\node[above] at (-2.5, 0) {$\sigma$} ;
\node[above] at (-1.5, 0) {$\sigma$} ;
\node[above] at (-0.5, 0) {$\sigma$} ;
\node[above] at (0.5, 0) {$\sigma$} ;
\node[left] at (-3, -0.25) {$t_i$} ;
\node[left] at (-2, -0.25) {$t_{i+1}$} ;
\node[left] at (-1, -0.25) {$t_{i+2}$} ;
\node[left] at (0, -0.25) {$t_{i+3}$} ;
\end{scope}
\end{tikzpicture}
\end{split}
\ee
    \caption{The nontrivial sector $\H_{\sigma}$. The basis for each on-site Hilbert space is $t_i\in\{1,f\}$. The black dots are unique projections.}
    \label{fig_anyon_chain_Ising_nontriv}
\end{figure}

The tensor product space $\cc{H}_{1,f}$ forms a representation of the integral fusion subcategory  $\C_0=\vc_{\Z_2}$:
\begin{equation}
    \D_f|\{x_i\}\>=|\{fx_i\}\>.
\end{equation}
On the other hand the operator $\D_\sigma$ maps $\cc{H}_{1,f}\to \cc{H}_\sigma$. We now calculate the operator $\D_\sigma$ explicitly in the basis $|\{x_i\}\>$ for $\H_{1,f}$ and $|\{t_i\}\>$ for $\H_\sigma$:
\begin{align}
\label{fig:SymActionIsing}
\begin{split}
\begin{tikzpicture}
\begin{scope}[shift={(0,0)}, scale=1.8]
\draw [very thick,newblue, ->-] (-4,0.6) -- (1,0.6);
 \node[above] at (-1, 0.65) {$\sigma$};
\draw [very thick,->-](-4,0) -- (-3,0); 
\draw [very thick,->-](-3,0) -- (-2,0);
\draw [very thick,->-](-2,0) -- (-1,0);
\draw [very thick,->-](-1,0) -- (0,0);
\draw [very thick,->-](0,0) -- (1,0);
 \draw[fill=black] (-3,0) circle (0.0556);
 \draw[fill=black] (-2,0) circle (0.0556);
 \draw[fill=black] (-1,0) circle (0.0556);
 \draw[fill=black] (0,0) circle (0.0556);
 \foreach \x in {-3,-2,-1,0} {
 \draw [very thick,->-](\x,-0.6) -- (\x,0); 
 }
 \node[below] at (-3, -0.6) {$R_0=1 \oplus f$};
 \node[below] at (-2,-0.6) {$R_0$};
 \node[below] at (-1,-0.6) {$R_0$};
 \node[below] at (-0, -0.6) {$R_0$};
\node[above] at (-3.5, 0) {$x_i$} ;
\node[above] at (-2.5, 0) {$x_{i+1}$} ;
\node[above] at (-1.5, 0) {$x_{i+2}$} ;
\node[above] at (-0.5, 0) {$x_{i+3}$} ;
\node[above] at (0.5, 0) {$x_{i+4}$} ;
\end{scope}
\end{tikzpicture}
\end{split} \\
=\quad 
\begin{split}
\begin{tikzpicture}
\begin{scope}[shift={(0,0)}, scale=1.8]
\draw[very thick,newblue,->-] (-4,0) -- (-3.35,0);
\draw[very thick,->-]         (-3.35,0) -- (-3,0);
\draw[very thick,->-]         (-3,0) -- (-2.65,0);
\draw[very thick,newblue,->-] (-2.65,0) -- (-2.35,0);
\draw[very thick,->-]         (-2.35,0) -- (-2,0);
\draw[very thick,->-]         (-2,0) -- (-1.65,0);
\draw[very thick,newblue,->-] (-1.65,0) -- (-1.35,0);
\draw[very thick,->-]         (-1.35,0) -- (-1,0);
\draw[very thick,->-]         (-1,0) -- (-0.65,0);
\draw[very thick,newblue,->-] (-0.65,0) -- (-0.35,0);
\draw[very thick,->-]         (-0.35,0) -- (0,0);
\draw[very thick,->-]         (0,0) -- (0.35,0);
\draw[very thick,newblue,->-] (0.35,0) -- (1,0);
\foreach \x in {-3,-2,-1,0}{
\draw[very thick,newblue,->-]         (\x,0.35) -- ({\x+0.01},0.35);
}
\foreach \x in {-3,-2,-1,0}{
  \draw[fill=black] (\x,0) circle (0.0556);
}
\foreach \x in {-3,-2,-1,0}{
  \draw[very thick,newblue] (\x-0.35,0) arc[start angle=180,end angle=0,radius=0.35];
  \node[above,newblue] at (\x,0.35) {$\sigma$};
}
 \foreach \x in {-3,-2,-1,0} {
 \draw [very thick,->-](\x,-0.6) -- (\x,0); 
 }
 \node[below] at (-3, -0.6) {$R_0$};
 \node[below] at (-2,-0.6) {$R_0$};
 \node[below] at (-1,-0.6) {$R_0$};
 \node[below] at (-0, -0.6) {$R_0$};
\node[above,newblue] at (-3.675,0) {$\sigma$};
\node[above,newblue] at (-2.5,0) {$\sigma$};
\node[above,newblue] at (-1.5,0) {$\sigma$};
\node[above,newblue] at (-0.5,0) {$\sigma$};
\node[above,newblue] at (0.675,0) {$\sigma$};
\node[below] at (-3.2,0) {$x_i$};
\node[below] at (-2.8,0) {$x_{i+1}$};
\node[below] at (-2.2,0) {$x_{i+1}$};
\node[below] at (-1.8,0) {$x_{i+2}$};
\node[below] at (-1.2,0) {$x_{i+2}$};
\node[below] at (-0.8,0) {$x_{i+3}$};
\node[below] at (-0.2,0) {$x_{i+3}$};
\node[below] at (0.2,0) {$x_{i+4}$};
\end{scope}
\end{tikzpicture}
\end{split}
\\
=\quad \sum_{\{t_i\}}  \prod_i \delta_{t_i, x_ix_{i+1}}
\begin{split}
\begin{tikzpicture}
\begin{scope}[shift={(0,0)}, scale=1.8]
\draw [very thick,newblue,->-](-4,0) -- (-3,0); 
\draw [very thick,newblue,->-](-3,0) -- (-2,0);
\draw [very thick,newblue,->-](-2,0) -- (-1,0);
\draw [very thick,newblue,->-](-1,0) -- (0,0);
\draw [very thick,newblue,->-](0,0) -- (1,0);
 \draw[fill=black] (-3,-0.5) circle (0.0556);
 \draw[fill=black] (-2,-0.5) circle (0.0556);
 \draw[fill=black] (-1,-0.5) circle (0.0556);
 \draw[fill=black] (0,-0.5) circle (0.0556);
 \draw [very thick,->-](-3,-1) -- (-3,-0.5);
 \draw [very thick,->-](-3,-0.5) -- (-3,0);
  \draw [very thick,->-](-2,-1) -- (-2,-0.5);
 \draw [very thick,->-](-2,-0.5) -- (-2,0);
  \draw [very thick,->-](-1,-1) -- (-1,-0.5);
 \draw [very thick,->-](-1,-0.5) -- (-1,0);
  \draw [very thick,->-](0,-1) -- (0,-0.5);
 \draw [very thick,->-](0,-0.5) -- (0,0);
 \node[below] at (-3, -1) {$R_0$};
 \node[below] at (-2, -1) {$R_0$};
 \node[below] at (-1, -1) {$R_0$};
 \node[below] at (-0, -1) {$R_0$};
\node[above] at (-3.5, 0) {$\sigma$} ;
\node[above] at (-2.5, 0) {$\sigma$} ;
\node[above] at (-1.5, 0) {$\sigma$} ;
\node[above] at (-0.5, 0) {$\sigma$} ;
\node[above] at (0.5, 0) {$\sigma$} ;
\node[left] at (-3, -0.25) {$t_i$} ;
\node[left] at (-2, -0.25) {$t_{i+1}$} ;
\node[left] at (-1, -0.25) {$t_{i+2}$} ;
\node[left] at (0, -0.25) {$t_{i+3}$} ;
\end{scope}
\end{tikzpicture}
\end{split}
\end{align}

where in the last step we used relation
    \be
\begin{tikzpicture}[baseline=(base)]
\begin{scope}[shift={(0,0)}, scale=1.8]
\coordinate (base) at (-3,-0.5);

\draw[very thick,newblue] (-3.65,0) -- (-3.35,0);
\draw[very thick]         (-3.35,0) -- (-2.65,0);
\draw[very thick,newblue] (-2.65,0) -- (-2.35,0);

\draw[fill=black] (-3,0) circle (0.02);

\draw[very thick,newblue] (-3.35,0) arc[start angle=180,end angle=0,radius=0.35];
\node[above,newblue] at (-3,0.35) {$\sigma$};

\draw[very thick,->-] (-3,-1) -- (-3,0);

\node[below] at (-3,-1) {$R_0$};
\node[above,newblue] at (-3.5,0) {$\sigma$};
\node[above,newblue] at (-2.5,0) {$\sigma$};
\node[below] at (-3.2,0) {$x_i$};
\node[below] at (-2.8,0) {$x_{i+1}$};
  \draw[fill=black] (-3,0) circle (0.0556);
\end{scope}
\end{tikzpicture}
\;=\;
\begin{tikzpicture}[baseline=(base)]
\begin{scope}[shift={(0,0)}, scale=1.8]
\coordinate (base) at (-3,-0.5);
\draw[very thick,newblue] (-3.65,0) -- (-3.35,0);
\draw[very thick]         (-3.35,0) -- (-2.65,0);
\draw[very thick,newblue] (-2.65,0) -- (-2.35,0);
\draw[very thick,newblue] (-3.35,0) arc[start angle=180,end angle=0,radius=0.35];
\draw[very thick,->-] (-3,-1) -- (-3,-0.5);
\draw[very thick,->-] (-3,-0.5) -- (-3,0);
\draw[fill=black] (-3,-0.5) circle (0.0556);
\draw[newred, dashed, very thick]  (-3.1,0.25) -- (-2.5,0.25) -- (-2.5, -0.2) -- (-3.1, -0.2)-- (-3.1, 0.25);
\node[below] at (-3,-1) {$R_0$};
\node[above,newblue] at (-3.5,0) {$\sigma$};
\node[above,newblue] at (-2.5,0) {$\sigma$};
\node[above,newblue] at (-3,0.35) {$\sigma$};
\node[below] at (-3.2,0) {$x_i$};
\node[below] at (-2.8,0) {$x_{i+1}$};
\node[left] at (-3,-0.3) {$x_i x_{i+1}$};
\end{scope}
\end{tikzpicture}\;=\;
\begin{tikzpicture}[baseline=(base)]
\begin{scope}[shift={(0,0)}, scale=1.8]
\coordinate (base) at (-3,-0.5);
\draw[very thick,newblue] (-3.65,0) -- (-2.35,0);
\draw[fill=black] (-3,-0.5) circle (0.0556);
\draw[very thick,->-] (-3,-1) -- (-3,-0.5);
\draw[very thick,->-] (-3,-0.5) -- (-3,0);
\node[below] at (-3,-1) {$R_0$};
\node[above,newblue] at (-3.5,0) {$\sigma$};
\node[above,newblue] at (-2.5,0) {$\sigma$};
\node[left] at (-3,-0.3) {$x_i x_{i+1}$};
\end{scope}
\end{tikzpicture}
    \ee
which is obtained by an $F$-move in the red rectangle region. This $F$-move is trivial since $F_\sigma^{\sigma st}=1$ for $s, t\in \{1,f\}$. We conclude that 
\begin{equation}
    \D_\sigma |\cdots,x_i, x_{i+1},x_{i+2},\cdots\>_{1,f}=\sum_{\{t_i\}}\prod_i\delta_{t_i, x_ix_{i+1}}|\{t_i\}\>_{\sigma}=|\cdots, x_ix_{i+1},x_{i+1}x_{i+2}, x_{i+2}x_{i+3},\cdots\>_\sigma
\end{equation}
where the subscripts of the states indicate the sectors they belong to. To return to the tensor product space $\H_{1,f}$, we introduce a sequential circuit $U_\sigma$ that acts on the anyon chain from below:  
\begin{figure}[H]       
\be 
\begin{tikzpicture}
\begin{scope}[shift={(0,0)}, scale=1.7]
 \draw [very thick,->-](-3,1) -- (-3,1.6); 
  \draw [very thick,->-](-2,1) -- (-2,1.6); 
   \draw [very thick,->-](-1,1) -- (-1,1.6); 
 \draw [very thick,->-](0,1) -- (0,1.6); 
 \foreach \x in {-3,-2,-1,0} {
  \draw [very thick, newblue, ->-](\x,0) -- (\x,1); 
 }
 \foreach \x in {-4,-3,-2,-1,0}{
 \draw [very thick,newblue, ->-](\x,0) -- ({\x+1},1); 
 }
\node[above] at (-3.6, 0.5) {$\sigma$};
\node[above] at (-2.6, 0.5) {$\sigma$};
\node[above] at (-1.6, 0.5) {$\sigma$};
\node[above] at (-0.6, 0.5) {$\sigma$};
\node[above] at (0.4, 0.5) {$\sigma$};
 \draw [very thick,->-](-3,-0.6) -- (-3,0); 
  \draw [very thick,->-](-2,-0.6) -- (-2,0); 
   \draw [very thick,->-](-1,-0.6) -- (-1,0); 
 \draw [very thick,->-](0,-0.6) -- (0,0); 
 \draw[fill=blue,draw=black, very thick] (-3,1) circle (0.0556);
 \draw[fill=blue,draw=black, very thick] (-2,1) circle (0.0556);
 \draw[fill=blue,draw=black, very thick] (-1,1) circle (0.0556);
 \draw[fill=blue,draw=black, very thick] (0,1) circle (0.0556);
 \draw[fill=red, draw=black, very thick] (-3,0) circle (0.0556);
 \draw[fill=red, draw=black, very thick] (-2,0) circle (0.0556);
 \draw[fill=red, draw=black, very thick] (-1,0) circle (0.0556);
 \draw[fill=red, draw=black, very thick] (0,0) circle (0.0556);
 \node[left] at (-3, 0.4) {$\sigma$};
  \node[left] at (-2, 0.4) {$\sigma$};
   \node[left] at (-1, 0.4) {$\sigma$};
    \node[left] at (-0, 0.4) {$\sigma$};
 \node[below] at (-3, -0.6) {$R_0=1 \oplus f$};
 \node[below] at (-2, -0.6) {$R_0$};
 \node[below] at (-1, -0.6) {$R_0$};
 \node[below] at (-0, -0.6) {$R_0$};
\node[above] at (-3, 1.6) {$R_0$} ;
\node[above] at (-2, 1.6) {$R_0$} ;
\node[above] at (-1, 1.6) {$R_0$} ;
\node[above] at (0, 1.6) {$R_0$} ;
\end{scope}
\end{tikzpicture}
\ee
\caption{A sequential circuit $U_\sigma$ that maps $\cc{H}_\sigma\to \cc{H}_{1,f}$. \label{fig_sequential_Ising}}
    \end{figure}
The red and blue dots are unitary isomorphisms $i_1\oplus i_f:1\oplus f\to \sigma\otimes \sigma,~p_1\oplus p_f:\sigma\otimes \sigma \to 1\oplus f$. Therefore $U_\sigma$ is an isomorphism  $R_0^{\otimes L}\to R_0^{\otimes L}$, where $L$ is the length of the anyon chain. Since the diagonal lines of the sequential circuit are $\sigma$, we expect the circuit to act as $\H_{\sigma}\to \H_{1,f}$. We now derive the action of the circuit $U_\sigma$ on the sector $\H_\sigma$: 

\begin{align}
\begin{split}
\begin{tikzpicture}
\begin{scope}[shift={(0,0)}, scale=1.7]
 \foreach \x in {-4,-3,-2,-1,0}{
 \draw [very thick,newblue,->-](\x,0) -- ({\x+1},1);
   \draw [very thick,newblue,->-](\x,1.6) -- ({\x+1},1.6); 
   \node[above] at ({\x+0.4}, 0.5) {$\sigma$};
 }
  \foreach \x in {-3,-2,-1,0}{
  \draw [very thick,newblue,->-](\x,0) -- (\x,1); 
   \draw [very thick,->-](\x,1) -- (\x,1.6); 
  \draw [very thick,->-](\x,-0.6) -- (\x,0);
   \node[left] at (\x, 0.4) {$\sigma$};
    \draw[fill=red, draw=black, very thick] (\x,0) circle (0.0556);
 }
 \node[below] at (-3, -0.6) {$R_0=1 \oplus f$};
 \node[below] at (-2, -0.6) {$R_0$};
 \node[below] at (-1, -0.6) {$R_0$};
 \node[below] at (-0, -0.6) {$R_0$};
\foreach \x in {-3.5,-2.5,-1.5,-0.5, 0.5}{
  \node[above] at (\x,1.6) {$\sigma$};
}
  \node[left] at (-3,1.3) {$t_i$};
    \node[left] at (-2,1.3) {$t_{i+1}$};
      \node[left] at (-1,1.3) {$t_{i+2}$};
        \node[left] at (0,1.3) {$t_{i+3}$};
\foreach \x in {0,1,2,3}{
  \draw[dashed, very thick, newred]
    ({-3.3+\x},1.8) -- ({-3.3+\x},0.8) --
    ({-2.7+\x},0.8) -- ({-2.7+\x},1.8) --
    ({-3.3+\x},1.8);
}
\end{scope}
\end{tikzpicture}
\end{split}\label{eq_ising_circuit_1}\\
\vspace{2cm}=
\sum_{\{x_i\}}\prod_i H_{t_i x_i}
\begin{split}
\begin{tikzpicture}
\begin{scope}[shift={(0,0)}, scale=2]
\draw [very thick,->-](-4,1.6) -- (-3.35,1.6); 
\draw [very thick,newblue,->-](-3.35,1.6) -- (-2.65,1.6); 
\draw [very thick,->-](-2.65,1.6) -- (-2.35,1.6);
\draw [very thick,newblue,->-](-2.35,1.6) -- (-1.65,1.6);
\draw [very thick,->-](-1.65,1.6) -- (-1.35,1.6);
\draw [very thick,newblue,->-](-1.35,1.6) -- (-0.65,1.6);
\draw [very thick,->-](-0.65,1.6) -- (0,1.6);
\foreach \x in {-3,-2,-1}{
  \draw[very thick,newblue,->-] (\x,1.25) arc[start angle=270,end angle=180,radius=0.35];
  \draw[very thick,newblue,->-] (\x,1.25) arc[start angle=270,end angle=360,radius=0.35];
  \node[below left]  at ({\x-0.25},1.42) {$\sigma$};
  \node[below right] at ({\x+0.25},1.42) {$\sigma$};
}
\node[above] at (-3.5,1.6) {$x_i$};
\node[above] at (-2.5,1.6) {$x_{i+1}$};
\node[above] at (-1.5,1.6) {$x_{i+2}$};
\node[above] at (-0.5,1.6) {$x_{i+3}$};
\foreach \x in {-3,-2,-1}{
  \draw[very thick,->-] (\x,0.8) -- (\x,1.25); 
  \node[above] at (\x,1.6) {$\sigma$};
  \node[below] at (\x,0.8) {$R_0$}; 
     \draw[fill=red, draw=black, very thick] (\x,1.25) circle (0.0556);
}
\end{scope}
\end{tikzpicture}
\end{split}
\label{eq_ising_circuit_2}\\
=\sum_{\{x_i\}}\prod_i H_{t_i x_i} 
\begin{split}
\begin{tikzpicture}
\begin{scope}[shift={(0,0)}, scale=2]
\foreach \x in {-4,-3,-2,-1}{
\draw [very thick,->-](\x,0) -- ({\x+1},0);
}
 \foreach \x in {-3,-2,-1}{
 \draw [very thick,->-](\x,-0.5) -- (\x,0); 
  \node[below] at (\x, -0.5) {$R_0$};
    \draw[fill=black] (\x,0) circle (0.0556);
 }
\node[above] at (-3.5, 0) {$x_i$} ;
\node[above] at (-2.5, 0) {$x_{i+1}$} ;
\node[above] at (-1.5, 0) {$x_{i+2}$} ;
\node[above] at (-0.5, 0) {$x_{i+3}$} ;
\end{scope}
\end{tikzpicture}
\end{split}\label{eq_ising_circuit_3}
\end{align} 

Here in the first step we applied $F$-moves in the regions indicated by the red rectangles and used $(F_\sigma^{\sigma\sigma\sigma})_{st}=H_{st}$ where $H$ is the Hadamard gate. In the second step we used the following relation:
\be
    \begin{split}
\begin{tikzpicture}
\begin{scope}[shift={(0,0)}, scale=2]
\draw [very thick,->-](-3.65,1.6) -- (-3.35,1.6); 
\draw [very thick,newblue,->-](-3.35,1.6) -- (-2.65,1.6); 
\draw [very thick,->-](-2.65,1.6) -- (-2.35,1.6);
\foreach \x in {-3}{
  \draw[very thick,newblue,->-] (\x,1.25) arc[start angle=270,end angle=180,radius=0.35];
  \draw[very thick,newblue,->-] (\x,1.25) arc[start angle=270,end angle=360,radius=0.35];
  \node[below left]  at ({\x-0.25},1.42) {$\sigma$};
  \node[below right] at ({\x+0.25},1.42) {$\sigma$};
}
\node[above] at (-3.5,1.6) {$x_i$};
\node[above] at (-2.5,1.6) {$x_{i+1}$};
\foreach \x in {-3}{
  \draw[very thick,->-] (\x,0.8) -- (\x,1.25); 
  \node[above] at (\x,1.6) {$\sigma$};
  \node[below] at (\x,0.8) {$R_0$}; 
}
  \draw[fill=red, draw=black, very thick] (-3,1.25) circle (0.0556);
\draw[red, dashed, very thick] (-3.5,1.7)--(-2.5,1.7)--(-2.5,1.5)--(-3.5,1.5)--(-3.5,1.7);
\end{scope}
\end{tikzpicture}
\end{split} 
\xrightarrow{F_{x_{i+1}}^{x_i\sigma\sigma }=1}
    \begin{split}
\begin{tikzpicture}
\begin{scope}[shift={(0,0)}, scale=2]
\draw [very thick,->-](-3.65,1.6) -- (-3,1.6); 
\draw [very thick,->-](-3,1.6) -- (-2.35,1.6);
\draw [very thick,->-](-3,1.35) -- (-3,1.6);
  \draw[very thick,newblue,->-] (-3,1.05) arc[start angle=270,end angle=90,radius=0.15];
    \draw[very thick,newblue,->-] (-3,1.05) arc[start angle=-90,end angle=90,radius=0.15];
\node[above] at (-3.5,1.6) {$x_i$};
\node[above] at (-2.5,1.6) {$x_{i+1}$};
\foreach \x in {-3}{
  \draw[very thick,->-] (\x,0.8) -- (\x,1.05); 
  \node[left] at (\x,1.4) {$x_ix_{i+1}$};
  \node[below] at (\x,0.8) {$R_0$}; 
}
  \draw[fill=red, draw=black, very thick] (-3,1.05) circle (0.0556);
\end{scope}
\end{tikzpicture}
\end{split} 
\;=\; 
    \begin{split}
\begin{tikzpicture}
\begin{scope}[shift={(0,0)}, scale=2]
\draw [very thick,->-](-3.65,1.6) -- (-3,1.6); 
\draw [very thick,->-](-3,1.6) -- (-2.35,1.6);
\draw [very thick,->-](-3,0.8) -- (-3,1.6);
\node[above] at (-3.5,1.6) {$x_i$};
\node[above] at (-2.5,1.6) {$x_{i+1}$};
  \node[below] at (-3,0.8) {$R_0$}; 
    \draw[fill=black] (-3,1.6) circle (0.0556);
\end{scope}
\end{tikzpicture}
\end{split} 
\ee
We conclude
\begin{equation}
    U_\sigma|\{t_i\}\>_\sigma=\sum_{\{x_i\}}\prod_i H_{t_ix_i} |\{x_i\}\>_{1,f}\,.
\end{equation}
We then define a modified symmetry operator $\sf{D}_\sigma:=U_\sigma\circ \D_\sigma:\H_{1,f}\to \H_{1,f}$, 
\begin{equation}\label{eq_sf_sigma_action}
    \sf{D}_\sigma|\cdots, x_i, x_{i+1}, x_{i+2},\cdots\>_{1,f}=\prod_k H_k |\cdots, x_ix_{i+1}, x_{i+1}x_{i+2}, x_{i+2}x_{i+3},\cdots\>_{1,f}.
\end{equation}
with the Hadamard gates $H_k$ acting on site $k$. Notice that this precisely agrees with the Kramers-Wannier
duality operator defined in various papers in the literature~\cite{Li:2023ani,Tantivasadakarn:2021vel,Aasen:2016dop}. 

Our derivation of this operator provides a new perspective: The Kramers-Wannier
duality operator can be divided into the product of two operators, one is $\D_\sigma$ which is defined by the fusing with anyon string as in the standard anyon chain model. The other operator is the product of the Hadamard gates, which comes from the action of a sequential circuit from below the anyon chain and is designed to map the image of $\D_\sigma$ back to the tensor-product Hilbert space $\H_{1,f}$. Since actions from above the anyon chain commute with actions from below the anyon chain, we have 
\begin{equation}
    \sf{D}_\sigma \sf{D}_\sigma=U_\sigma^2\times\D_\sigma \D_\sigma=U_\sigma^2(\D_1+\D_f). \label{eq_Ising_lattice_fusion_rule}
\end{equation}

Given the expression \Cref{eq_sf_sigma_action}, it is known that $U_\sigma^2=T$ is translation by one lattice site to the left. Here we provide a diagrammatic calculation. Observe that the sequential circuit $U_\sigma$ may be redrawn as follows:
\be 
\begin{split}
\begin{tikzpicture}
\begin{scope}[shift={(0,0)}, scale=1]
\node[left] at (-4.5,1.5) {$\cdots$};
\node[right] at (4,1.5) {$\cdots$};
\draw[very thick,newblue, ->-] 
  (-4.2,0) .. controls (-4.2,0.7) and (-3.8,1.2) .. (-3.3,1.6)
  .. controls (-2.9,1.9) and (-2.8,2.3) .. (-2.8,3.0);
\draw[very thick,newblue, ->-] (-2.5,0) -- (-2.5,3.0);
\draw[very thick,newblue, ->-] 
  (-2.1,0) .. controls (-2.1,0.7) and (-1.7,1.2) .. (-1.2,1.6)
  .. controls (-0.8,1.9) and (-0.7,2.3) .. (-0.7,3.0);
\draw[very thick,newblue, ->-] (-0.4,0) -- (-0.4,3.0);
\draw[very thick,newblue, ->-] 
  (0.0,0) .. controls (0.0,0.7) and (0.4,1.2) .. (0.9,1.6)
  .. controls (1.3,1.9) and (1.4,2.3) .. (1.4,3.0);
\draw[very thick,newblue, ->-] (1.7,0) -- (1.7,3.0);
\draw[very thick,newblue, ->-] 
  (2.1,0) .. controls (2.1,0.7) and (2.5,1.2) .. (3.0,1.6)
  .. controls (3.4,1.9) and (3.5,2.3) .. (3.5,3.0);
\node[left] at (-3.55,1.55) {$\sigma$};
\node[left] at (-1.45,1.55) {$\sigma$};
\node[left] at (0.65,1.55) {$\sigma$};
\node[left] at (2.75,1.55) {$\sigma$};
\node[above] at (-2.5,3.0) {$\sigma$};
\node[below] at (-2.5,0) {$\sigma$};
\node[above] at (-0.4,3.0) {$\sigma$};
\node[below] at (-0.4,0) {$\sigma$};
\node[above] at (1.7,3.0) {$\sigma$};
\node[below] at (1.7,0) {$\sigma$};
\node[above] at (-2.8,3.0) {$\sigma$};
\node[below] at (-2.2,0) {$\sigma$};
\node[above] at (-0.7,3.0) {$\sigma$};
\node[below] at (-0.1,0) {$\sigma$};
\node[above] at (1.4,3.0) {$\sigma$};
\node[below] at (2,0) {$\sigma$};
\end{scope}
\end{tikzpicture}
\end{split}
\ee
where we have chosen an isomorphism $R_0=1 \oplus f\simeq \sigma \otimes \sigma$. Then $U_{\sigma}^2$ can be drawn as the following sequential circuit, which shows that $U_{\sigma}^2=T$ is translation by one lattice site to the left:
\begin{figure}[H]
\centering
\begin{tikzpicture}
\begin{scope}[shift={(0,0)}]
\node[left] at (-5,3) {$\cdots$};
\node[right] at (4,3) {$\cdots$};
\draw[newblue, very thick,->-] (-4.6,0) -- (-4.6,3.0);
\draw[newblue, very thick,->-] 
  (-4.3,0) .. controls (-4.3,0.7) and (-3.8,1.2) .. (-3.3,1.6)
  .. controls (-2.9,1.9) and (-2.8,2.3) .. (-2.8,3.0);

\draw[newblue, very thick,->-] (-2.5,0) -- (-2.5,3.0);

\draw[newblue, very thick,->-] 
  (-2.2,0) .. controls (-2.2,0.7) and (-1.7,1.2) .. (-1.2,1.6)
  .. controls (-0.8,1.9) and (-0.7,2.3) .. (-0.7,3.0);

\draw[newblue, very thick,->-] (-0.4,0) -- (-0.4,3.0);

\draw[newblue, very thick,->-] 
  (-0.1,0) .. controls (-0.1,0.7) and (0.4,1.2) .. (0.9,1.6)
  .. controls (1.3,1.9) and (1.4,2.3) .. (1.4,3.0);

\draw[newblue, very thick,->-] (1.7,0) -- (1.7,3.0);

\draw[newblue, very thick,->-] 
  (2.0,0) .. controls (2.0,0.7) and (2.5,1.2) .. (3.0,1.6)
  .. controls (3.4,1.9) and (3.5,2.3) .. (3.5,3.0);

\draw[newblue, very thick,->-] (3.5,3) -- (3.5,6);
\node[left] at (-3.55,1.55) {$\sigma$};
\node[left] at (-1.45,1.55) {$\sigma$};
\node[left] at (0.65,1.55) {$\sigma$};
\node[left] at (2.75,1.55) {$\sigma$};
\foreach \x in {-4.6,-4.3,-2.5,-0.4, 1.7, -2.2,-0.1,2.0}{
\node[below] at (\x,0) {$\sigma$};
}
\draw[newblue, very thick,->-] 
  (-4.3-0.3,3.0) .. controls (-4.3-0.3,3.7) and (-3.8-0.3,4.2) .. (-3.3-0.3,4.6)
  .. controls (-2.9-0.3,4.9) and (-2.8-0.3,5.3) .. (-2.8-0.3,6.0);

\draw[newblue, very thick,->-] (-2.5-0.3,3.0) -- (-2.5-0.3,6.0);

\draw[newblue, very thick,->-] 
  (-2.2-0.3,3.0) .. controls (-2.2-0.3,3.7) and (-1.7-0.3,4.2) .. (-1.2-0.3,4.6)
  .. controls (-0.8-0.3,4.9) and (-0.7-0.3,5.3) .. (-0.7-0.3,6.0);

\draw[newblue, very thick,->-] (-0.4-0.3,3.0) -- (-0.4-0.3,6.0);

\draw[newblue, very thick,->-] 
  (-0.1-0.3,3.0) .. controls (-0.1-0.3,3.7) and (0.4-0.3,4.2) .. (0.9-0.3,4.6)
  .. controls (1.3-0.3,4.9) and (1.4-0.3,5.3) .. (1.4-0.3,6.0);

\draw[newblue, very thick,->-] (1.7-0.3,3.0) -- (1.7-0.3,6.0);

\draw[newblue, very thick,->-] 
  (2.0-0.3,3.0) .. controls (2.0-0.3,3.7) and (2.5-0.3,4.2) .. (3.0-0.3,4.6)
  .. controls (3.4-0.3,4.9) and (3.5-0.3,5.3) .. (3.5-0.3,6.0);

\node[left] at (-3.55-0.3,4.55) {$\sigma$};
\node[left] at (-1.45-0.3,4.55) {$\sigma$};
\node[left] at (0.65-0.3,4.55) {$\sigma$};
\node[left] at (2.75-0.3,4.55) {$\sigma$};

\node[above] at (-2.5-0.3,6.0) {$\sigma$};
\node[above] at (-0.4-0.3,6.0) {$\sigma$};
\node[above] at (1.7-0.3,6.0) {$\sigma$};
\node[above] at (-2.8-0.3,6.0) {$\sigma$};
\node[above] at (-0.7-0.3,6.0) {$\sigma$};
\node[above] at (1.4-0.3,6.0) {$\sigma$};
\node[above] at (3.5-0.3,6.0) {$\sigma$};
\node[above] at (3.8-0.3,6.0) {$\sigma$};
\draw[newred, dashed, thick] (-2.65, -0.4) -- (-2,-0.4) -- (-2, 0.1) -- (-2.65,0.1) -- (-2.65, -0.4); 
\draw[newred, dashed, thick] (-1.15, 5.95) -- (-0.5,5.95) -- (-0.5, 6.45) -- (-1.15, 6.45) -- (-1.15, 5.95); 
\end{scope}
\end{tikzpicture}
\caption{The circuit $U_{\sigma}^2$. It is clear that $U_{\sigma}^2=T$ is translation by a lattice site to the left. The red rectangles show a site before and after translation. }
\label{fig_Ug_two_layers}
\end{figure}
We conclude that the Ising fusion category can indeed be realized  on a tensor-product Hilbert space with on-site dimension 2, up to lattice translation. The mixing with lattice translation now acquires a new interpretation: it comes from a sequential circuit that acts on the anyon chain from below. 

\paragraph{General Strategy.}
This example suggests the following strategy for realizing a general weakly integral fusion categorical symmetry on a tensor-product Hilbert space: 
\begin{itemize}
    \item Use the regular object $R_0$ to build an anyon chain with Hilbert space $\cc{H}=\oplus_{g\in E} \cc{H}_g$, with each $\H_g$ being a tensor-product Hilbert space. 
    \item For each $g\in E$, construct a sequential circuit $U_g$ acting on the anyon chain from below such that it defines an isomorphism $\cc{H}_g\simeq \cc{H}_0$. 
    \item Then for each simple object $x\in \C_g$, define a modified symmetry operator $\sf{D}_x:=U_g\circ \D_x: \cc{H}_0\to \cc{H}_0$. Then we have modified fusion rules for $x\in \C_g,~y\in \C_h$:
    \begin{equation}
        \sf{D}_x\circ \sf{D}_y=U_g\circ U_h\circ U_{gh}^{-1} \sum_{z\in \C_{gh}} N^z_{x,y}\sf{D}_z=U_{(g,h)} \sum_{z\in \C_{gh}} N^z_{x,y}\sf{D}_z
    \label{eq: lattice fusion rules}
    \end{equation}
    where $U_{(g,h)}:=U_g\circ U_h\circ U_{gh}^{-1}: \H_0\to \H_0$ is now a QCA on $\H_0$. By construction, the QCA refinements $U_{(g,h)}$ only depend on the sector labels $g,h$, and not specific anyons in the sectors. According to our~\Cref{thm_main},  we  expect 
    \begin{equation}
        \ind(U_{(g,h)})=\sqrt{\frac{n_gn_h}{n_{gh}}} \frac{m_gm_h}{m_{gh}},
    \label{eq: q expectation}
    \end{equation}
    for some rational numbers $m_g$ that also only depend on the sectors. 
\end{itemize}

We now return to the construction for a general weakly integral fusion category following this strategy. We will then compute the indices of the QCAs $U_{(g,h)}$ in~\Cref{sec_index_computation} and confirm that the above expectation is indeed true.

\subsubsection{General Weakly Integral Categories}
To complete the construction, it suffices to construct the sector-change sequential circuits $U_g: \H_g\to \H_0$. Below we will provide two constructions. The first one applies to a subclass of weakly integral fusion categories and has on-site Hilbert space dimension $\dim(\C_0)$. The second one applies to arbitrary weakly integral fusion categories but has a larger on-site dimension of $\dim(\C_0)^2$.

We begin by defining certain objects in a general weakly integral fusion category. Let $\C=\oplus_{g\in E} \C_g$ be a weakly integral fusion category, and for every simple $x$ we have $d_x=a_x\sqrt{n_g}$ as before.  Define for every $g\in E$ an object 
\begin{equation}\label{Rgdef}
    R_g:=\bigoplus_{x\in \irr(\C_g)}a_x x
\end{equation}
which is the $g$-sector analog of the regular object $R_0$. It
 has quantum dimension $d_{R_g}=\sqrt{n_g}\sum_{x\in \irr(\C_g)}a_x^2$. For $x\in \C_g$  we have 
\begin{align}
    x\otimes R_h&=\bigoplus_{y\in \irr(\C_h)}a_y x\otimes y=\bigoplus_{y\in \irr(\C_h), z\in \irr(\C_{gh})}a_y N^z_{xy}z=\bigoplus_{y\in \irr(\C_h), z\in \irr(\C_{gh})}a_y N^y_{x^*z}z\nonumber\\
    &= \bigoplus_{z\in \irr(\C_{gh})} \frac{1}{\sqrt{n_h}} d_xd_z z=\sqrt{\frac{n_g n_{gh}}{n_h}}a_x R_{gh}\,.
\end{align}
One can also check that $R_h\otimes x=x\otimes R_h$. In particular we have 
\begin{align}\label{xReqs}
    &x\otimes R_g=R_g\otimes x=a_x R_0,\qquad x\otimes R_0=R_0\otimes x=n_g a_x R_g,\\
    &
    R_g\otimes R_g=\bigoplus_{x\in \irr(\C_g)}a_x x\otimes R_g=\bigoplus_{x\in \irr(\C_g)} a_x^2 R_0=\frac{\dim(\C_0)}{n_g}R_0\,.
\end{align}
\begin{construction} \label{construction_special_case}
    Take $\rho=R_0$ and build an anyon chain as before. Assume that for each $g\in E$, there exists an object $x_g\in \C_g$ with $d_{x_g}=\sqrt{n_g}$ (i.e., $a_{x_g}=1$). Then we define a sequential circuit $U_g$ as follows. 
\begin{figure}[H]       
$$
\begin{tikzpicture}
\begin{scope}[shift={(0,0)}, scale=1.8]
 \draw [very thick,->-](-3,1) -- (-3,1.6); 
  \draw [very thick,->-](-2,1) -- (-2,1.6); 
   \draw [very thick,->-](-1,1) -- (-1,1.6); 
 \draw [very thick,->-](0,1) -- (0,1.6); 
 \foreach \x in {-3,-2,-1,0}{
  \draw [very thick,newblue,->-](\x,0) -- (\x,1); 
 }
 \foreach \x in {-4,-3,-2,-1,0} {
  \draw [very thick,newblue,->-](\x,0) -- ({\x+1},1); 
 }
\node[above] at (-3.6, 0.5) {$x_g$};
\node[above] at (-2.6, 0.5) {$x_g$};
\node[above] at (-1.6, 0.5) {$x_g$};
\node[above] at (-0.6, 0.5) {$x_g$};
\node[above] at (0.4, 0.5) {$x_g$};
 \draw [very thick,->-](-3,-0.6) -- (-3,0); 
  \draw [very thick,->-](-2,-0.6) -- (-2,0); 
   \draw [very thick,->-](-1,-0.6) -- (-1,0); 
 \draw [very thick,->-](0,-0.6) -- (0,0); 
 \draw[fill=blue,draw=black, very thick] (-3,1) circle (0.0556);
 \draw[fill=blue,draw=black, very thick] (-2,1) circle (0.0556);
 \draw[fill=blue,draw=black, very thick] (-1,1) circle (0.0556);
 \draw[fill=blue,draw=black, very thick] (0,1) circle (0.0556);
 \draw[fill=red, draw=black, very thick] (-3,0) circle (0.0556);
 \draw[fill=red, draw=black, very thick] (-2,0) circle (0.0556);
 \draw[fill=red, draw=black, very thick] (-1,0) circle (0.0556);
 \draw[fill=red, draw=black, very thick] (0,0) circle (0.0556);
 \node[left] at (-3, 0.4) {$R_g$};
  \node[left] at (-2, 0.4) {$R_g$};
   \node[left] at (-1, 0.4) {$R_g$};
    \node[left] at (-0, 0.4) {$R_g$};
 \node[below] at (-3, -0.6) {$R_0$};
 \node[below] at (-2, -0.6) {$R_0$};
 \node[below] at (-1, -0.6) {$R_0$};
 \node[below] at (-0, -0.6) {$R_0$};
\node[above] at (-3, 1.6) {$R_0$} ;
\node[above] at (-2, 1.6) {$R_0$} ;
\node[above] at (-1, 1.6) {$R_0$} ;
\node[above] at (0, 1.6) {$R_0$} ;
\end{scope}
\end{tikzpicture}
$$   
\caption{A unitary sequential circuit $U_g$ that maps $\cc{H}_g\to \cc{H}_0$. \label{fig:sequentialc_1}}
    \end{figure}
    
Notice that since $a_{x_g}=1$, we have $R_g\otimes x_g\simeq R_0\simeq x_g\otimes R_g$ by~\Cref{xReqs}. We then take the red/blue dots to be unitary isomorphisms. Then clearly $U_g$ is a unitary. Since the diagonal lines are $x_g\in \C_g$, this circuit maps $\cc{H}_g\to \cc{H}_0$. The condition of this construction is satisfied by Metaplectic categories $SO(M)_2$ for odd square-free $M$ and Tambara-Yamagami categories $\TY(A)$ for $|A|$ square free. 
\end{construction}
\begin{remark}
   For a general weakly integral fusion category, we can replace $R_0$ by $NR_0$ where $N$ is the minimal positive integer such that for any $g\in E$, there exists $x_g\in \C_g$ with $a_{x_g}|N$. Then we can replace the vertical $R_g$ by $(N/a_{x_g}) R_g$. This works because we have isomorphisms $(N/a_{x_g}) R_g\otimes x_g\simeq NR_0\simeq x_g\otimes (N/a_{x_g}) R_g$. Alternatively, we have the following more uniform construction.
\end{remark}
\begin{construction} \label{construction_general_case}
    Now we do not make any assumption about the weakly integral fusion category. We take $\rho=\dim(\C_0)R_0$. This will lead to an anyon chain with Hilbert space $\cc{H}=\oplus_{g\in E}\cc{H}_g$ and $\cc{H}_g$ is a tensor product space with on-site dimension $\dim(\C_0)^2$. We then define the sequential circuit as in \Cref{fig:sequentialc_2}. 
\begin{figure}[ht]       
$$
\begin{tikzpicture}
\begin{scope}[shift={(0,0)}, scale=1.8]
 \draw [very thick,->-](-3,1) -- (-3,1.6); 
  \draw [very thick,->-](-2,1) -- (-2,1.6); 
   \draw [very thick,->-](-1,1) -- (-1,1.6); 
 \draw [very thick,->-](0,1) -- (0,1.6); 

 \foreach \x in {-4,-3,-2,-1,0}{
 \draw [very thick,newblue,->-](\x,0) -- ({\x+1},1); 
 }
\node[above] at (-3.6, 0.5) {$R_g$};
\node[above] at (-2.6, 0.5) {$R_g$};
\node[above] at (-1.6, 0.5) {$R_g$};
\node[above] at (-0.6, 0.5) {$R_g$};
\node[above] at (0.4, 0.5) {$R_g$};
\foreach \x in {-3,-2,-1,0}{
  \draw [very thick,->-](\x,-0.6) -- (\x,0); 
  \draw [very thick,newblue, ->-](\x,0) -- (\x,1); 
    \draw[fill=blue,draw=black, very thick] (\x,1) circle (0.0556);
  \draw[fill=red, draw=black, very thick] (\x,0) circle (0.0556);
}
 \node[left] at (-2.95, 0.3) {\small $n_gR_g$};
  \node[left] at (-1.95, 0.3) {\small $n_gR_g$};
   \node[left] at (-0.95, 0.3) {\small $n_gR_g$};
    \node[left] at (0.05, 0.3) {\small $n_gR_g$};
 \node[below] at (-3, -0.6) {\small $\dim(\C_0)R_0$};
 \node[below] at (-2, -0.6) {\small $\dim(\C_0)R_0$};
 \node[below] at (-1, -0.6) {\small $\dim(\C_0)R_0$};
 \node[below] at (-0, -0.6) {\small $\dim(\C_0)R_0$};
\node[above] at (-3, 1.6) {\small $\dim(\C_0)R_0$} ;
\node[above] at (-2, 1.6) {\small $\dim(\C_0)R_0$} ;
\node[above] at (-1, 1.6) {\small $\dim(\C_0)R_0$} ;
\node[above] at (0, 1.6) {\small $\dim(\C_0)R_0$} ;
\end{scope}
\end{tikzpicture}
$$   
\caption{A unitary sequential circuit $U_g$ that maps $\cc{H}_g\to \cc{H}_0$. \label{fig:sequentialc_2}}
    \end{figure}
   The red and blue dots are isomorphisms
    \begin{equation}\label{eq_gamma_delta}
        \gamma_g: \dim(\C_0)R_0\to n_g R_g\otimes R_g, \qquad
        \delta_g: R_g\otimes n_gR_g\to \dim(\C_0) R_0.
    \end{equation}
 Then clearly $U_g$ is an isomorphism and maps $\cc{H}_g\to \cc{H}_0$.
  We can take $\gamma_g$ and $\delta_g$ to be unitaries so that $U_g$ is unitary.
\end{construction}
\begin{remark}
    Notice that the sequential circuits in both constructions take the form of "anyon translation". Namely, in \Cref{construction_special_case}, the anyon $x_g$ is translated to the left by one site (recall our circuit acts from below the anyon chain); in \Cref{construction_general_case}, the anyon $R_g$ is translated to the left by one site. This intuitive picture can be turned into a precise notion called generalized translation in \cite{Jones:2023imy}. We will discuss in more details the relation between our sequential circuits and generalized translation of \cite{Jones:2023imy} in \Cref{sec_relation_to_gen_translation}.
\end{remark}
\begin{remark}\label{rmk_flexible}
    The above two constructions are not the only possibilities. For specific examples one can be more flexible and construct other sequential circuits that do the job. The only requirement is that the anyon chain input is $\rho=N R_0$, i.e., some multiple of $R_0$ and there exist pairs $X_g, Y_g \in \C_g$ for every $g\in E$ such that $NR_0\simeq X_g\otimes Y_g\simeq Y_g\otimes X_g$, then one can take the sequential circuit $U_g$ to be ``translating $Y_g$ by one site". 
\end{remark}
\begin{remark}
    In fact the modified operators $\sf{D}_x=U_g\circ \D_x,~x\in \irr(\C_g)$ fix each component $\H_h$ in the direct sum $\H=\oplus_{h\in E} \H_h$. Hence more generally one could consider the restriction of the operators $\{\sf{D}_x: x\in \irr(\C)\}$ to the subspace $\H_h$, which would give a QCA-refined realization  on the tensor-product Hilbert space $\H_h$.
\end{remark}
\subsubsection{Some Technical Aspects of the Constructions}
Here we discuss some technical details of the general constructions above that are not visible in the simple Ising category example. We highlight some subtleties one needs to pay attention to if one wishes to concretely implement the constructions for more complicated weakly integral categories such as the Metaplectic categories.  We will follow \Cref{construction_special_case} for illustration.  

\paragraph{Choice of isomorphisms $\delta, \gamma$.} In the definition of the sequential circuit $U_g$, a pair of isomorphisms 
\begin{equation}
    \gamma_g: R_0\to R_g\otimes x_g,\quad \delta_g:  x_g\otimes R_g\to R_0. 
\end{equation}
must be chosen. Here we explain the meaning of making such a choice. We can expand these isomorphisms in a basis consisting of compositions of inclusion/projection of simples as below:
\begin{equation}
 \begin{split}
\begin{tikzpicture}
\begin{scope}[shift={(0,0)}, scale=2]
\node[left] at (0,0.5) {$\gamma_g$};
\draw[very thick, ->-] (0,0) -- (0,0.5);
\draw[very thick, ->-, newblue] (0,0.5) -- (0.5,1);
\draw[very thick,newblue, ->-] (0,0.5) -- (0,1);
 \draw[fill=red, draw=black, very thick] (0,0.5) circle (0.05); 
 \node[below] at (0,0){$R_0$};
 \node[above] at (0.5,1){$x_g$};
 \node[above] at (0,1) {$R_g$};
\end{scope}
\end{tikzpicture}
\end{split}=\sum_{t,s, \mu,\nu}(\gamma_g)^{st}_{\mu\nu} 
\begin{split}
\begin{tikzpicture}
\begin{scope}[scale=2.2]
\draw[very thick, ->-] (0,0) -- (0,0.25);
\draw[very thick, ->-] (0,0.25) -- (0,0.5);
\draw[very thick, ->-, newblue] (0,0.5) -- (0.5,1);
\draw[very thick,newblue, ->-] (0,0.5) -- (0,0.75);
\draw[very thick,newblue, ->-] (0,0.75) -- (0,1);
 \node[below] at (0,0){$R_0$};
 \node[above] at (0.5,1){$x_g$};
 \node[above] at (0,1) {$R_g$};
\draw[fill=white,draw=black, very thick] (0-0.04,0.25-0.04) rectangle ++ (0.08,0.08);
\node[left] at (0,0.23) {$p_s^\mu$};
\node[left] at (0, 0.4) {$s$};
\node[left] at (0, 0.6){$t$}; 
\draw[fill=white,draw=black, very thick] (0-0.04,0.75-0.04) rectangle ++ (0.08,0.08);
\node[left] at (0, 0.75) {$i_t^\nu$};
\end{scope}
\end{tikzpicture}
\end{split},
\quad 
    \begin{split}
\begin{tikzpicture}
\begin{scope}[shift={(0,0)}, scale=2]
\node[right] at (0,0.5) {$\delta_g$};
\draw[very thick,newblue, ->-] (0,0) -- (0,0.5);
\draw[very thick, ->-, newblue] (-0.5,0) -- (0,0.5);
\draw[very thick, ->-] (0,0.5) -- (0,1);
 \draw[fill=blue, draw=black, very thick] (0,0.5) circle (0.05); 
 \node[below] at (0,0){$R_g$};
 \node[below] at (-0.5,0){$x_g$};
 \node[above] at (0,1) {$R_0$};
\end{scope}
\end{tikzpicture}
\end{split}=\sum_{t,s, \mu,\nu}(\delta_g)^{st}_{\mu\nu} 
\begin{split}
\begin{tikzpicture}
\begin{scope}[scale=2.2]
\draw[very thick,newblue, ->-] (0,0) -- (0,0.25);
\draw[very thick,newblue, ->-] (0,0.25) -- (0,0.5);
\draw[very thick, ->-, newblue] (-0.5,0) -- (0,0.5);
\draw[very thick, ->-] (0,0.5) -- (0,0.75);
\draw[very thick, ->-] (0,0.75) -- (0,1);
 \node[below] at (0,0){$R_g$};
 \node[below] at (-0.5,0){$x_g$};
 \node[above] at (0,1) {$R_0$};
\draw[fill=white,draw=black, very thick] (0-0.04,0.25-0.04) rectangle ++ (0.08,0.08);
\node[right] at (0,0.25) {$p_t^\mu$};
\node[right] at (0, 0.4) {$t$};
\node[right] at (0, 0.6){$s$};
\draw[fill=white,draw=black, very thick] (0-0.04,0.75-0.04) rectangle ++ (0.08,0.08);
\node[right] at (0, 0.75) {$i_s^\nu$};
\end{scope}
\end{tikzpicture}
\end{split}
\end{equation}
Here $t\in R_g$ label simple summads of $R_g$, $s\in R_0$ label simple summads of $R_0$, and $i_t^\nu, p_s^{\mu}, i_s^\nu, p_t^\mu$ are  inclusions/projections. We have assumed the fusion category is multiplicity-free. The summations run over configurations such that the trivalent vertex Hilbert space is nonzero, i.e., $\Hom(s,t\otimes x_g)\neq 0$ for the left diagram and $\Hom(x_g\otimes t, s)\neq 0$ for the right diagram. 
The coefficients $(\gamma_g)^{st}_{\mu\nu}, (\delta_g)_{\mu\nu}^{st}$ are the choices here. They must be chosen so that $\gamma_g, \delta_g$ are unitaries. In general there is no natural choice for them, one simply needs to make one. Once the coefficients have been chosen, one can then  calculate the action of the sequential circuit using standard diagrammatics of unitary fusion category that only involve simple objects. 

\paragraph{Choice of on-site basis}
To express the symmetry operator $\sf{D}_x=U_g \circ \D_x$ explicitly as an operator acting on the tensor-product Hilbert space $\H_0$ requires choosing a basis for the on-site Hilbert space. The basis of the anyon chain Hilbert space that is natural for diagrammatic computation does not show the tensor product structure explicitly, hence one needs to perform a basis change. The anyon chain $(\C, \C, R_0)$ has a Hilbert space structure $\H=\oplus_g\H_g$. To illustrate the idea, let us consider the case $\H=\H_0\oplus \H_1$ is $\Z_2$-graded. The subspace $\H_0$ has a basis represented as below where the red rectangle indicates the basis for a single site.
\begin{equation}
\begin{split}
\begin{tikzpicture}
\begin{scope}[shift={(0,0)}, scale=3]
\draw [very thick,->-](-3.5,0) -- (-3,0); 
\draw [very thick,->-](-3,0) -- (-2,0);
\draw [very thick,->-](-2,0) -- (-1.5,0);
 \foreach \x in {-3,-2}{
  \node[below] at (\x, -0.3) {$R_0$};
  \draw [very thick,->-](\x,-0.3) -- (\x,0); 
   \draw[fill=black] (\x,0) circle (0.0333);
 }
\node[above] at (-3.5, 0) {$s_k$} ;
\node[above] at (-2.5, 0) {$s_{k+1}$} ;
\node[above] at (-1.5, 0) {$s_{k+2}$} ;
\node[above] at (-3, 0) {$|\mu_k^{s_k},\mu_k^{s_{k+1}}\>$};
\node[above] at (-2, 0) {$|\mu_{k+1}^{s_{k+1}},\mu_{k+1}^{s_{k+2}}\>$};
\draw[very thick, newred, dashed] (-3.05,0.25) -- (-2,0.25) -- (-2,0.02) -- (-3.05, 0.02) -- (-3.05, 0.25);
\end{scope}
\end{tikzpicture}
\end{split}\label{eq_basis_H_0}
\end{equation}
From \Cref{eq_dim_weakly_int} we know the Hilbert space at the vertex between $s_k$ and $s_{k+1}$ has dimension $d_{s_k}d_{s_{k+1}}$. Hence we can choose a basis with the following index structure:
\begin{equation}
    |\mu_k^{s_k},\mu_k^{s_{k+1}}\>: \mu_k^{s_k}=1,\cdots,d_{s_k};\mu_k^{s_{k+1}}=1,\cdots, d_{s_{k+1}}.
\end{equation}
The basis for a single site is then 
\begin{equation}
    |\mu_k^{s_{k+1}},s_{k+1},\mu_{k+1}^{s_{k+1}}\>:s_{k+1}\in \irr(\C_0); \mu_k^{s_{k+1}}=1,\cdots, d_{s_{k+1}};\mu_k^{s_{k+1}}=1,\cdots, d_{s_{k+1}},
\end{equation}
as indicated by the red rectangle in \Cref{eq_basis_H_0}. This is a basis that makes the tensor product structure of $\H_0$ transparent. However the natural basis for performing diagrammatic calculation is the following:
\begin{equation}
\begin{split}
\begin{tikzpicture}
\begin{scope}[shift={(0,0)}, scale=3]
\draw [very thick,->-](-3.5,0) -- (-3,0); 
\draw [very thick,->-](-3,0) -- (-2,0);
\draw [very thick,->-](-2,0) -- (-1.5,0);
 \foreach \x in {-3,-2}{
  \node[below] at (\x, -0.5) {$R_0$};
  \draw [very thick,->-](\x,-0.5) -- (\x,-0.25); 
  \draw [very thick,->-](\x,-0.25) -- (\x,0); 
   \draw[fill=white,draw=black, very thick] (\x-0.03,-0.25-0.03) rectangle ++ (0.06,0.06);
 }
\node[above] at (-3.5, 0) {$s_k$} ;
\node[above] at (-2.5, 0) {$s_{k+1}$} ;
\node[above] at (-1.5, 0) {$s_{k+2}$} ;
\node[right] at (-3,-0.25) {$\epsilon_k$};
\node[left] at (-3,-0.125) {$\xi_k$};
\node[right] at (-2,-0.25) {$\epsilon_{k+1}$};
\node[left] at (-2,-0.125) {$\xi_{k+1}$};
\end{scope}
\end{tikzpicture}
\end{split}\label{eq_basis_of_H_0}
\end{equation}
Here $\xi_k\in \irr(\C_0)$ are simple objects such that $s_{k+1}\in s_k\otimes \xi_k$ and 
\begin{equation}
    \epsilon_k=1,\cdots, d_{\xi_k}: R_0\to \xi_k
\end{equation}
label orthonormal projections. We again assumed multiplicity-free. We then have a natural basis $\{|\xi_k,\epsilon_k\>\}$ for $\Hom(s_k\otimes R_0, s_{k+1})$. There should be $d_{s_k}d_{s_{k+1}}$ many admissible values of $(\xi_k,\epsilon_k)$. We then need to specify how the previous basis $\{|\mu_k^{s_k},\mu_k^{s_{k+1}}\>\}$  is related to this basis. For instance, we  can simply choose the previous basis $\{|\mu_k^{s_k},\mu_k^{s_{k+1}}\>\}$ to be related to this projection basis by relabeling. That is, we take some bijection of sets 
\begin{equation}
    \gamma_{s_k, s_{k+1}}: [d_{s_k}]\times [d_{s_{k+1}}]\to \{(\xi_k,\epsilon_k) :s_{k+1}\in s_k\otimes \xi_k,\epsilon_k=1,\cdots,d_{\xi_k}\}
\end{equation}
and identify 
\begin{equation}
    |\gamma_{s_k, s_{k+1}}(\mu_k^{s_k},\mu_k^{s_{k+1}})\>=|\mu_k^{s_k},\mu_k^{s_{k+1}}\>.
\end{equation}
Similarly, the subspace $\H_1$ has a basis represented as below:
\begin{equation}
\begin{split}
\begin{tikzpicture}
\begin{scope}[shift={(0,0)}, scale=3]
\draw [very thick,->-](-3.5,0) -- (-3,0); 
\draw [very thick,->-](-3,0) -- (-2,0);
\draw [very thick,->-](-2,0) -- (-1.5,0);
 \foreach \x in {-3,-2}{
  \node[below] at (\x, -0.3) {$R_0$};
  \draw [very thick,->-](\x,-0.3) -- (\x,0); 
   \draw[fill=black] (\x,0) circle (0.0333);
 }
\node[below] at (-3.5, 0) {${t_k}$} ;
\node[below] at (-2.5, 0) {$t_{k+1}$} ;
\node[below] at (-1.5, 0) {$t_{k+2}$} ;
\node[above] at (-3, 0) {$|\nu_k^{t_k},\sigma_k,\nu_k^{t_{k+1}}\>$};
\node[above] at (-2, 0) {$|\nu_{k+1}^{t_{k+1}},\sigma_{k+1},\nu_{k+1}^{t_{k+2}}\>$};
\draw[very thick, newred, dashed] (-2.95,0.25) -- (-1.85,0.25) -- (-1.85,-0.2) -- (-2.95, -0.2) -- (-2.95, 0.25);
\end{scope}
\end{tikzpicture}
\end{split}\label{eq_basis_H_1}
\end{equation}
The Hilbert space at the vertex between $t_k$ and $t_{k+1}$ has dimension $a_{t_k}a_{t_{k+1}}n_1$. Hence we can choose a basis for $\Hom(t_k\otimes R_0,t_{k+1})$ with the  following index structure :
\begin{equation}
    |\nu_k^{t_k},\sigma_{k},\nu_k^{t_{k+1}}\>: \nu_k^{t}=1,\cdots,a_t; \sigma_{k}=1,\cdots,n_1.
\end{equation}
The basis for a single site is then indicated by the red rectangle region in \Cref{eq_basis_H_1}. Again the basis suitable for diagrammatic computation is given by projections:
\begin{equation}
\begin{split}
\begin{tikzpicture}
\begin{scope}[shift={(0,0)}, scale=3]
\draw [very thick,->-](-3.5,0) -- (-3,0); 
\draw [very thick,->-](-3,0) -- (-2,0);
\draw [very thick,->-](-2,0) -- (-1.5,0);
 \foreach \x in {-3,-2}{
  \node[below] at (\x, -0.5) {$R_0$};
  \draw [very thick,->-](\x,-0.5) -- (\x,-0.3); 
  \draw [very thick,->-](\x,-0.3) -- (\x,0); 
   \draw[fill=white,draw=black, very thick] (\x-0.03,-0.3-0.03) rectangle ++ (0.06,0.06);
 }
   \node[right] at (-3,-0.3) {$\eta_k$};
   \node[left] at (-3,-0.15) {$\xi_k$};
     \node[right] at (-2,-0.3) {$\eta_{k+1}$};
   \node[left] at (-2,-0.15) {$\xi_{k+1}$};
\node[above] at (-3.5, 0) {$t_k$} ;
\node[above] at (-2.5, 0) {$t_{k+1}$} ;
\node[above] at (-1.5, 0) {$t_{k+2}$} ;
\end{scope}
\end{tikzpicture}
\end{split}\label{eq_basis_of_H_1}
\end{equation}
Here $\xi_k\in \irr(\C_0)$ takes values in anyons such that $t_{k+1}\in t_k\otimes \xi_k$, and $\eta_k\in 1,\cdots , d_{\xi_k}$ label orthonormal projections. This gives us a natural basis $|\xi_k,\eta_k\>$ for $\Hom(t_k\otimes R_0, t_{k+1})$. We can then choose some bijection of sets
\begin{equation}
    \gamma_{t_k, t_{k+1}}: [a_t]\times [n_1]\times [a_{t+1}]\to \{(\xi_k,\eta_k): t_{k+1}\in t_k\otimes \xi_k, \eta_k\in 1,\cdots , d_{\xi_k}\}
\end{equation}
and identify 
\begin{equation}
    |  \gamma_{t_k, t_{k+1}}(\nu_k^{t_k},\sigma_{k},\nu_k^{t_{k+1}})\>=|\nu_k^{t_k},\sigma_{k},\nu_k^{t_{k+1}}\>.
\end{equation}
In other words the two basis are related by relabeling.  In practice, one should perform calculation of $\sf{D}_x$ first in the basis given by projections onto simples (\Cref{eq_basis_of_H_0} and \Cref{eq_basis_of_H_1}) via standard diagrammatics, then do a basis change to the tensor-product basis to reveal the tensor-product structure of the symmetry operator.

\section{Indices of Symmetry Operators}
\label{sec_index_computation}

We now compute the indices of the symmetry operators in the lattice models obtained by \Cref{construction_general_case}.
We will show that the index of $\mathsf{D}_x$ for $x \in \mathcal{C}_g$ is given by 
\be 
\ind(\mathsf{D}_x) = \dim(R_g)^{-1} = m_x \sqrt{n_g}, \quad
m_x \coloneq \frac{1}{\dim(\mathcal{C}_0)}\,.
\ee
We note that the index of $\mathsf{D}_x$ agrees with the dimension of the object translated by the sequential circuit in \Cref{fig:sequentialc_2}.\footnote{Similarly, one can also show that the index of $\mathsf{D}_x$ in \Cref{construction_special_case} is given by $\dim(x_g)^{-1} = \sqrt{n_g}^{-1}$, where $x_g$ is the object translated to the left in \Cref{fig:sequentialc_1}. In this section, we will focus only on \Cref{construction_general_case}.}
Furthermore, we will also show that the index of QCA $U_{(g, h)}$ appearing in the lattice-level fusion rules~\Cref{eq: lattice fusion rules} is given by
\begin{equation}
\ind(U_{(g, h)}) = \frac{\dim(R_{gh})}{\dim(R_g) \dim(R_h)} = \frac{m_x m_y}{m_z} \sqrt{\frac{n_g n_h}{n_{gh}}}.
\end{equation}
This result agrees with the prediction of \Cref{thm_main}.

In what follows, the symmetry operators on the graded Hilbert space $\mathcal{H} = \bigoplus_{h \in E} \mathcal{H}_h$ are denoted by $\widetilde{\mathsf{D}}_x \coloneq U_g \mathcal{D}_x$, while the symmetry operators on a single sector $\mathcal{H}_0$ are denoted by $\mathsf{D}_x \coloneq \widetilde{\mathsf{D}}_x|_{\mathcal{H}_0}$.
Here, $\mathcal{D}_x: \mathcal{H} \to \mathcal{H}$ is the standard symmetry operator defined by~\Cref{eq_anyon_chain_action}.
We note that $\widetilde{\mathsf{D}}_x$ preserves the grading, while $\mathcal{D}_x$ does not.

\subsection{Defect Hilbert spaces}
To compute the index of $\mathsf{D}_x$, we first define the left and right defect Hilbert spaces for $\mathsf{D}_x$.
To this end, we first recall the definition of the defect Hilbert spaces in the anyon chains \cite{Buican:2017rxc, Aasen:2020jwb, Bhardwaj:2024kvy}.
Since we focus on \Cref{construction_general_case}, we consider the anyon chain with input data $(\mathcal{C}, \mathcal{C}, \dim(\mathcal{C}_0) R_0)$.
In this anyon chain, the left defect Hilbert space for the symmetry operator $\mathcal{D}_x$ without refining by $U_g$ is defined by the vector space of morphisms represented by the following fusion diagrams:
\begin{equation}
\begin{tikzpicture}[scale = 1, very thick, baseline=(current bounding box.center)]
\pgfmathsetmacro{\x}{2.25}
\pgfmathsetmacro{\y}{1}
\draw[->-] (0, 0) -- (\x, 0);
\draw[->-] (\x, 0) -- (2*\x, 0);
\draw[->-] (2*\x, 0) -- (2.5*\x, 0);
\draw[->-] (2.5*\x, 0) -- (3*\x, 0);
\draw[->-] (3*\x, 0) -- (4*\x, 0);
\draw[->-] (4*\x, 0) -- (5*\x, 0);
\draw[->-] (\x, -\y) -- (\x, 0);
\draw[->-] (2*\x, -\y) -- (2*\x, 0);
\draw[newblue, ->-] (2.5*\x, 0) -- (2.5*\x, \y);
\draw[->-] (3*\x, -\y) -- (3*\x, 0);
\draw[->-] (4*\x, -\y) -- (4*\x, 0);
\draw[fill=black] (\x, 0) circle (0.1);
\draw[fill=black] (2*\x, 0) circle (0.1);
\draw[fill=black] (2.5*\x, 0) circle (0.1);
\draw[fill=black] (3*\x, 0) circle (0.1);
\draw[fill=black] (4*\x, 0) circle (0.1);
\node[below] at (\x, -\y) {$\mathrm{dim}(\mathcal{C}_0)R_0$};
\node[below] at (2*\x, -\y) {$\mathrm{dim}(\mathcal{C}_0)R_0$};
\node[below] at (3*\x, -\y) {$\mathrm{dim}(\mathcal{C}_0)R_0$};
\node[below] at (4*\x, -\y) {$\mathrm{dim}(\mathcal{C}_0)R_0$};
\node[right] at (2.5*\x, 0.75*\y) {$x$};
\end{tikzpicture}
\;.
\label{eq: defect Hilb cal}
\end{equation}
Here, the horizontal edges are labeled by objects of $\mathcal{C}$.
The right defect Hilbert space is defined in the same way by flipping the orientation of the defect labeled by $x$.
We denote the number of vertical legs below the chain by $L$.

The defect Hilbert space for the symmetry operator $\widetilde{\mathsf{D}}_x = U_g \mathcal{D}_x$ is obtained by further inserting a $U_g$-defect that ends on the chain from below.
The insertion of a $U_g$-defect is done by following the definition of the defect Hilbert space for a sequential circuit \cite{Inamura:2026hif}.
More specifically, we define the left defect Hilbert space for $\widetilde{\mathsf{D}}_x$ by the vector space of morphisms represented by the following fusion diagrams:
\begin{equation}
\begin{tikzpicture}[scale = 1, very thick, baseline=(current bounding box.center)]
\pgfmathsetmacro{\x}{2.25}
\pgfmathsetmacro{\y}{1.2}
\pgfmathsetmacro{\d}{0.1}
\draw[->-] (0, 0) -- (\x, 0);
\draw[->-] (\x, 0) -- (2*\x, 0);
\draw[->-] (2*\x, 0) -- (2.5*\x, 0);
\draw[->-] (2.5*\x, 0) -- (3*\x, 0);
\draw[->-] (3*\x, 0) -- (3.5*\x, 0);
\draw[->-] (3.5*\x, 0) -- (4.5*\x, 0);
\draw[->-] (4.5*\x, 0) -- (5.5*\x, 0);
\draw[->-] (\x, -1.5*\y) -- (\x, 0);
\draw[->-] (2*\x, -1.5*\y) -- (2*\x, 0);
\draw[newblue, ->-] (2.5*\x, 0) -- (2.5*\x, \y);
\draw[->-] (3.5*\x, -1.5*\y) -- (3.5*\x, -\y);
\draw[->-] (3.5*\x, -\y) -- (3.5*\x, -0.4*\y);
\draw[->-] (3.5*\x, -0.4*\y) -- (3.5*\x, 0);
\draw[->-] (4.5*\x, -1.5*\y) -- (4.5*\x, 0);
\draw[newblue] (3*\x, -1.5*\y) -- (3*\x, -\y)-- (4*\x, -\y) -- (4*\x, -0.5*\y) -- (3*\x, -0.5*\y) -- (3*\x, 0);
\draw[newblue, ->-] (3*\x, -1.2*\y) -- (3*\x, -1.5*\y);
\draw[newblue, ->-] (3.5*\x, -\y) -- (3*\x, -\y);
\draw[newblue, ->-] (4*\x, -\y) -- (3.5*\x, -\y);
\draw[newblue, ->-] (3.5*\x, -0.5*\y) -- (4*\x, -0.5*\y);
\draw[newblue, ->-] (3*\x, -0.5*\y) -- (3.5*\x, -0.5*\y);
\draw[newblue, ->-] (3*\x, -0.2*\y) -- (3*\x, -0.5*\y);
\draw[fill=black] (\x, 0) circle (0.1);
\draw[fill=black] (2*\x, 0) circle (0.1);
\draw[fill=black] (2.5*\x, 0) circle (0.1);
\draw[fill=black] (3*\x, 0) circle (0.1);
\draw[fill=black] (3.5*\x, 0) circle (0.1);
\draw[fill=black] (4.5*\x, 0) circle (0.1);
\draw[fill=white] (3.5*\x-\d, -\y-\d) rectangle (3.5*\x+\d, -\y+\d);
\draw[fill=black] (3.5*\x-\d, -0.5*\y-\d) rectangle (3.5*\x+\d, -0.5*\y+\d);
\node[below] at (\x, -1.5*\y) {$\mathrm{dim}(\mathcal{C}_0)R_0$};
\node[below] at (2*\x, -1.5*\y) {$\mathrm{dim}(\mathcal{C}_0)R_0$};
\node[below] at (3.5*\x, -1.5*\y) {$\mathrm{dim}(\mathcal{C}_0)R_0$};
\node[below] at (4.5*\x, -1.5*\y) {$\mathrm{dim}(\mathcal{C}_0)R_0$};
\node[right] at (2.5*\x, \y) {$x$};
\node[left] at (3*\x, -\y) {$R_g$};
\end{tikzpicture}
\; .
\label{eq: defect Hilb l}
\end{equation}
Here, the black and white squares at the quadrivalent junctions represent the local pieces of the sequential circuits $U_g$ and $U_g^{\dagger}$, i.e.,
\begin{equation}
\begin{tikzpicture}[scale = 0.7, very thick, baseline=(current bounding box.center)]
\pgfmathsetmacro{\x}{2}
\pgfmathsetmacro{\y}{2}
\pgfmathsetmacro{\d}{0.15}
\draw[newblue, ->-] (-0.5*\x, 0) -- (0, 0);
\draw[newblue] (0, 0) -- (0.5*\x, 0);
\draw[newblue, ->-] (0.2*\x, 0) -- (0.5*\x, 0);
\draw (0, -0.5*\y) -- (0, 0.5*\y);
\draw[->-] (0, -0.5*\y) -- (0, 0);
\draw[->-] (0, 0.2*\y) -- (0, 0.5*\y);
\draw[fill=black] (0-\d, -\d) rectangle (0+\d, \d);
\node[below] at (0, -0.5*\y) {$\mathrm{dim}(\mathcal{C}_0)R_0$};
\node[above] at (0, 0.5*\y) {$\mathrm{dim}(\mathcal{C}_0)R_0$};
\node[left] at (-0.5*\x, 0) {$R_g$};
\node[right] at (0.5*\x, 0) {$R_g$};
\end{tikzpicture}
=
\begin{tikzpicture}[scale = 0.7, very thick, baseline=(current bounding box.center)]
\pgfmathsetmacro{\x}{3.5}
\pgfmathsetmacro{\y}{3}
\pgfmathsetmacro{\d}{0.15}
\draw[newblue, ->-] (-0.5*\x, 0) -- (0, 0.2*\y);
\draw[newblue, ->-] (0, -0.2*\y) -- (0.5*\x, 0);
\draw (0, -0.5*\y) -- (0, -0.2*\y);
\draw[myorange] (0, -0.2*\y) -- (0, 0.2*\y);
\draw (0, 0.2*\y) -- (0, 0.5*\y);
\draw[->-] (0, -0.5*\y) -- (0, -0.2*\y);
\draw[myorange, ->-] (0, -0.1*\y) -- (0, 0.2*\y);
\draw[->-] (0, 0.3*\y) -- (0, 0.5*\y);
\draw[fill=red] (0, -0.2*\y) circle (0.15);
\draw[fill=blue] (0, 0.2*\y) circle (0.15);
\node[below] at (0, -0.5*\y) {$\mathrm{dim}(\mathcal{C}_0)R_0$};
\node[above] at (0, 0.5*\y) {$\mathrm{dim}(\mathcal{C}_0)R_0$};
\node[left] at (-0.5*\x, 0) {$R_g$};
\node[right] at (0.5*\x, 0) {$R_g$};
\node[right] at (0, 0.15) {$n_g R_g$};
\node[left] at (0-0.1, -0.2*\y-0.1) {$\gamma_g$};
\node[left] at (0-0.1, 0.2*\y+0.3) {$\delta_g$};
\end{tikzpicture}
, \qquad
\begin{tikzpicture}[scale = 0.7, very thick, baseline=(current bounding box.center)]
\pgfmathsetmacro{\x}{2}
\pgfmathsetmacro{\y}{2}
\pgfmathsetmacro{\d}{0.15}
\draw[newblue] (0, 0) -- (-0.5*\x, 0);
\draw[newblue, ->-] (-0.2*\x, 0) -- (-0.5*\x, 0);
\draw[newblue, ->-] (0.5*\x, 0) -- (0, 0);
\draw (0, -0.5*\y) -- (0, 0.5*\y);
\draw[->-] (0, -0.5*\y) -- (0, 0);
\draw[->-] (0, 0.2*\y) -- (0, 0.5*\y);
\draw[fill=white] (0-\d, -\d) rectangle (0+\d, \d);
\node[below] at (0, -0.5*\y) {$\mathrm{dim}(\mathcal{C}_0)R_0$};
\node[above] at (0, 0.5*\y) {$\mathrm{dim}(\mathcal{C}_0)R_0$};
\node[left] at (-0.5*\x, 0) {$R_g$};
\node[right] at (0.5*\x, 0) {$R_g$};
\end{tikzpicture}
=
\begin{tikzpicture}[scale = 0.7, very thick, baseline=(current bounding box.center)]
\pgfmathsetmacro{\x}{3.5}
\pgfmathsetmacro{\y}{3}
\pgfmathsetmacro{\d}{0.15}
\draw[newblue, ->-] (0, -0.2*\y) -- (-0.5*\x, 0);
\draw[newblue, ->-] (0.5*\x, 0) -- (0, 0.2*\y);
\draw (0, -0.5*\y) -- (0, -0.2*\y);
\draw[myorange] (0, -0.2*\y) -- (0, 0.2*\y);
\draw (0, 0.2*\y) -- (0, 0.5*\y);
\draw[->-] (0, -0.5*\y) -- (0, -0.2*\y);
\draw[myorange, ->-] (0, -0.1*\y) -- (0, 0.2*\y);
\draw[->-] (0, 0.3*\y) -- (0, 0.5*\y);
\draw[fill=blue] (0, -0.2*\y) circle (0.15);
\draw[fill=red] (0, 0.2*\y) circle (0.15);
\node[below] at (0, -0.5*\y) {$\mathrm{dim}(\mathcal{C}_0)R_0$};
\node[above] at (0, 0.5*\y) {$\mathrm{dim}(\mathcal{C}_0)R_0$};
\node[left] at (-0.5*\x, 0) {$R_g$};
\node[right] at (0.5*\x, 0) {$R_g$};
\node[left] at (0, 0.15) {$n_g R_g$};
\node[right] at (0+0.05, -0.2*\y) {$\delta_g^{\dagger}$};
\node[right] at (0+0.05, 0.2*\y+0.3) {$\gamma_g^{\dagger}$};
\end{tikzpicture},
\label{eq: 4-valent junction mor}
\end{equation}
where $\gamma_g: \dim(\mathcal{C}_0)R_0 \to n_g R_g \otimes R_g$ and $\delta_g: R_g \otimes n_g R_g \to \dim(\mathcal{C}_0) R_0$ are isomorphisms in~\Cref{eq_gamma_delta}.
The orientation of the defect $R_g$ in \Cref{eq: defect Hilb l} is chosen so that the product of local unitaries that move the defect to the right acts in the same way as the corresponding operator $U_g$; cf.~\Cref{fn: defect orientation}.
Here, the local unitary $u$ that moves the defect $R_g$ to the right by one site is given by
\begin{equation}
u = \frac{1}{\dim(R_g)^2}
\begin{tikzpicture}[scale = 1, very thick, baseline=(current bounding box.center)]
\pgfmathsetmacro{\x}{2}
\pgfmathsetmacro{\y}{0.6}
\pgfmathsetmacro{\d}{0.1}
\draw (0, 0) -- (0, 3*\y);
\draw (\x, 0) -- (\x, 3*\y);
\draw[->-] (0, 0) -- (0, 2*\y);
\draw[->-] (0, 2.5*\y) -- (0, 3*\y);
\draw[->-] (\x, 0) -- (\x, \y);
\draw[->-] (\x, 1.25*\y) -- (\x, 2*\y);
\draw[->-] (\x, 2.5*\y) -- (\x, 3*\y);
\draw[newblue] (-0.5*\x, 3*\y) -- (-0.5*\x, 2*\y) -- (1.5*\x, 2*\y) -- (1.5*\x, \y) -- (0.5*\x, \y) -- (0.5*\x, 0);
\draw[newblue, ->-] (-0.5*\x, 2*\y) -- (0, 2*\y);
\draw[newblue, ->-] (0, 2*\y) -- (\x, 2*\y);
\draw[newblue, ->-] (\x, 2*\y) -- (1.5*\x, 2*\y);
\draw[newblue, ->-] (1.5*\x, \y) -- (\x, \y);
\draw[newblue, ->-] (\x, \y) -- (0.5*\x, \y);
\draw[fill=black] (0-\d, 2*\y-\d) rectangle (0+\d, 2*\y+\d);
\draw[fill=black] (\x-\d, 2*\y-\d) rectangle (\x+\d, 2*\y+\d);
\draw[fill=white] (\x-\d, \y-\d) rectangle (\x+\d, \y+\d);
\node[below] at (0, 0) {$\rho$};
\node[below] at (0.5*\x, 0) {$R_g$};
\node[below] at (\x, 0) {$\rho$};
\node[above] at (-0.5*\x, 3*\y) {$R_g$};
\node[above] at (0, 3*\y) {$\rho$};
\node[above] at (\x, 3*\y) {$\rho=\dim(\mathcal{C}_0)R_0$};
\end{tikzpicture}
\;.
\label{eq: local unitary l}
\end{equation}
A direct computation shows that $u^{\dagger} u$ acts as the identity on the state in \Cref{eq: defect Hilb l}, meaning that $u$ is unitary.\footnote{The factor of $\dim(R_g)^2$ in \Cref{eq: local unitary l} cancels out $\dim(R_g)$'s coming from the loops of $R_g$ that appear in the computation. We note that the diagram on the right-hand side is not unitary unless normalized correctly. This is because non-unitary evaluation and coevaluation morphisms are implicitly used in the diagram.}

Similarly, the right defect Hilbert space for $\widetilde{\mathsf{D}}_x$ is defined by the vector space of morphisms represented by the fusion diagrams
\begin{equation}
\begin{tikzpicture}[scale = 1, very thick, baseline=(current bounding box.center)]
\pgfmathsetmacro{\x}{2.25}
\pgfmathsetmacro{\y}{1.2}
\pgfmathsetmacro{\d}{0.1}
\draw[->-] (0, 0) -- (\x, 0);
\draw[->-] (\x, 0) -- (2*\x, 0);
\draw[->-] (2*\x, 0) -- (2.5*\x, 0);
\draw[->-] (2.5*\x, 0) -- (3*\x, 0);
\draw[->-] (3*\x, 0) -- (3.5*\x, 0);
\draw[->-] (3.5*\x, 0) -- (4.5*\x, 0);
\draw[->-] (4.5*\x, 0) -- (5.5*\x, 0);
\draw[->-] (\x, -1.5*\y) -- (\x, 0);
\draw[->-] (2*\x, -1.5*\y) -- (2*\x, 0);
\draw[newblue, ->-] (2.5*\x, \y) -- (2.5*\x, 0);
\draw[->-] (3.5*\x, -1.5*\y) -- (3.5*\x, -\y);
\draw[->-] (3.5*\x, -\y) -- (3.5*\x, -0.4*\y);
\draw[->-] (3.5*\x, -0.4*\y) -- (3.5*\x, 0);
\draw[->-] (4.5*\x, -1.5*\y) -- (4.5*\x, 0);
\draw[newblue] (3*\x, -1.5*\y) -- (3*\x, -\y)-- (4*\x, -\y) -- (4*\x, -0.5*\y) -- (3*\x, -0.5*\y) -- (3*\x, 0);
\draw[newblue, ->-] (3*\x, -1.5*\y) -- (3*\x, -\y);
\draw[newblue, ->-] (3*\x, -\y) -- (3.5*\x, -\y);
\draw[newblue, ->-] (3.5*\x, -\y) -- (4*\x, -\y);
\draw[newblue, ->-] (4*\x, -0.5*\y) -- (3.5*\x, -0.5*\y);
\draw[newblue, ->-] (3.5*\x, -0.5*\y) -- (3*\x, -0.5*\y);
\draw[newblue, ->-] (3*\x, -0.5*\y) -- (3*\x, 0);
\draw[fill=black] (\x, 0) circle (0.1);
\draw[fill=black] (2*\x, 0) circle (0.1);
\draw[fill=black] (2.5*\x, 0) circle (0.1);
\draw[fill=black] (3*\x, 0) circle (0.1);
\draw[fill=black] (3.5*\x, 0) circle (0.1);
\draw[fill=black] (4.5*\x, 0) circle (0.1);
\draw[fill=black] (3.5*\x-\d, -\y-\d) rectangle (3.5*\x+\d, -\y+\d);
\draw[fill=white] (3.5*\x-\d, -0.5*\y-\d) rectangle (3.5*\x+\d, -0.5*\y+\d);
\node[below] at (\x, -1.5*\y) {$\mathrm{dim}(\mathcal{C}_0)R_0$};
\node[below] at (2*\x, -1.5*\y) {$\mathrm{dim}(\mathcal{C}_0)R_0$};
\node[below] at (3.5*\x, -1.5*\y) {$\mathrm{dim}(\mathcal{C}_0)R_0$};
\node[below] at (4.5*\x, -1.5*\y) {$\mathrm{dim}(\mathcal{C}_0)R_0$};
\node[right] at (2.5*\x, \y) {$x$};
\node[left] at (3*\x, -\y) {$R_g$};
\end{tikzpicture}
\;,
\label{eq: defect Hilb r}
\end{equation}
where the black and white squares are again given by \Cref{eq: 4-valent junction mor}.
For this defect Hilbert space, the local unitary $v$ that moves the defect $R_g$ to the right is given by
\begin{equation}
v =
\begin{tikzpicture}[scale = 1, very thick, baseline=(current bounding box.center)]
\pgfmathsetmacro{\x}{2}
\pgfmathsetmacro{\y}{0.6}
\pgfmathsetmacro{\d}{0.1}
\draw (0, 0) -- (0, 3*\y);
\draw (\x, 0) -- (\x, 3*\y);
\draw[->-] (0, 0) -- (0, 2*\y);
\draw[->-] (0, 2.5*\y) -- (0, 3*\y);
\draw[->-] (\x, 0) -- (\x, \y);
\draw[->-] (\x, 1.25*\y) -- (\x, 2*\y);
\draw[->-] (\x, 2.5*\y) -- (\x, 3*\y);
\draw[newblue] (-0.5*\x, 3*\y) -- (-0.5*\x, 2*\y) -- (1.5*\x, 2*\y) -- (1.5*\x, \y) -- (0.5*\x, \y) -- (0.5*\x, 0);
\draw[newblue, ->-] (0, 2*\y) -- (-0.5*\x, 2*\y);
\draw[newblue, ->-] (\x, 2*\y) -- (0, 2*\y);
\draw[newblue, ->-] (1.5*\x, 2*\y) -- (\x, 2*\y);
\draw[newblue, ->-] (\x, \y) -- (1.5*\x, \y);
\draw[newblue, ->-] (0.5*\x, \y) -- (\x, \y);
\draw[fill=white] (0-\d, 2*\y-\d) rectangle (0+\d, 2*\y+\d);
\draw[fill=white] (\x-\d, 2*\y-\d) rectangle (\x+\d, 2*\y+\d);
\draw[fill=black] (\x-\d, \y-\d) rectangle (\x+\d, \y+\d);
\node[below] at (0, 0) {$\rho$};
\node[below] at (0.5*\x, 0) {$R_g$};
\node[below] at (\x, 0) {$\rho$};
\node[above] at (-0.5*\x, 3*\y) {$R_g$};
\node[above] at (0, 3*\y) {$\rho$};
\node[above] at (\x, 3*\y) {$\rho=\dim(\mathcal{C}_0)R_0$};
\end{tikzpicture}
\;.
\end{equation}
Unlike \Cref{eq: local unitary l}, there is no factor of $\dim(R_g)$ on the right-hand side.
We note that the product of the above local unitaries acts in the same way as $U_g^{\dagger}$, up to defect creation and annihilation operators.

One can simplify the fusion diagrams in \Cref{eq: defect Hilb l} and \Cref{eq: defect Hilb r} by using
\begin{equation}
\begin{tikzpicture}[scale = 0.9, very thick, baseline=(current bounding box.center)]
\pgfmathsetmacro{\x}{2}
\pgfmathsetmacro{\y}{2}
\pgfmathsetmacro{\d}{0.1}
\draw[->-] (0, -1.5*\y) -- (0, -\y);
\draw[->-] (0, -\y) -- (0, -0.5*\y);
\draw[->-] (0, -0.5*\y) -- (0, 0);
\draw[newblue] (-0.5*\x, -1.5*\y) -- (-0.5*\x, -\y)-- (0.5*\x, -\y) -- (0.5*\x, -0.5*\y) -- (-0.5*\x, -0.5*\y) -- (-0.5*\x, 0);
\draw[newblue, ->-] (-0.5*\x, -1.5*\y) -- (-0.5*\x, -\y);
\draw[newblue, ->-] (-0.5*\x, -\y) -- (0, -\y);
\draw[newblue, ->-] (0, -\y) -- (0.5*\x, -\y);
\draw[newblue, ->-] (0.5*\x, -0.5*\y) -- (0, -0.5*\y);
\draw[newblue, ->-] (0, -0.5*\y) -- (-0.5*\x, -0.5*\y);
\draw[newblue, ->-] (-0.5*\x, -0.5*\y) -- (-0.5*\x, 0);
\draw[fill=black] (0-\d, -\y-\d) rectangle (0+\d, -\y+\d);
\draw[fill=white] (0-\d, -0.5*\y-\d) rectangle (0+\d, -0.5*\y+\d);
\node[below] at (0, -1.5*\y) {$\mathrm{dim}(\mathcal{C}_0)R_0$};
\node[above] at (0, 0) {$\mathrm{dim}(\mathcal{C}_0)R_0$};
\node[left] at (-0.5*\x, -\y) {$R_g$};
\node[left] at (-0.5*\x, -0.5*\y) {$R_g$};
\end{tikzpicture}
=
\begin{tikzpicture}[scale = 0.9, very thick, baseline=(current bounding box.center)]
\pgfmathsetmacro{\x}{2}
\pgfmathsetmacro{\y}{2}
\pgfmathsetmacro{\d}{0.1}
\draw[->-] (0, -1.5*\y) -- (0, 0);
\draw[newblue, ->-] (-0.5*\x, -1.5*\y) -- (-0.5*\x, 0);
\node[below] at (0, -1.5*\y) {$\mathrm{dim}(\mathcal{C}_0)R_0$};
\node[above] at (0, 0) {$\mathrm{dim}(\mathcal{C}_0)R_0$};
\node[left] at (-0.5*\x, -\y) {$R_g$};
\end{tikzpicture}
, \qquad
\begin{tikzpicture}[scale = 0.9, very thick, baseline=(current bounding box.center)]
\pgfmathsetmacro{\x}{2}
\pgfmathsetmacro{\y}{2}
\pgfmathsetmacro{\d}{0.1}
\draw[->-] (0, -1.5*\y) -- (0, -\y);
\draw[->-] (0, -\y) -- (0, -0.5*\y);
\draw[->-] (0, -0.5*\y) -- (0, 0);
\draw[newblue] (-0.5*\x, -1.5*\y) -- (-0.5*\x, -\y)-- (0.5*\x, -\y) -- (0.5*\x, -0.5*\y) -- (-0.5*\x, -0.5*\y) -- (-0.5*\x, 0);
\draw[newblue, ->-] (-0.5*\x, -\y) -- (-0.5*\x, -1.5*\y);
\draw[newblue, ->-] (0, -\y) -- (-0.5*\x, -\y);
\draw[newblue, ->-] (0.5*\x, -\y) -- (0, -\y);
\draw[newblue, ->-] (0, -0.5*\y) -- (0.5*\x, -0.5*\y);
\draw[newblue, ->-] (-0.5*\x, -0.5*\y) -- (0, -0.5*\y);
\draw[newblue, ->-] (-0.5*\x, 0) -- (-0.5*\x, -0.5*\y);
\draw[fill=white] (0-\d, -\y-\d) rectangle (0+\d, -\y+\d);
\draw[fill=black] (0-\d, -0.5*\y-\d) rectangle (0+\d, -0.5*\y+\d);
\node[below] at (0, -1.5*\y) {$\mathrm{dim}(\mathcal{C}_0)R_0$};
\node[above] at (0, 0) {$\mathrm{dim}(\mathcal{C}_0)R_0$};
\node[left] at (-0.5*\x, -\y) {$R_g$};
\node[left] at (-0.5*\x, -0.5*\y) {$R_g$};
\end{tikzpicture}
= \dim(R_g)
\begin{tikzpicture}[scale = 0.9, very thick, baseline=(current bounding box.center)]
\pgfmathsetmacro{\x}{2}
\pgfmathsetmacro{\y}{2}
\pgfmathsetmacro{\d}{0.1}
\draw[->-] (0, -1.5*\y) -- (0, -\y);
\draw[myorange, ->-] (0, -\y) -- (0, -0.5*\y);
\draw[->-] (0, -0.5*\y) -- (0, 0);
\draw[newblue] (-0.5*\x, -1.5*\y) -- (-0.5*\x, -\y) -- (0, -\y);
\draw[newblue] (0, -0.5*\y) -- (-0.5*\x, -0.5*\y) -- (-0.5*\x, 0);
\draw[newblue, ->-] (-0.5*\x, -\y) -- (-0.5*\x, -1.5*\y);
\draw[newblue, ->-] (0, -\y) -- (-0.5*\x, -\y);
\draw[newblue, ->-] (-0.5*\x, -0.5*\y) -- (0, -0.5*\y);
\draw[newblue, ->-] (-0.5*\x, 0) -- (-0.5*\x, -0.5*\y);
\draw[fill=black] (0, -\y) circle (0.1);
\draw[fill=black] (0, -0.5*\y) circle (0.1);
\node[below] at (0, -1.5*\y) {$\mathrm{dim}(\mathcal{C}_0)R_0$};
\node[above] at (0, 0) {$\mathrm{dim}(\mathcal{C}_0)R_0$};
\node[left] at (-0.5*\x, -\y) {$R_g$};
\node[left] at (-0.5*\x, -0.5*\y) {$R_g$};
\node[left] at (0, -1.5*\y/2) {$n_gR_g$};
\node[right] at (0+0.05, -\y) {$\delta_g$};
\node[right] at (0+0.05, -0.5*\y) {$\delta_g^{\dagger}$};
\end{tikzpicture}
\; ,
\label{eq: zigzag}
\end{equation}
which immediately follows from the unitarity of $\gamma_g$ and $\delta_g$.
By plugging the first equality into \Cref{eq: defect Hilb r}, we find that each state in $\mathcal{H}_{\widetilde{\mathsf{D}}_x}^r$ can be written as
\begin{equation}
\begin{tikzpicture}[scale = 1, very thick, baseline=(current bounding box.center)]
\pgfmathsetmacro{\x}{2.25}
\pgfmathsetmacro{\y}{1}
\draw[->-] (0, 0) -- (\x, 0);
\draw[->-] (\x, 0) -- (2*\x, 0);
\draw[->-] (2*\x, 0) -- (2.5*\x, 0);
\draw[->-] (2.5*\x, 0) -- (3*\x, 0);
\draw[->-] (3*\x, 0) -- (4*\x, 0);
\draw[->-] (4*\x, 0) -- (5*\x, 0);
\draw[->-] (\x, -\y) -- (\x, 0);
\draw[->-] (2*\x, -\y) -- (2*\x, 0);
\draw[newblue, ->-] (2.5*\x,\y) -- (2.5*\x, 0);
\draw[newblue, ->-] (2.5*\x, -\y) -- (2.5*\x, 0); 
\draw[->-] (3*\x, -\y) -- (3*\x, 0);
\draw[->-] (4*\x, -\y) -- (4*\x, 0);
\draw[fill=black] (\x, 0) circle (0.1);
\draw[fill=black] (2*\x, 0) circle (0.1);
\draw[fill=black] (2.5*\x, 0) circle (0.1);
\draw[fill=black] (3*\x, 0) circle (0.1);
\draw[fill=black] (4*\x, 0) circle (0.1);
\node[below] at (\x, -\y) {$\mathrm{dim}(\mathcal{C}_0)R_0$};
\node[below] at (2*\x, -\y) {$\mathrm{dim}(\mathcal{C}_0)R_0$};
\node[below] at (3*\x, -\y) {$\mathrm{dim}(\mathcal{C}_0)R_0$};
\node[below] at (4*\x, -\y) {$\mathrm{dim}(\mathcal{C}_0)R_0$};
\node[right] at (2.5*\x, 0.75*\y) {$x$};
\node[right] at (2.5*\x, -0.75*\y) {$R_g$};
\end{tikzpicture}
\;.
\label{eq: defect Hilb r2}
\end{equation}
This diagram shows that the right defect Hilbert space for $\widetilde{\mathsf{D}}_x$ is obtained by inserting an $x$-defect from above and an $R_g$-defect from below.
Similarly, due to the second equality of \Cref{eq: zigzag}, the left defect Hilbert space $\mathcal{H}_{\widetilde{\mathsf{D}}_x}^l$ is isomorphic to the vector space of morphisms represented by
\begin{equation}
\begin{tikzpicture}[scale = 1, very thick, baseline=(current bounding box.center)]
\pgfmathsetmacro{\x}{2.25}
\pgfmathsetmacro{\y}{1}
\draw[->-] (0, 0) -- (\x, 0);
\draw[->-] (\x, 0) -- (2*\x, 0);
\draw[->-] (2*\x, 0) -- (2.5*\x, 0);
\draw[->-] (2.5*\x, 0) -- (4*\x, 0);
\draw[->-] (4*\x, 0) -- (5*\x, 0);
\draw[->-] (\x, -\y) -- (\x, 0);
\draw[->-] (2*\x, -\y) -- (2*\x, 0);
\draw[newblue, ->-] (2.5*\x, 0) -- (2.5*\x, \y);
\draw[myorange, ->-] (2.5*\x, -\y) -- (2.5*\x, 0); 
\draw[->-] (4*\x, -\y) -- (4*\x, 0);
\draw[fill=black] (\x, 0) circle (0.1);
\draw[fill=black] (2*\x, 0) circle (0.1);
\draw[fill=black] (2.5*\x, 0) circle (0.1);
\draw[fill=black] (4*\x, 0) circle (0.1);
\node[below] at (\x, -\y) {$\mathrm{dim}(\mathcal{C}_0)R_0$};
\node[below] at (2*\x, -\y) {$\mathrm{dim}(\mathcal{C}_0)R_0$};
\node[below] at (4*\x, -\y) {$\mathrm{dim}(\mathcal{C}_0)R_0$};
\node[right] at (2.5*\x+0.1, 0.75*\y) {$x$};
\node[right] at (2.5*\x+0.1, -0.75*\y) {$n_gR_g$};
\end{tikzpicture}
\; .
\label{eq: defect Hilb l2}
\end{equation}
Namely, the left defect Hilbert space for $\widetilde{\mathsf{D}}_x$ is obtained by inserting an $x$-defect from above, removing a single leg labeled by $\dim(\mathcal{C}_0) R_0$, and adding an $n_gR_g$-defect from below.
Since the removed leg is labeled by $\dim(\mathcal{C}_0) R_0 \cong n_gR_g \otimes R_g$, removing a single leg and adding an $n_g R_g$-defect is equivalent to removing a single $R_g$-defect.
We note that the number of vertical legs for $\mathcal{H}_{\widetilde{\mathsf{D}}_x}^l$ is one less than that for $\mathcal{H}_{\widetilde{\mathsf{D}}_x}^r$.

As in the case of no defect, all horizontal edges in the fusion diagrams \eqref{eq: defect Hilb l2} and \eqref{eq: defect Hilb r2} have the same grading.
Thus, the defect Hilbert spaces $\mathcal{H}_{\widetilde{\mathsf{D}}_x}^l$ and $\mathcal{H}_{\widetilde{\mathsf{D}}_x}^r$ are graded by $E$, i.e.,
\begin{equation}
\mathcal{H}_{\widetilde{\mathsf{D}}_x}^l = \bigoplus_{h \in E} (\mathcal{H}_{\widetilde{\mathsf{D}}_x}^l)_h, \qquad
\mathcal{H}_{\widetilde{\mathsf{D}}_x}^r = \bigoplus_{h \in E} (\mathcal{H}_{\widetilde{\mathsf{D}}_x}^r)_h,
\label{eq: defect Hilb grading}
\end{equation}
where $h \in E$ is the grading of the objects on the horizontal edges.
Based on this decomposition, we define the left and right defect Hilbert spaces for the symmetry operator $\mathsf{D}_x: \mathcal{H}_0 \to \mathcal{H}_0$ on the trivially graded sector by
\begin{equation}
\mathcal{H}_{\mathsf{D}_x}^l \coloneq (\mathcal{H}_{\widetilde{\mathsf{D}}_x}^l)_0, \qquad
\mathcal{H}_{\mathsf{D}_x}^r \coloneq (\mathcal{H}_{\widetilde{\mathsf{D}}_x}^r)_0.
\label{eq: defect Hilb def}
\end{equation}

\vspace*{\baselineskip}
\noindent{\bf Tensor product decomposition.}
The defect Hilbert space $\mathcal{H}_{\mathsf{D}_x}^l$ defined above admits a tensor product decomposition.
More generally, for every $h \in E$, the $h$-graded component of $\mathcal{H}_{\widetilde{\mathsf{D}}_x}^l$ admits a tensor product decomposition.
To see this, we recall that each state in the $h$-graded component of $\mathcal{H}_{\widetilde{\mathsf{D}}_x}^l$ is represented by a fusion diagram in \Cref{eq: defect Hilb l2} whose horizontal edges are labeled by objects of $\mathcal{C}_h$.
For fixed labels of the horizontal edges, the Hilbert space associated with each vertex away from the defect has dimension
\begin{equation}
\dim \Hom_{\mathcal{C}}(x_i \otimes \dim(\mathcal{C}_0) R_0, x_{i+1}) = \dim(\mathcal{C}_0) \dim(x_i) \dim(x_{i+1}) = \dim(\mathcal{C}_0) n_h a_{x_i} a_{x_{i+1}},
\end{equation}
where $x_i, x_{i+1} \in \irr(\mathcal{C}_h)$ are the labels of the left and right edges connected to the vertex.
This Hilbert space can be factorized into the tensor product of three Hilbert spaces assigned to the vertex and the two adjacent edges as follows:
\begin{equation}
\begin{tikzpicture}[scale = 1, very thick, baseline=(current bounding box.center)]
\pgfmathsetmacro{\x}{2}
\pgfmathsetmacro{\y}{0.8}
\pgfmathsetmacro{\d}{0.1}
\draw[->-] (0, 0) -- (\x, 0);
\draw[->-] (\x, 0) -- (2*\x, 0);
\draw[->-] (\x, -1.5*\y) -- (\x, 0);
\draw[fill=black] (\x, 0) circle (0.1);
\node[below] at (\x, -1.5*\y) {$\mathrm{dim}(\mathcal{C}_0)R_0$};
\node[below] at (0.5*\x, 0-0.1) {$x_i$};
\node[below] at (1.5*\x, 0-0.1) {$x_{i+1}$};
\node[above] at (\x, 0+0.1) {$\mathbb{C}^{\dim(\mathcal{C}_0)n_h a_{x_i} a_{x_{i+1}}}$};
\end{tikzpicture}
\; \Rightarrow \;
\begin{tikzpicture}[scale = 1, very thick, baseline=(current bounding box.center)]
\pgfmathsetmacro{\x}{2}
\pgfmathsetmacro{\y}{0.8}
\pgfmathsetmacro{\d}{0.1}
\draw (0, 0) -- (\x, 0);
\draw (\x, 0) -- (2*\x, 0);
\draw (\x, -1.5*\y) -- (\x, 0);
\draw[fill=blue, draw=black] (\x, 0) circle (0.1);
\draw[fill=orange, draw=black] (0.5*\x, 0) circle (0.1);
\draw[fill=orange, draw=black] (1.5*\x, 0) circle (0.1);
\node[below] at (\x, -1.5*\y) {$\mathrm{dim}(\mathcal{C}_0)R_0$};
\node[below] at (0.2*\x, 0-0.1) {$x_i$};
\node[below] at (1.8*\x, 0-0.1) {$x_{i+1}$};
\node[above] at (\x, 0+0.1) {$\mathbb{C}^{\dim(\mathcal{C}_0)}$};
\node[above] at (0.3*\x, 0+0.1) {$\mathbb{C}^{n_h a_{x_i}}$};
\node[above] at (1.7*\x, 0+0.1) {$\mathbb{C}^{a_{x_{i+1}}}$};
\end{tikzpicture}
\;.
\label{eq: factorization 1}
\end{equation}
On the other hand, the Hilbert space associated with the vertex at the defect locus has dimension
\begin{equation}
\dim \Hom_{\mathcal{C}}(x_i \otimes n_g R_g, x \otimes x_{i}^{\prime}) = \dim(x) \dim(x_i) \dim(x_i^{\prime})\sqrt{n_g} = n_g a_x n_h a_{x_i} a_{x_i^{\prime}},
\end{equation}
where $x_i, x_i^{\prime} \in \irr(\mathcal{C}_h)$ are the labels of the left and right edges connected to the vertex.
This Hilbert space can also be factorized into the tensor product of three Hilbert spaces as
\begin{equation}
\begin{tikzpicture}[scale = 1, very thick, baseline=(current bounding box.center)]
\pgfmathsetmacro{\x}{2}
\pgfmathsetmacro{\y}{0.8}
\pgfmathsetmacro{\d}{0.1}
\draw[->-] (0, 0) -- (\x, 0);
\draw[->-] (\x, 0) -- (2*\x, 0);
\draw[newblue, ->-] (\x, 0) -- (\x, 1.5*\y);
\draw[myorange, ->-] (\x, -1.5*\y) -- (\x, 0);
\draw[fill=black] (\x, 0) circle (0.1);
\node[right] at (\x, 1.25*\y) {$x$};
\node[below] at (\x, -1.5*\y) {$n_g R_g$};
\node[below] at (0.5*\x, 0-0.1) {$x_i$};
\node[below] at (1.5*\x, 0-0.1) {$x_i^{\prime}$};
\node[above right] at (\x, 0) {$\mathbb{C}^{n_g a_x n_h a_{x_i} a_{x_i^{\prime}}}$};
\end{tikzpicture}
\; \Rightarrow \;
\begin{tikzpicture}[scale = 1, very thick, baseline=(current bounding box.center)]
\pgfmathsetmacro{\x}{2}
\pgfmathsetmacro{\y}{0.8}
\pgfmathsetmacro{\d}{0.1}
\draw (0, 0) -- (\x, 0);
\draw (\x, 0) -- (2*\x, 0);
\draw[newblue] (\x, 0) -- (\x, 1.5*\y);
\draw[myorange] (\x, -1.5*\y) -- (\x, 0);
\draw[fill=blue, draw=black] (\x, 0) circle (0.1);
\draw[fill=orange, draw=black] (0.5*\x, 0) circle (0.1);
\draw[fill=orange, draw=black] (1.5*\x, 0) circle (0.1);
\node[right] at (\x, 1.25*\y) {$x$};
\node[below] at (\x, -1.5*\y) {$R_g$};
\node[below] at (0.2*\x, 0-0.1) {$x_i$};
\node[below] at (1.8*\x, 0-0.1) {$x_i^{\prime}$};
\node[above right] at (\x, 0+0.1) {$\mathbb{C}^{n_g a_x}$};
\node[above] at (0.5*\x, 0+0.1) {$\mathbb{C}^{n_h a_{x_i}}$};
\node[above] at (1.7*\x, 0+0.1) {$\mathbb{C}^{a_{x_i^{\prime}}}$};
\end{tikzpicture}
\;.
\label{eq: factorization 2}
\end{equation}
\Cref{eq: factorization 1} and \Cref{eq: factorization 2} imply the following decomposition of the Hilbert space for fixed labels: each edge has a Hilbert space of dimension $n_h a_{x_i}^2 = \dim(x_i)^2$, each vertex except for the one on the defect has a Hilbert space of dimension $\dim(\mathcal{C}_0)$, and the vertex at the defect locus has a Hilbert space of dimension $n_g a_x$.
By taking the direct sum over all labels in $\irr(\mathcal{C}_h)$, the dimension of the Hilbert space on each edge becomes $\sum_{x_i \in \irr(\mathcal{C}_h)} \dim(x_i)^2 = \dim(\mathcal{C}_h) = \dim(\mathcal{C}_0)$.
Now, we define a site by the pair of a vertex and the adjacent edge on its left.
If a site contains a vertex at the defect locus, the on-site dimension is $n_g a_x \dim(\mathcal{C}_0)$.
Otherwise, the on-site dimension is $\dim(\mathcal{C}_0)^2$.
Thus, the $h$-graded component of $\mathcal{H}_{\widetilde{\mathsf{D}}_x}^l$ can be written as
\begin{equation}
(\mathcal{H}_{\widetilde{\mathsf{D}}_x}^l)_h \cong \mathbb{C}^{n_g a_x \dim(\mathcal{C}_0)} \otimes \bigotimes_{i = 1}^{L-1} \mathbb{C}^{\dim(\mathcal{C}_0)^2}.
\label{eq: graded defect Hilb l}
\end{equation}
Here, we recall that the number of vertical legs in \Cref{eq: defect Hilb l2} is $L-1$.
A similar computation shows that the $h$-graded component of $\mathcal{H}_{\widetilde{\mathsf{D}}_x}^r$ is given by
\begin{equation}
(\mathcal{H}_{\widetilde{\mathsf{D}}_x}^r)_h \cong \mathbb{C}^{a_x \dim(\mathcal{C}_0)} \otimes \bigotimes_{i = 1}^{L} \mathbb{C}^{\dim(\mathcal{C}_0)^2}.
\label{eq: graded defect Hilb r}
\end{equation}
We note that the Hilbert spaces \eqref{eq: graded defect Hilb l} and \eqref{eq: graded defect Hilb r} do not depend on the grading $h \in E$ up to isomorphism.
In particular, the trivially graded components are given by
\begin{equation}
\mathcal{H}_{\mathsf{D}_x}^l \cong \mathbb{C}^{n_g a_x \dim(\mathcal{C}_0)} \otimes \bigotimes_{i = 1}^{L-1} \mathbb{C}^{\dim(\mathcal{C}_0)^2}, \qquad
\mathcal{H}_{\mathsf{D}_x}^r \cong \mathbb{C}^{a_x \dim(\mathcal{C}_0)} \otimes \bigotimes_{i = 1}^{L} \mathbb{C}^{\dim(\mathcal{C}_0)^2}.
\label{eq: defect Hilb tens prod}
\end{equation}

\subsection{Indices}
\noindent{\bf Index of symmetry operator $\mathsf{D}_x$.}
Based on the above definition of the defect Hilbert spaces, we can now compute the indices of the symmetry operators.
The tensor product decomposition in \Cref{eq: defect Hilb tens prod} implies that the left and right dimensions of $\mathsf{D}_x$ for $x \in \mathcal{C}_g$ are given by
\begin{equation}
\ldim(\mathsf{D}_x) = \frac{n_g a_x}{\dim(\mathcal{C}_0)} = \frac{\dim(x)}{\dim(R_g)}, \qquad 
\rdim(\mathsf{D}_x) = a_x \dim(\mathcal{C}_0) = \dim(x) \dim(R_g).
\end{equation}
Therefore, the index of $\mathsf{D}_x$ can be computed as
\begin{equation}
\ind(\mathsf{D}_x) = \sqrt{\frac{\ldim(\mathsf{D}_x)}{\rdim(\mathsf{D}_x)}} = \dim(R_g)^{-1} = \frac{\sqrt{n_g}}{\dim(\mathcal{C}_0)}.
\label{eq: ind Dx}
\end{equation}
In particular, the irrational part of $\ind(\mathsf{D}_x)$ is $\sqrt{n_g}$, as predicted in \Cref{thm_main}.

The index of $\mathsf{D}_x$ computed above depends only on the grading of $x \in \mathcal{C}_g$.
This implies that the index is homogeneous.
Indeed, for the lattice-level fusion rules~\eqref{eq: lattice fusion rules}, all fusion channels $U_{(g, h)} \mathsf{D}_z$ with $z \in \irr(\mathcal{C}_{gh})$ have the same index because the indices of both $U_{(g, h)}$ and $\mathsf{D}_z$ depend only on the grading.
Here, we recall $U_{(g, h)} \coloneq U_g U_h U_{gh}^{-1}$.
In what follows, we will explicitly compute $\ind(U_{(g, h)})$ and show that the index satisfies
\begin{equation}
\ind(\mathsf{D}_x) \ind(\mathsf{D}_y) = \ind(U_{(g, h)}) \ind(\mathsf{D}_z)
\label{eq: homogeneity}
\end{equation}
for all fusion channels.

\vspace*{\baselineskip}
\noindent{\bf Index of QCA $U_{(g, h)}$.}
To compute the index of $U_{(g, h)}$, we first define the left and right defect Hilbert spaces for $U_{(g, h)}$.
Following the definition of the $U_g$-defect discussed above, we define the left defect Hilbert space $\mathcal{H}^l_{U_{(g, h)}}$ by the vector space of morphisms represented by the following fusion diagrams:
\begin{equation}
\begin{tikzpicture}[scale = 0.98, very thick, baseline=(current bounding box.center)]
\pgfmathsetmacro{\x}{2}
\pgfmathsetmacro{\y}{6}
\pgfmathsetmacro{\d}{0.125}
\draw[->-] (0.5*\x, 0) -- (\x, 0);
\draw[->-] (\x, 0) -- (2*\x, 0);
\draw[->-] (2*\x, 0) -- (2.5*\x, 0);
\draw[->-] (2.5*\x, 0) -- (3*\x, 0);
\draw[->-] (3*\x, 0) -- (3.5*\x, 0);
\draw[->-] (3.5*\x, 0) -- (4*\x, 0);
\draw[->-] (4*\x, 0) -- (5*\x, 0);
\draw[->-] (5*\x, 0) -- (6*\x, 0);
\draw[->-] (6*\x, 0) -- (7*\x, 0);
\draw[->-] (7*\x, 0) -- (7.5*\x, 0);
\draw[->-] (\x, -0.7*\y) -- (\x, 0);
\draw[->-] (2*\x, -0.7*\y) -- (2*\x, 0);
\draw[] (4*\x, -0.7*\y) -- (4*\x, 0);
\draw[->-] (5*\x, -0.7*\y) -- (5*\x, 0);
\draw[->-] (6*\x, -0.7*\y) -- (6*\x, 0);
\draw[->-] (7*\x, -0.7*\y) -- (7*\x, 0);
\draw[newblue] (2.5*\x, -0.7*\y) -- (2.5*\x, -0.4*\y) -- (4.5*\x, -0.4*\y) -- (4.5*\x, -0.3*\y) -- (2.5*\x, -0.3*\y) -- (2.5*\x, 0);
\draw[newblue, ->-] (2.5*\x, -0.4*\y) -- (2.5*\x, -0.7*\y);
\draw[newblue, ->-] (4*\x, -0.4*\y) -- (2.5*\x, -0.4*\y);
\draw[newblue, ->-] (4.5*\x, -0.4*\y) -- (4*\x, -0.4*\y);
\draw[newblue, ->-] (4*\x, -0.3*\y) -- (4.5*\x, -0.3*\y);
\draw[newblue, ->-] (2.5*\x, -0.3*\y) -- (4*\x, -0.3*\y);
\draw[newblue, ->-] (2.5*\x, 0) -- (2.5*\x, -0.3*\y);
\draw[newblue] (3*\x, -0.7*\y) -- (3*\x, -0.5*\y) -- (5.5*\x, -0.5*\y) -- (5.5*\x, -0.2*\y) -- (3*\x, -0.2*\y) -- (3*\x, 0);
\draw[newblue, ->-] (3*\x, 0) -- (3*\x, -0.2*\y);
\draw[newblue, ->-] (3*\x, -0.2*\y) -- (4*\x, -0.2*\y);
\draw[newblue, ->-] (4*\x, -0.2*\y) -- (5*\x, -0.2*\y);
\draw[newblue, ->-] (5*\x, -0.2*\y) -- (5.5*\x, -0.2*\y);
\draw[newblue, ->-] (5.5*\x, -0.5*\y) -- (5*\x, -0.5*\y);
\draw[newblue, ->-] (5*\x, -0.5*\y) -- (4*\x, -0.5*\y);
\draw[newblue, ->-] (4*\x, -0.5*\y) -- (3*\x, -0.5*\y);
\draw[newblue, ->-] (3*\x, -0.55*\y) -- (3*\x, -0.7*\y);
\draw[newblue] (3.5*\x, -0.7*\y) -- (3.5*\x, -0.6*\y) -- (6.5*\x, -0.6*\y) -- (6.5*\x, -0.1*\y) -- (3.5*\x, -0.1*\y) -- (3.5*\x, 0);
\draw[newblue, ->-] (3.5*\x, -0.1*\y) -- (3.5*\x, 0);
\draw[newblue, ->-] (4*\x, -0.1*\y) -- (3.5*\x, -0.1*\y);
\draw[newblue, ->-] (5*\x, -0.1*\y) -- (4*\x, -0.1*\y);
\draw[newblue, ->-] (6*\x, -0.1*\y) -- (5*\x, -0.1*\y);
\draw[newblue, ->-] (6.5*\x, -0.1*\y) -- (6*\x, -0.1*\y);
\draw[newblue, ->-] (6*\x, -0.6*\y) -- (6.5*\x, -0.6*\y);
\draw[newblue, ->-] (5*\x, -0.6*\y) -- (6*\x, -0.6*\y);
\draw[newblue, ->-] (4*\x, -0.6*\y) -- (5*\x, -0.6*\y);
\draw[newblue, ->-] (3.5*\x, -0.6*\y) -- (4*\x, -0.6*\y);
\draw[newblue, ->-] (3.5*\x, -0.7*\y) -- (3.5*\x, -0.6*\y);
\draw[fill=black] (\x, 0) circle (0.1);
\draw[fill=black] (2*\x, 0) circle (0.1);
\draw[fill=black] (2.5*\x, 0) circle (0.1);
\draw[fill=black] (3*\x, 0) circle (0.1);
\draw[fill=black] (3.5*\x, 0) circle (0.1);
\draw[fill=black] (4*\x, 0) circle (0.1);
\draw[fill=black] (5*\x, 0) circle (0.1);
\draw[fill=black] (6*\x, 0) circle (0.1);
\draw[fill=black] (7*\x, 0) circle (0.1);
\draw[fill=white] (4*\x-\d, -0.1*\y-\d) rectangle (4*\x+\d, -0.1*\y+\d);
\draw[fill=black] (4*\x-\d, -0.2*\y-\d) rectangle (4*\x+\d, -0.2*\y+\d);
\draw[fill=black] (4*\x-\d, -0.3*\y-\d) rectangle (4*\x+\d, -0.3*\y+\d);
\draw[fill=white] (4*\x-\d, -0.4*\y-\d) rectangle (4*\x+\d, -0.4*\y+\d);
\draw[fill=white] (4*\x-\d, -0.5*\y-\d) rectangle (4*\x+\d, -0.5*\y+\d);
\draw[fill=black] (4*\x-\d, -0.6*\y-\d) rectangle (4*\x+\d, -0.6*\y+\d);
\draw[fill=white] (5*\x-\d, -0.1*\y-\d) rectangle (5*\x+\d, -0.1*\y+\d);
\draw[fill=black] (5*\x-\d, -0.2*\y-\d) rectangle (5*\x+\d, -0.2*\y+\d);
\draw[fill=white] (5*\x-\d, -0.5*\y-\d) rectangle (5*\x+\d, -0.5*\y+\d);
\draw[fill=black] (5*\x-\d, -0.6*\y-\d) rectangle (5*\x+\d, -0.6*\y+\d);
\draw[fill=white] (6*\x-\d, -0.1*\y-\d) rectangle (6*\x+\d, -0.1*\y+\d);
\draw[fill=black] (6*\x-\d, -0.6*\y-\d) rectangle (6*\x+\d, -0.6*\y+\d);
\node[below] at (\x, -0.7*\y) {$\mathrm{dim}(\mathcal{C}_0)R_0$};
\node[below] at (2*\x, -0.7*\y) {$\mathrm{dim}(\mathcal{C}_0)R_0$};
\node[below] at (4*\x, -0.7*\y) {$\mathrm{dim}(\mathcal{C}_0)R_0$};
\node[below] at (5*\x, -0.7*\y) {$\mathrm{dim}(\mathcal{C}_0)R_0$};
\node[below] at (6*\x, -0.7*\y) {$\mathrm{dim}(\mathcal{C}_0)R_0$};
\node[below] at (7*\x, -0.7*\y) {$\mathrm{dim}(\mathcal{C}_0)R_0$};
\node[above left] at (2.5*\x, -0.55*\y) {$R_g$};
\node[above left] at (3*\x, -0.6*\y) {$R_h$};
\node[above left] at (3.5*\x, -0.65*\y) {$R_{gh}$};
\end{tikzpicture}
\;.
\label{eq: U defect Hilb l}
\end{equation}
Here, the horizontal edges at the top are labeled by objects of $\mathcal{C}_0$,\footnote{More precisely, the two horizontal edges between the defects are labeled by objects of $\mathcal{C}_g$ and $\mathcal{C}_{gh}$. All other horizontal edges are labeled by objects of $\mathcal{C}_0$.} and the black and white squares are the local pieces of the sequential circuits defined as in \Cref{eq: 4-valent junction mor}.
The order of defects $R_g$, $R_h$, and $R_{gh}$ is chosen so that the product of local unitaries that move the defects to the right acts in the same way as $U_{(g, h)}$.
Similarly, the right defect Hilbert space $\mathcal{H}^r_{U_{(g, h)}}$ is defined by the vector space of morphisms represented by the fusion diagrams
\begin{equation}
\begin{tikzpicture}[scale = 0.98, very thick, baseline=(current bounding box.center)]
\pgfmathsetmacro{\x}{2}
\pgfmathsetmacro{\y}{6}
\pgfmathsetmacro{\d}{0.125}
\draw[->-] (0.5*\x, 0) -- (\x, 0);
\draw[->-] (\x, 0) -- (2*\x, 0);
\draw[->-] (2*\x, 0) -- (2.5*\x, 0);
\draw[->-] (2.5*\x, 0) -- (3*\x, 0);
\draw[->-] (3*\x, 0) -- (3.5*\x, 0);
\draw[->-] (3.5*\x, 0) -- (4*\x, 0);
\draw[->-] (4*\x, 0) -- (5*\x, 0);
\draw[->-] (5*\x, 0) -- (6*\x, 0);
\draw[->-] (6*\x, 0) -- (7*\x, 0);
\draw[->-] (7*\x, 0) -- (7.5*\x, 0);
\draw[->-] (\x, -0.7*\y) -- (\x, 0);
\draw[->-] (2*\x, -0.7*\y) -- (2*\x, 0);
\draw[] (4*\x, -0.7*\y) -- (4*\x, 0);
\draw[->-] (5*\x, -0.7*\y) -- (5*\x, 0);
\draw[->-] (6*\x, -0.7*\y) -- (6*\x, 0);
\draw[->-] (7*\x, -0.7*\y) -- (7*\x, 0);
\draw[newblue] (2.5*\x, -0.7*\y) -- (2.5*\x, -0.4*\y) -- (4.5*\x, -0.4*\y) -- (4.5*\x, -0.3*\y) -- (2.5*\x, -0.3*\y) -- (2.5*\x, 0);
\draw[newblue, ->-] (2.5*\x, -0.4*\y) -- (2.5*\x, -0.7*\y);
\draw[newblue, ->-] (4*\x, -0.4*\y) -- (2.5*\x, -0.4*\y);
\draw[newblue, ->-] (4.5*\x, -0.4*\y) -- (4*\x, -0.4*\y);
\draw[newblue, ->-] (4*\x, -0.3*\y) -- (4.5*\x, -0.3*\y);
\draw[newblue, ->-] (2.5*\x, -0.3*\y) -- (4*\x, -0.3*\y);
\draw[newblue, ->-] (2.5*\x, 0) -- (2.5*\x, -0.3*\y);
\draw[newblue] (3*\x, -0.7*\y) -- (3*\x, -0.5*\y) -- (5.5*\x, -0.5*\y) -- (5.5*\x, -0.2*\y) -- (3*\x, -0.2*\y) -- (3*\x, 0);
\draw[newblue, ->-] (3*\x, -0.7*\y) -- (3*\x, -0.5*\y);
\draw[newblue, ->-] (3*\x, -0.5*\y) -- (4*\x, -0.5*\y);
\draw[newblue, ->-] (4*\x, -0.5*\y) -- (5*\x, -0.5*\y);
\draw[newblue, ->-] (5*\x, -0.5*\y) -- (5.5*\x, -0.5*\y);
\draw[newblue, ->-] (5.5*\x, -0.2*\y) -- (5*\x, -0.2*\y);
\draw[newblue, ->-] (5*\x, -0.2*\y) -- (4*\x, -0.2*\y);
\draw[newblue, ->-] (4*\x, -0.2*\y) -- (3*\x, -0.2*\y);
\draw[newblue, ->-] (3*\x, -0.2*\y) -- (3*\x, 0);
\draw[newblue] (3.5*\x, -0.7*\y) -- (3.5*\x, -0.6*\y) -- (6.5*\x, -0.6*\y) -- (6.5*\x, -0.1*\y) -- (3.5*\x, -0.1*\y) -- (3.5*\x, 0);
\draw[newblue, ->-] (3.5*\x, -0.1*\y) -- (3.5*\x, 0);
\draw[newblue, ->-] (4*\x, -0.1*\y) -- (3.5*\x, -0.1*\y);
\draw[newblue, ->-] (5*\x, -0.1*\y) -- (4*\x, -0.1*\y);
\draw[newblue, ->-] (6*\x, -0.1*\y) -- (5*\x, -0.1*\y);
\draw[newblue, ->-] (6.5*\x, -0.1*\y) -- (6*\x, -0.1*\y);
\draw[newblue, ->-] (6*\x, -0.6*\y) -- (6.5*\x, -0.6*\y);
\draw[newblue, ->-] (5*\x, -0.6*\y) -- (6*\x, -0.6*\y);
\draw[newblue, ->-] (4*\x, -0.6*\y) -- (5*\x, -0.6*\y);
\draw[newblue, ->-] (3.5*\x, -0.6*\y) -- (4*\x, -0.6*\y);
\draw[newblue, ->-] (3.5*\x, -0.7*\y) -- (3.5*\x, -0.6*\y);
\draw[fill=black] (\x, 0) circle (0.1);
\draw[fill=black] (2*\x, 0) circle (0.1);
\draw[fill=black] (2.5*\x, 0) circle (0.1);
\draw[fill=black] (3*\x, 0) circle (0.1);
\draw[fill=black] (3.5*\x, 0) circle (0.1);
\draw[fill=black] (4*\x, 0) circle (0.1);
\draw[fill=black] (5*\x, 0) circle (0.1);
\draw[fill=black] (6*\x, 0) circle (0.1);
\draw[fill=black] (7*\x, 0) circle (0.1);
\draw[fill=white] (4*\x-\d, -0.1*\y-\d) rectangle (4*\x+\d, -0.1*\y+\d);
\draw[fill=white] (4*\x-\d, -0.2*\y-\d) rectangle (4*\x+\d, -0.2*\y+\d);
\draw[fill=black] (4*\x-\d, -0.3*\y-\d) rectangle (4*\x+\d, -0.3*\y+\d);
\draw[fill=white] (4*\x-\d, -0.4*\y-\d) rectangle (4*\x+\d, -0.4*\y+\d);
\draw[fill=black] (4*\x-\d, -0.5*\y-\d) rectangle (4*\x+\d, -0.5*\y+\d);
\draw[fill=black] (4*\x-\d, -0.6*\y-\d) rectangle (4*\x+\d, -0.6*\y+\d);
\draw[fill=white] (5*\x-\d, -0.1*\y-\d) rectangle (5*\x+\d, -0.1*\y+\d);
\draw[fill=white] (5*\x-\d, -0.2*\y-\d) rectangle (5*\x+\d, -0.2*\y+\d);
\draw[fill=black] (5*\x-\d, -0.5*\y-\d) rectangle (5*\x+\d, -0.5*\y+\d);
\draw[fill=black] (5*\x-\d, -0.6*\y-\d) rectangle (5*\x+\d, -0.6*\y+\d);
\draw[fill=white] (6*\x-\d, -0.1*\y-\d) rectangle (6*\x+\d, -0.1*\y+\d);
\draw[fill=black] (6*\x-\d, -0.6*\y-\d) rectangle (6*\x+\d, -0.6*\y+\d);
\node[below] at (\x, -0.7*\y) {$\mathrm{dim}(\mathcal{C}_0)R_0$};
\node[below] at (2*\x, -0.7*\y) {$\mathrm{dim}(\mathcal{C}_0)R_0$};
\node[below] at (4*\x, -0.7*\y) {$\mathrm{dim}(\mathcal{C}_0)R_0$};
\node[below] at (5*\x, -0.7*\y) {$\mathrm{dim}(\mathcal{C}_0)R_0$};
\node[below] at (6*\x, -0.7*\y) {$\mathrm{dim}(\mathcal{C}_0)R_0$};
\node[below] at (7*\x, -0.7*\y) {$\mathrm{dim}(\mathcal{C}_0)R_0$};
\node[above left] at (2.5*\x, -0.55*\y) {$R_{gh}$};
\node[above left] at (3*\x, -0.6*\y) {$R_h$};
\node[above left] at (3.5*\x, -0.65*\y) {$R_g$};
\end{tikzpicture}
\;.
\label{eq: U defect Hilb r}
\end{equation}
By a similar computation to the case of a single defect, we find that $\mathcal{H}^l_{U_{(g, h)}}$ is isomorphic to the vector space of morphisms represented by the fusion diagrams
\begin{equation}
\begin{tikzpicture}[scale = 0.98, very thick, baseline=(current bounding box.center)]
\pgfmathsetmacro{\x}{2}
\pgfmathsetmacro{\y}{1.25}
\pgfmathsetmacro{\d}{0.125}
\draw[->-] (1.5*\x, 0) -- (2*\x, 0);
\draw[->-] (2*\x, 0) -- (3*\x, 0);
\draw[->-] (3*\x, 0) -- (4*\x, 0);
\draw[->-] (4*\x, 0) -- (5*\x, 0);
\draw[->-] (5*\x, 0) -- (5.5*\x, 0);
\draw[->-] (5.5*\x, 0) -- (6*\x, 0);
\draw[->-] (6*\x, 0) -- (7*\x, 0);
\draw[->-] (7*\x, 0) -- (7.5*\x, 0);
\draw[->-] (2*\x, -\y) -- (2*\x, 0);
\draw[->-] (3*\x, -\y) -- (3*\x, 0);
\draw[->-] (6*\x, -\y) -- (6*\x, 0);
\draw[->-] (7*\x, -\y) -- (7*\x, 0);
\draw[myorange, ->-] (4*\x, -\y) -- (4*\x, 0);
\draw[myorange, ->-] (5*\x, -\y) -- (5*\x, 0);
\draw[newblue, ->-] (5.5*\x, -\y) -- (5.5*\x, 0);
\draw[fill=black] (2*\x, 0) circle (0.1);
\draw[fill=black] (3*\x, 0) circle (0.1);
\draw[fill=black] (4*\x, 0) circle (0.1);
\draw[fill=black] (5*\x, 0) circle (0.1);
\draw[fill=black] (5.5*\x, 0) circle (0.1);
\draw[fill=black] (6*\x, 0) circle (0.1);
\draw[fill=black] (7*\x, 0) circle (0.1);
\node[below] at (2*\x, -\y) {$\mathrm{dim}(\mathcal{C}_0)R_0$};
\node[below] at (3*\x, -\y) {$\mathrm{dim}(\mathcal{C}_0)R_0$};
\node[below] at (6*\x, -\y) {$\mathrm{dim}(\mathcal{C}_0)R_0$};
\node[below] at (7*\x, -\y) {$\mathrm{dim}(\mathcal{C}_0)R_0$};
\node[below left] at (4*\x, -\y/2) {$n_g R_g$};
\node[below left] at (5*\x, -\y/2) {$n_h R_h$};
\node[below left] at (5.5*\x, -\y/2) {$R_{gh}$};
\end{tikzpicture}
\;.
\end{equation}
Here, the number of the vertical legs except for the defects is $L-2$.
Similarly, $\mathcal{H}^r_{U_{(g, h)}}$ is isomorphic to the vector space of morphisms represented by the fusion diagrams
\begin{equation}
\begin{tikzpicture}[scale = 0.98, very thick, baseline=(current bounding box.center)]
\pgfmathsetmacro{\x}{2}
\pgfmathsetmacro{\y}{1.25}
\pgfmathsetmacro{\d}{0.125}
\draw[->-] (0.5*\x, 0) -- (\x, 0);
\draw[->-] (\x, 0) -- (2*\x, 0);
\draw[->-] (2*\x, 0) -- (3*\x, 0);
\draw[->-] (3*\x, 0) -- (3.5*\x, 0);
\draw[->-] (3.5*\x, 0) -- (4*\x, 0);
\draw[->-] (4*\x, 0) -- (4.5*\x, 0);
\draw[->-] (4.5*\x, 0) -- (5.5*\x, 0);
\draw[->-] (5.5*\x, 0) -- (6.5*\x, 0);
\draw[->-] (6.5*\x, 0) -- (7*\x, 0);
\draw[->-] (\x, -\y) -- (\x, 0);
\draw[->-] (2*\x, -\y) -- (2*\x, 0);
\draw[->-] (4.5*\x, -\y) -- (4.5*\x, 0);
\draw[->-] (5.5*\x, -\y) -- (5.5*\x, 0);
\draw[->-] (6.5*\x, -\y) -- (6.5*\x, 0);
\draw[myorange, ->-] (3*\x, -\y) -- (3*\x, 0);
\draw[newblue, ->-] (3.5*\x, -\y) -- (3.5*\x, 0);
\draw[newblue, ->-] (4*\x, -\y) -- (4*\x, 0);
\draw[fill=black] (\x, 0) circle (0.1);
\draw[fill=black] (2*\x, 0) circle (0.1);
\draw[fill=black] (3*\x, 0) circle (0.1);
\draw[fill=black] (3.5*\x, 0) circle (0.1);
\draw[fill=black] (4*\x, 0) circle (0.1);
\draw[fill=black] (4.5*\x, 0) circle (0.1);
\draw[fill=black] (5.5*\x, 0) circle (0.1);
\draw[fill=black] (6.5*\x, 0) circle (0.1);
\node[below] at (\x, -\y) {$\mathrm{dim}(\mathcal{C}_0)R_0$};
\node[below] at (2*\x, -\y) {$\mathrm{dim}(\mathcal{C}_0)R_0$};
\node[below] at (4.5*\x, -\y) {$\mathrm{dim}(\mathcal{C}_0)R_0$};
\node[below] at (5.5*\x, -\y) {$\mathrm{dim}(\mathcal{C}_0)R_0$};
\node[below] at (6.5*\x, -\y) {$\mathrm{dim}(\mathcal{C}_0)R_0$};
\node[below left] at (3*\x, -\y/2) {$n_{gh} R_{gh}$};
\node[below left] at (3.5*\x, -\y/2) {$R_h$};
\node[below left] at (4*\x, -\y/2) {$R_g$};
\end{tikzpicture}
\;,
\end{equation}
where the number of the vertical legs except for the defects is $L-1$.

From the above definition of the defect Hilbert spaces, we can compute the left and right dimensions of $U_{(g, h)}$ as
\begin{align}
\ldim(U_{(g, h)}) &= \frac{\dim(n_g R_g)\dim(n_h R_h)\dim(R_{gh})}{\dim(\mathcal{C}_0)^2\dim(R_0)^2} = \frac{\dim(R_{gh})}{\dim(R_g) \dim(R_h)}, \\
\rdim(U_{(g, h)}) &= \frac{\dim(n_{gh} R_{gh})\dim(R_g)\dim(R_h)}{\dim(\mathcal{C}_0) \dim(R_0)} = \frac{\dim(R_g) \dim(R_h)}{\dim(R_{gh})}.
\end{align}
Thus, the index of $U_{(g, h)}$ is given by
\begin{equation}
\ind(U_{(g, h)}) = \frac{\dim(R_{gh})}{\dim(R_g) \dim(R_h)} = \frac{m_g m_h}{m_{gh}} \sqrt{\frac{n_g n_h}{n_{gh}}},
\label{eq: q}
\end{equation}
where $m_g \coloneq \dim(\mathcal{C}_0)^{-1}$ for all $g \in E$.
We note that this index agrees with our expectation in \Cref{eq: q expectation}.
Based on \Cref{eq: ind Dx} and \Cref{eq: q}, we can compute the product of indices as
\begin{equation}
\ind(U_{(g, h)}) \ind(\mathsf{D}_z) = \sqrt{n_g n_h} m_g m_h = \ind(\mathsf{D}_x) \ind(\mathsf{D}_y),
\end{equation}
which shows \Cref{eq: homogeneity}.

\subsection{Relation to the QCA index of Jones and Lim}\label{sec_relation_to_gen_translation}
In \cite{Jones:2023imy}, Jones and Lim defined an index of QCAs on the anyon chain using operator algebras.
As we will see below, our index of non-invertible symmetry operator $\mathsf{D}_x$ for $x \in \mathcal{C}_g$ is related to the QCA index of Jones and Lim by
\begin{equation}
\ind(\mathsf{D}_x) = \ind_{\mathrm{JL}} (U_g),
\label{eq: JL ind}
\end{equation}
where $\ind_{\mathrm{JL}}$ denotes the index defined in \cite{Jones:2023imy} and $U_g$ is the sequential circuit in \Cref{fig:sequentialc_2}.
Furthermore, since $\ind_{\mathrm{JL}}$ is multiplicative \cite{Jones:2023imy}, the composite operator $U_{(g, h)} = U_g U_h U_{gh}^{-1}$ has the index
\begin{equation}
\ind_{\mathrm{JL}}(U_{(g, h)}) = \frac{\dim(R_{gh})}{\dim(R_g) \dim(R_h)} = \ind(U_{(g, h)}),
\end{equation}
which shows that $\ind_{\mathrm{JL}}$ agrees with our index when evaluated on a QCA $U_{(g, h)}$.

In the remaining of this subsection, we show \Cref{eq: JL ind}.
To this end, we first show that $U_g$ is an example of a generalized translation defined in \cite{Jones:2023imy}.
We can then apply the index formula in \cite[Proposition 4.1]{Jones:2023imy} to compute the index of $U_g$.

\vspace*{\baselineskip}
\noindent{\bf Generalized translation.}
Let us first recall the definition of generalized translations.
We consider an anyon chain based on a fusion category $\mathcal{C}$ and an input object $\rho \in \mathcal{C}_0$.\footnote{We do not specify the choice of a $\mathcal{C}$-module category because it is not relevant to the following discussions.}
Here, $\mathcal{C}_0$ is a fusion subcategory of $\mathcal{C}$.
For our purposes, it suffices to take $\mathcal{C}$ to be a weakly integral fusion category and $\mathcal{C}_0$ to be its integral subcategory.
We suppose that there exists a pair of objects $y, z \in \mathcal{C}$ such that
\begin{equation}
\rho \cong y \otimes z \cong z \otimes y.
\end{equation}
We also choose an isomorphism
\begin{equation}
\sigma: y \otimes z \to z \otimes y.
\end{equation}
Using the above data, we can define a generalized translation $\alpha$ as a QCA on the algebra of local operators.
To define $\alpha$, we consider its action on local operators supported on a finite interval $I = [i_0, i_1]$ of length $m$.
Here, $i_0$ and $i_1$ are the left and right ends of $I$.
The generalized translation $\alpha$ maps local operators on $I$ to those supported on $I^{+1} = [i_0 -1, i_1 +1]$.
The action of $\alpha$ on a local operator $w \in \End(\rho^{\otimes m})$ supported on $I$ is defined by \cite{Jones:2023imy}
\begin{equation}
\alpha(w) = \id_{\rho} \otimes \id_y \otimes (\sigma^{\otimes m} \circ w \circ (\sigma^{-1})^{\otimes m}) \otimes \id_z.
\end{equation}
Here, as in the original paper \cite{Jones:2023imy}, we implicitly chose an isomorphism between $\rho$ and $y \otimes z$.
If we make this choice explicit, the above expression becomes
\begin{equation}
\alpha(w) = (\id_{\rho} \otimes (s^{-1})^{\otimes m+1}) \circ (\id_{\rho} \otimes \id_y \otimes (\sigma^{\otimes m} \circ s(w) \circ (\sigma^{-1})^{\otimes m}) \otimes \id_z) \circ (\id_{\rho} \otimes s^{\otimes m+1}),
\end{equation}
where $s: \rho \to y \otimes z$ is an isomorphism and $s(w) \coloneq s^{\otimes m} \circ w \circ (s^{-1})^{\otimes m}$.
One can also write the above $\alpha(w)$ as
\begin{equation}
\alpha(w) = (\id_{\rho} \otimes (s^{-1})^{\otimes m+1}) \circ (\id_{\rho} \otimes \id_y \otimes (t^{\otimes m} \circ w \circ (t^{-1})^{\otimes m}) \otimes \id_z) \circ (\id_{\rho} \otimes s^{\otimes m+1}),
\end{equation}
where $t: \rho \to z \otimes y$ is an isomorphism defined by $t \coloneq \sigma \circ s$.

To see the relation between generalized translation $\alpha$ and our sequential circuit $U_g$, it is helpful to represent $\alpha(w)$ diagrammatically as follows:
\begin{equation}
\alpha(w) =
\begin{tikzpicture}[scale = 1, very thick, baseline=(current bounding box.center)]
\pgfmathsetmacro{\x}{2}
\pgfmathsetmacro{\y}{0.6}
\pgfmathsetmacro{\d}{0.15}
\pgfmathsetmacro{\r}{0.1}
\draw[->-] (0, 0) -- (0, 7*\y);
\draw (\x, 0) -- (\x, \y);
\draw[newblue] (\x, \y) -- (\x, 2*\y);
\draw (\x, 2*\y) -- (\x, 5*\y);
\draw[newblue] (\x, 5*\y) -- (\x, 6*\y);
\draw (\x, 6*\y) -- (\x, 7*\y);
\draw (2*\x, 0) -- (2*\x, \y);
\draw[newblue] (2*\x, \y) -- (2*\x, 2*\y);
\draw (2*\x, 2*\y) -- (2*\x, 5*\y);
\draw[newblue] (2*\x, 5*\y) -- (2*\x, 6*\y);
\draw (2*\x, 6*\y) -- (2*\x, 7*\y);
\draw (3*\x, 0) -- (3*\x, \y);
\draw[newblue] (3*\x, \y) -- (3*\x, 2*\y);
\draw (3*\x, 2*\y) -- (3*\x, 5*\y);
\draw[newblue] (3*\x, 5*\y) -- (3*\x, 6*\y);
\draw (3*\x, 6*\y) -- (3*\x, 7*\y);
\draw (4*\x, 0) -- (4*\x, \y);
\draw[newblue, ->-] (4*\x, \y) -- (4*\x, 6*\y);
\draw (4*\x, 6*\y) -- (4*\x, 7*\y);
\draw[myorange] (\x, \y) -- (0.5*\x, 1.5*\y) -- (0.5*\x, 5.5*\y) -- (\x, 6*\y);
\draw[myorange, ->-] (\x, \y) -- (0.5*\x, 1.5*\y); 
\draw[myorange, ->-] (0.5*\x, 5.5*\y) -- (\x, 6*\y); 
\draw[myorange, ->-] (2*\x, \y) -- (\x, 2*\y);
\draw[myorange, ->-] (3*\x, \y) -- (2*\x, 2*\y);
\draw[myorange, ->-] (4*\x, \y) -- (3*\x, 2*\y);
\draw[myorange, ->-] (\x, 5*\y) -- (2*\x, 6*\y);
\draw[myorange, ->-] (2*\x, 5*\y) -- (3*\x, 6*\y);
\draw[myorange, ->-] (3*\x, 5*\y) -- (4*\x, 6*\y);
\draw[fill = red] (\x, \y) circle (\r);
\draw[fill = red] (2*\x, \y) circle (\r);
\draw[fill = red] (3*\x, \y) circle (\r);
\draw[fill = red] (4*\x, \y) circle (\r);
\draw[fill = blue] (\x, 2*\y) circle (\r);
\draw[fill = blue] (2*\x, 2*\y) circle (\r);
\draw[fill = blue] (3*\x, 2*\y) circle (\r);
\draw[fill = blue] (\x, 5*\y) circle (\r);
\draw[fill = blue] (2*\x, 5*\y) circle (\r);
\draw[fill = blue] (3*\x, 5*\y) circle (\r);
\draw[fill = red] (\x, 6*\y) circle (\r);
\draw[fill = red] (2*\x, 6*\y) circle (\r);
\draw[fill = red] (3*\x, 6*\y) circle (\r);
\draw[fill = red] (4*\x, 6*\y) circle (\r);
\draw[fill=white] (\x-\d, 3*\y) rectangle (3*\x+\d, 4*\y);
\node[below right] at (\x+\r/2, \y) {$s$};
\node[below right] at (2*\x+\r/2, \y) {$s$};
\node[below right] at (3*\x+\r/2, \y) {$s$};
\node[below right] at (4*\x+\r/2, \y) {$s$};
\node[above left] at (\x-\r/2, 2*\y-0.1) {$t^{-1}$};
\node[above left] at (2*\x-\r/2, 2*\y-0.1) {$t^{-1}$};
\node[above left] at (3*\x-\r/2, 2*\y-0.1) {$t^{-1}$};
\node[below left] at (\x-\r/2, 5*\y) {$t$};
\node[below left] at (2*\x-\r/2, 5*\y) {$t$};
\node[below left] at (3*\x-\r/2, 5*\y) {$t$};
\node[above right] at (\x+\r/2, 6*\y) {$s^{-1}$};
\node[above right] at (2*\x+\r/2, 6*\y) {$s^{-1}$};
\node[above right] at (3*\x+\r/2, 6*\y) {$s^{-1}$};
\node[above right] at (4*\x+\r/2, 6*\y) {$s^{-1}$};
\node[below] at (0, 0) {$\rho$};
\node[below] at (\x, 0) {$\rho$};
\node[below] at (2*\x, 0) {$\rho$};
\node[below] at (3*\x, 0) {$\rho$};
\node[below] at (4*\x, 0) {$\rho$};
\node[above] at (0, 7*\y) {$\rho$};
\node[above] at (\x, 7*\y) {$\rho$};
\node[above] at (2*\x, 7*\y) {$\rho$};
\node[above] at (3*\x, 7*\y) {$\rho$};
\node[above] at (4*\x, 7*\y) {$\rho$};
\node[below] at (0.5*\x, 1.5*\y-0.1) {$y$};
\node[below] at (1.5*\x, 1.5*\y-0.1) {$y$};
\node[below] at (2.5*\x, 1.5*\y-0.1) {$y$};
\node[below] at (3.5*\x, 1.5*\y-0.1) {$y$};
\node[above] at (0.5*\x, 5.5*\y+0.1) {$y$};
\node[above] at (1.5*\x, 5.5*\y+0.1) {$y$};
\node[above] at (2.5*\x, 5.5*\y+0.1) {$y$};
\node[above] at (3.5*\x, 5.5*\y+0.1) {$y$};
\node[right] at (\x, 1.5*\y) {$z$};
\node[right] at (2*\x, 1.5*\y) {$z$};
\node[right] at (3*\x, 1.5*\y) {$z$};
\node[left] at (\x, 5.5*\y) {$z$};
\node[left] at (2*\x, 5.5*\y) {$z$};
\node[left] at (3*\x, 5.5*\y) {$z$};
\node[left] at (4*\x, 3.5*\y) {$z$};
\node at (2*\x, 3.5*\y) {$w$};
\end{tikzpicture}
\;.
\label{eq: alpha w}
\end{equation}
Here, we chose $m = 3$ for concreteness.
The above diagram implies
\begin{equation}
\alpha(w) = U \circ [\id \otimes w \otimes \id] \circ U^{-1},
\label{eq: unitary alpha}
\end{equation}
where $U$ is the following unitary operator acting on the entire chain:
\begin{equation}
U \coloneq
\begin{tikzpicture}[scale = 1, very thick, baseline=(current bounding box.center)]
\pgfmathsetmacro{\x}{2}
\pgfmathsetmacro{\y}{1}
\pgfmathsetmacro{\d}{0.15}
\pgfmathsetmacro{\r}{0.1}
\draw[->-] (0, 0) -- (0, \y);
\draw[newblue, ->-] (0, \y) -- (0, 2*\y);
\draw[->-] (0, 2*\y) -- (0, 3*\y);
\draw[->-] (\x, 0) -- (\x, \y);
\draw[newblue, ->-] (\x, \y) -- (\x, 2*\y);
\draw[->-] (\x, 2*\y) -- (\x, 3*\y);
\draw[->-] (2*\x, 0) -- (2*\x, \y);
\draw[newblue, ->-] (2*\x, \y) -- (2*\x, 2*\y);
\draw[->-] (2*\x, 2*\y) -- (2*\x, 3*\y);
\draw[->-] (3*\x, 0) -- (3*\x, \y);
\draw[newblue, ->-] (3*\x, \y) -- (3*\x, 2*\y);
\draw[->-] (3*\x, 2*\y) -- (3*\x, 3*\y);
\draw[->-] (4*\x, 0) -- (4*\x, \y);
\draw[newblue, ->-] (4*\x, \y) -- (4*\x, 2*\y);
\draw[->-] (4*\x, 2*\y) -- (4*\x, 3*\y);
\draw[myorange, ->-] (0, \y) -- (-\x, 2*\y); 
\draw[myorange, ->-] (\x, \y) -- (0, 2*\y); 
\draw[myorange, ->-] (2*\x, \y) -- (\x, 2*\y);
\draw[myorange, ->-] (3*\x, \y) -- (2*\x, 2*\y);
\draw[myorange, ->-] (4*\x, \y) -- (3*\x, 2*\y);
\draw[myorange, ->-] (5*\x, \y) -- (4*\x, 2*\y); 
\draw[fill = red] (0, \y) circle (\r);
\draw[fill = red] (\x, \y) circle (\r);
\draw[fill = red] (2*\x, \y) circle (\r);
\draw[fill = red] (3*\x, \y) circle (\r);
\draw[fill = red] (4*\x, \y) circle (\r);
\draw[fill = blue] (0, 2*\y) circle (\r);
\draw[fill = blue] (\x, 2*\y) circle (\r);
\draw[fill = blue] (2*\x, 2*\y) circle (\r);
\draw[fill = blue] (3*\x, 2*\y) circle (\r);
\draw[fill = blue] (4*\x, 2*\y) circle (\r);
\node[below right] at (0+\r/2, \y) {$s$};
\node[below right] at (\x+\r/2, \y) {$s$};
\node[below right] at (2*\x+\r/2, \y) {$s$};
\node[below right] at (3*\x+\r/2, \y) {$s$};
\node[below right] at (4*\x+\r/2, \y) {$s$};
\node[above left] at (0-\r/2, 2*\y-0.1) {$t^{-1}$};
\node[above left] at (\x-\r/2, 2*\y-0.1) {$t^{-1}$};
\node[above left] at (2*\x-\r/2, 2*\y-0.1) {$t^{-1}$};
\node[above left] at (3*\x-\r/2, 2*\y-0.1) {$t^{-1}$};
\node[above left] at (4*\x-\r/2, 2*\y-0.1) {$t^{-1}$};
\node[below] at (0, 0) {$\rho$};
\node[below] at (\x, 0) {$\rho$};
\node[below] at (2*\x, 0) {$\rho$};
\node[below] at (3*\x, 0) {$\rho$};
\node[below] at (4*\x, 0) {$\rho$};
\node[above] at (0, 3*\y) {$\rho$};
\node[above] at (\x, 3*\y) {$\rho$};
\node[above] at (2*\x, 3*\y) {$\rho$};
\node[above] at (3*\x, 3*\y) {$\rho$};
\node[above] at (4*\x, 3*\y) {$\rho$};
\node[below] at (-0.5*\x, 1.5*\y-0.1) {$y$};
\node[below] at (0.5*\x, 1.5*\y-0.1) {$y$};
\node[below] at (1.5*\x, 1.5*\y-0.1) {$y$};
\node[below] at (2.5*\x, 1.5*\y-0.1) {$y$};
\node[below] at (3.5*\x, 1.5*\y-0.1) {$y$};
\node[below] at (4.5*\x, 1.5*\y-0.1) {$y$};
\node[right] at (0, 1.5*\y) {$z$};
\node[right] at (\x, 1.5*\y) {$z$};
\node[right] at (2*\x, 1.5*\y) {$z$};
\node[right] at (3*\x, 1.5*\y) {$z$};
\node[right] at (4*\x, 1.5*\y) {$z$};
\end{tikzpicture}
\;.
\end{equation}
We note that $U^{-1}$ acts first and $U$ acts last in \Cref{eq: unitary alpha} because $\alpha(w)$ in \Cref{eq: alpha w} acts on the chain from below.
The unitary $U$ defined above is related to our sequential circuit $U_g$ as
\begin{equation}
U = T_+ U_g,
\label{eq: generalized translation unitary}
\end{equation}
where $T_+$ is a lattice translation by one site to the right, and we identified
\begin{equation}
\rho = \dim(\mathcal{C}_0) R_0, \quad
y = n_g R_g, \quad z = R_g, \quad
s  = \gamma_g, \quad t = \delta_g^{\dagger}.
\end{equation}
Equation~\eqref{eq: generalized translation unitary} shows that the sequential circuit $U_g$ is an example of a generalized translation.

\vspace*{\baselineskip}
\noindent{\bf Index of $U_g$.}
Now, we can compute $\ind_{\mathrm{JL}}(U_g)$ by applying the index formula in \cite[Proposition 4.1]{Jones:2023imy}.
Since $\ind_{\mathrm{JL}}$ is multiplicative \cite{Jones:2023imy}, \Cref{eq: generalized translation unitary} implies
\begin{equation}
\ind_{\mathrm{JL}}(U_g) = \frac{\ind_{\mathrm{JL}}(U)}{\ind_{\mathrm{JL}}(T_+)}.
\label{eq: ind JL Ug}
\end{equation}
Furtehrmore, due to \cite[Proposition 4.1]{Jones:2023imy}, the indices of $U$ and $T_+$ are given by
\begin{equation}
\ind_{\mathrm{JL}}(U) = \dim(n_g R_g), \qquad
\ind_{\mathrm{JL}}(T_+) = \dim(\rho) = \dim(n_g R_g) \dim(R_g).
\end{equation}
By plugging these into \Cref{eq: ind JL Ug}, we find
\begin{equation}
\ind_{\mathrm{JL}}(U_g) = \dim(R_g)^{-1}.
\end{equation}
In particular, due to \Cref{eq: ind Dx}, we have $\ind_{\mathrm{JL}}(U_g) = \ind(\mathsf{D}_x)$ for all $x \in \mathcal{C}_g$, which shows \Cref{eq: JL ind}.

\section{Tensor Product Realization of  General Tambara-Yamagami Categories}
\label{sec_eg_TY}

We now apply our general construction in \Cref{sec_lattice_model_weakly} to the class of general Tambara-Yamagami fusion categories. We calculate the explicit expressions of symmetry operators and the QCA refinement following \Cref{construction_special_case}. Tambara-Yamagami categories of the form $\TY(\Z_N)$ have been realized on a tensor-product Hilbert space~\cite{Cao:2024qjj}. Here we are not restricted to such cases and consider general Tambara-Yamagami categories. We start by reviewing the basic structures of general Tambara-Yamagami fusion categories.
\subsection{Review of Tambara-Yamagami Categories}

A Tambara--Yamagami category 
\cite{Tambara:1998vmj} is a $\Z_2$-graded fusion category $\C=\C_0\oplus \C_1$ with $\C_0=\vc_A,~\C_1=\vc$, where $A$ is a finite abelian group. We denote the simple objects in $\vc_A$ as $a\in A$ and the simple object in the nontrivial component as $m$. Then the fusion rules is  
\be 
a\otimes b=ab,~ a\otimes m=m\otimes a=m,~m\otimes m=\oplus_{a\in A} a.
\ee

A Tambara--Yamagami category of the above form is parameterized by (1). A symmetric non-degenerate bicharacter $\chi: A\times A\to U(1)$. (2). A Frobenius-Schur indicator $\epsilon=\pm$. We denote the corresponding fusion category as $\TY(A,\chi, \epsilon)$. The quantum dimensions of simples are
\begin{equation}
  \dim(a) = 1 \quad \forall\, a \in A,
  \qquad
  \dim(m) = \sqrt{|A|}.
\end{equation}

The global quantum dimension of $\TY(A,\chi,\epsilon)$ is thus
\begin{equation}
\dim\bigl(\TY(A,\chi,\epsilon)\bigr)
= \sum_{a \in A}\dim(a)^2 + \dim(m)^2
= 2|A|,
\end{equation}
which shows the fusion category $\TY(A,\chi,\epsilon)$ is weakly integral.

\paragraph{$F$-Symbols.}
The associativity constraints are largely trivial. The complete list of non-trivial $F$-symbols was obtained in \cite{Tambara:1998vmj}. The ones relevant for us are:
\begin{equation}\label{eq:F_amm}
  F^{mab}_m = 1,
  \qquad
  F^{amm}_b = 1,
\end{equation}
and 
\begin{equation}\label{eq_F_mmmm}
    (F^{mmm}_m)_{ab}=\frac{\epsilon}{\sqrt{|A|}} \chi(a,b).
\end{equation}

\begin{example}
    Take $A=\Z_N=\{0,1,\cdots, N-1\}$, then symmetric bicharacters are classified by $\Z_N$, for $k\in \Z_N$ the corresponding bicharacter $\chi_k$ takes the form 
    \begin{equation}
        \chi_k(a,b)=\exp\left(\frac{2\pi i k}{N} ab\right).
    \end{equation}
    This bicharacter is non-degenerate precisely when $\mathrm{gcd}(k,N)=1$. The $F$-symbol 
    \begin{equation}
        (F^{mmm}_m)_{ab}=\frac{\epsilon}{\sqrt{N}} \exp\left(\frac{2\pi i k}{N} ab\right)
    \end{equation}
    is nothing but the normalized discrete Fourier transformation, up to the sign $\epsilon$.
\end{example}

\subsection{Tensor-product Hilbert Space Realization}
We now provide an explicit QCA-refined tensor-product realization of a general Tambara-Yamagami category. If $|A|$ is a perfect square then  $\TY(A,\chi,\epsilon)$  is integral and admits a strict tensor-product realization. Hence we only consider the case where $|A|$ is not a perfect square.  Then the integral subcategory $\Vec_A$ has total quantum dimension $|A|$, and we will realize the symmetry $\TY(A, \chi, \epsilon)$ on a tensor-product Hilbert space with on-site dimension $|A|$.

For a generic abelian group $A$ we can not apply our \Cref{construction_special_case}: $a_m$ is the integer part of $d_m=\sqrt{|A|}$, which is not 1 unless $|A|$ is square-free. We will instead consider the ``flexible construction" in \Cref{rmk_flexible}. The construction will be a natural generalization of the Ising category example in \Cref{sec_ising_example}. For example, the sequential circuit $U_1$ will be similar to \Cref{fig_sequential_Ising} with $\sigma$ replaced by $m$. We start with the anyon chain Hilbert space  $\cH= \cH_0 \oplus \cH_1$ built from choosing 
\be
 \rho=R_0 =\bigoplus_{a \in A} a.
\ee
The subspace $\H_0$ has a basis similar to \Cref{fig_anyon_chain_Ising_triv} but with $x_i\in A$, and $\H_1$ has a basis similar to \Cref{fig_anyon_chain_Ising_nontriv}  with $ t_i \in A$. The basis vectors are denoted as 
\begin{equation}
    |\{x_i\}\>_0\in \H_0, \quad |\{t_i\}\>_1\in \H_1.
\end{equation}

The integral subcategory $\vc_A$ is represented on $\H_0$ directly by anyon fusion action as in \Cref{eq_anyon_chain_action}. We have 
\begin{align}
    \sf{D}_a|\{x_i\}\>_0=|\{ax_i\}\>_0.
\end{align}

The symmetry action for the noninvertible anyon $m$ is $\sf{D}_m=U_1\circ \cc{D}_m$. The  operator $\cc{D}_m: \H_0\to \H_1$ may be calculated in a similar fashion as in \Cref{fig:SymActionIsing}, where the relevant $F$-symbol is  $F^{mab}_m=1$, and maps $\cH_0 \to \cH_1$. We arrive at 
\begin{equation}
    \cc{D}_m|\cdots,x_i, x_{i+1}, x_{i+2},\cdots\>_0=|\cdots, x_i^{-1}x_{i+1},x_{i+1}^{-1}x_{i+2},x_{i+2}^{-1}x_{i+3},\cdots\>_1.
\end{equation}

The sequential circuit $U_1$ is defined similar to \Cref{fig_sequential_Ising} with $\sigma$ replaced by $m$. The action of the sequential circuit $U_1$ can be calculated similar to that in  \Cref{eq_ising_circuit_1}. The relevant $F$-symbols for this calculation are $F^{mmm}_{m}$ (\Cref{eq_F_mmmm}) and $F^{amm}_b=1$. The action is thus
\begin{equation}
    U_1|\{t_i\}\>_1=\left(\frac{\epsilon}{\sqrt{|A|}}\right)^L\sum_{\{x_i\}} \prod_i \chi(t_i,x_i)|\{x_i\}\>_0.
\end{equation}

The combined action is  therefore
\begin{equation}
   \boxed{ \sf{D}_m|\{x_i\}\>_0=U_1\circ \cc{D}_m|\{x_i\}\>_0=\left(\frac{\epsilon}{\sqrt{|A|}}\right)^L\sum_{\{x_i'\}} \prod_{i=1}^L \chi(x_i^{-1}x_{i+1},x_i')|\{x_i'\}\>_0.
   }
\end{equation}

We now compute $\sf{D}_m^2$ directly.
\begin{align}
    \sf{D}_m^2|\{x_i\}\>_0&=\left(\frac{1}{|A|}\right)^L \sum_{\{x_i',x_i''\}}\prod_i \chi(x_i^{-1}x_{i+1},x_i')\chi(x_i'^{-1}x_{i+1}',x_i'')|\{x_i''\}\>_0\\
    &=\left(\frac{1}{|A|}\right)^L \sum_{\{x_i',x_i''\}} \prod_i \chi(x_i^{-1}x_{i+1},x_i') \chi(x_i',x_i'')^{-1}\chi(x_i',x_{i-1}'')|\{x_i''\}\>_0.\\
    &=\left(\frac{1}{|A|}\right)^L \sum_{\{x_i',x_i''\}}\prod_{i}\chi(x_i^{-1}x_{i+1}x_i''^{-1}x_{i-1}'',x_i') |\{x_i''\}\>_0.
    \\
    &=\sum_{\{x_i''\}} \prod_i \delta_{x_i^{-1}x_{i+1}x_i''^{-1}x_{i-1}'',1} |\{x_i''\}\>_0, 
\end{align}
where in the last step we used $\sum_{a\in A}\chi(a,b)=|A|\delta_{b,1}$. Now observe that the delta symbols impose 
\begin{equation}
    x_i^{-1}x_{i-1}''=x_{i+1}^{-1}x_i''.
\end{equation}
Hence $y:=x_i^{-1}x_{i-1}''$ is independent of $i$. Therefore we have 
\begin{align}
     \sf{D}_m^2|\{x_i\}\>_0&=\sum_{y \in A} \sum_{\{x_i''\}} \prod_i \delta_{x_{i-1}'',x_iy}|\{x_i''\}\>_0\\
     &=\sum_{y\in A} |yx_2, yx_3,\cdots, yx_L,yx_1\>_0=T\sum_{y\in A} \sf{D}_y |\{x_i\}\>_0,
\end{align}
where $T$ is translation by one site to the left. We conclude 
\begin{equation}
 \sf{D}_m^2=T\sum_{y\in A}\sf{D}_y.\label{eq_TY_refined_fusion}
\end{equation}

Here $\ind(T)=|A|^{-1}$. We can obtain a canonical QCA-refined realization by redefining $\widetilde{\sf{D}}_m=\sf{D}_m T^{-1}$. We note that the QCA-refined fusion rules \Cref{eq_TY_refined_fusion}  can also be derived diagrammatically as in \Cref{fig_Ug_two_layers}.

\section*{Acknowledgments}
We thank Thomas Bartsch for discussions. 
RW and SSN are supported by the UKRI Frontier Research Grant, underwriting the ERC Advanced Grant ``Generalized Symmetries in Quantum Field Theory and Quantum Gravity”. 
KI and SSN are supported in part by the EPSRC Open Fellowship EP/X01276X/1 (Schafer-Nameki).
KI is also supported by the Leverhulme-Peierls Fellowship funded by the Leverhulme Trust.

\appendix 

\section{Conventions}
\label{app:Conventions}

All fusion categories are defined over $\bC$ and are assumed to be unitary  \cite{egno2015tensor}. For a fusion category $\C$, we write $x\in \C$ to mean $x$ is an object of $\C$ and $x\in \irr(\C)$ to mean $x$ is a simple object of $\C$. For $x\in \irr(\C)$ and $n\in \Z_{\ge0}$ we will write $nx$ to mean $n$-fold direct sum of $x$.  The quantum dimension of an object $x\in \C$ is denoted as $d_x$. For two objects $x, y \in \C$, we will sometimes abuse notation and write $x=y$ to mean that $x$ is isomorphic to $y$ with an isomorphism chosen. For a general object $X=\oplus_{a\in \irr(\C)}n_a a$ there are  projection and inclusion morphisms
\be
p_a^\mu:X\to a,~\ i_a^\mu: a\to X,~\ \mu=1,\cdots, n_a,
\ee
corresponding to the $\mu$-th copy of $a$ in $X$. The unitary structure on the fusion category is defined as $i^\mu_a=(p_a^\mu)^\dagger$, and the inclusions/projections are taken to be orthonormal: $\<i_a^\mu, i_a^\nu\>=\<p_a^\mu,p_a^\nu\>=\delta_{\mu,\nu}$. 
The concrete examples of fusion categories that we will consider in this paper are all multiplicity-free. We use projection/inclusion basis for fusion/splitting spaces: For simples $a,b,c \in \irr(\C)$, trivalent vertices below are understood as the unique projection/inclusion $p_c: a\otimes b\to c$ and $i_c: c\to a\otimes b$.
\begin{equation}
    \begin{split}
        \begin{tikzpicture}
            \begin{scope}[scale=0.6]
                \draw[very thick, ->-] (0,0) -- (1,1);
                \draw[very thick, ->-] (2,0) -- (1,1);
                \draw[very thick, ->-] (1,1) -- (1,2.414);
                \node[below left] at (0,0) {$a$};
                \node[below right] at (2,0) {$b$};
                \node[above] at (1,2.414) {$c$};
            \end{scope}
        \end{tikzpicture}
    \end{split},\qquad
    \begin{split}
        \begin{tikzpicture}
            \begin{scope}[scale=0.6]
                \draw[very thick, ->-] (1,0) -- (1,1.414);
                \draw[very thick, ->-] (1,1.414) -- (0,2.414);
                \draw[very thick, ->-] (1,1.414) -- (2,2.414);
                \node[above left] at (0,2.414) {$a$};
                \node[above right] at (2,2.414) {$b$};
                \node[below] at (1,0) {$c$};
            \end{scope}
        \end{tikzpicture}
    \end{split}
\end{equation}
Other types of morphisms will be defined explicitly and are represented as dots or boxes. The $F$-symbols follow the following convention:

\begin{equation}
    \begin{split}
        \begin{tikzpicture}
            \begin{scope}[scale=0.6]
                \draw[very thick, ->-] (0,0) -- (2,2);
                \draw[very thick, ->-] (2,2) -- (3,3);
                \draw[very thick, ->-] (2,0) -- (3,1);
                \draw[very thick, ->-] (3,1) -- (2,2);
                \draw[very thick, ->-] (4,0) -- (3,1);
                \node[below] at (0,0) {$a$};
                \node[below] at (2,0) {$b$};
                \node[below] at (4,0) {$c$};
                \node[above] at (3,1) {$e$};
                \node[above] at (3,3) {$d$};
            \end{scope}
        \end{tikzpicture}
    \end{split}
    =\sum_f (F^{abc}_d)_{ef} \begin{split}
        \begin{tikzpicture}
            \begin{scope}[scale=0.6]
                \draw[very thick, ->-] (0,0) -- (1,1);
                \draw[very thick, ->-] (1,1) -- (2,2);
                \draw[very thick, ->-] (2,2) -- (3,3);
                \draw[very thick, ->-] (2,0) -- (1,1);
                \draw[very thick, ->-] (4,0) -- (2,2);
                \node[below] at (0,0) {$a$};
                \node[below] at (2,0) {$b$};
                \node[below] at (4,0) {$c$};
                \node[above] at (1,1) {$f$};
                \node[above] at (3,3) {$d$};
            \end{scope}
        \end{tikzpicture}
    \end{split}
\end{equation}


\bibliographystyle{ytphys}
\small 
\baselineskip=.94\baselineskip
\let\bbb\bibitem\def\bibitem{\itemsep4pt\bbb}
\bibliography{QCA}

\end{document}